\title{\vspace{-3cm} Semiparametric proximal causal inference}
\author{Yifan Cui\thanks{Center for Data Science, Zhejiang University}, Hongming Pu \thanks{Department of Statistics and Data Science, The Wharton School, University of Pennsylvania}, Xu Shi\thanks{Department of Biostatistics, University of Michigan}, Wang Miao\thanks{Department of Probability and Statistics, Peking University}, Eric Tchetgen Tchetgen\footnotemark[2]}
\date{}
\newtheorem{remark}{Remark}
\newtheorem{assumption}{Assumption}
\newtheorem{theorem}{Theorem}[section]
\newcommand{\cX}{{\mathcal{X}}}
\newcommand{\logit}{{\text{logit}}}
\newcommand{\cM}{{\mathcal{M}}}
\newcommand{\PP}{{\mathbbm{P}}}
\newcommand{\cO}{{\mathcal{O}}}
\newcommand{\E}{\mathbb{E}}
\newcommand{\mM}{\mathcal{M}}
\newcommand{\yifan}[1]{\textcolor{black}{#1}\xspace}
\newcommand{\yf}[1]{\textcolor{black}{#1}\xspace}
\newcommand{\source}[1]{#1}
\begin{document}
\maketitle
\vspace{-1cm}
\abstract{
Skepticism about the assumption of no unmeasured confounding, also known as exchangeability, is often warranted in making causal inferences from observational data; because exchangeability hinges on an investigator's ability to accurately measure covariates that capture all potential sources of confounding.
In practice, the most one can hope for is that covariate measurements are
at best proxies of the true underlying confounding mechanism operating in a given observational
study. In this paper, we consider the framework of proximal causal inference introduced by \cite{miao2018identifying,tchetgen2020}, which while explicitly acknowledging covariate measurements as imperfect proxies of confounding mechanisms, offers an opportunity to learn about causal effects in settings where exchangeability on the basis of measured covariates fails. We make a number of contributions to proximal inference including (i) an alternative set of conditions for nonparametric proximal identification of the average treatment effect; (ii) general semiparametric theory for proximal estimation of the average treatment effect including efficiency bounds for key semiparametric models of interest; (iii) a characterization of proximal doubly robust and locally efficient estimators of the average treatment effect. Moreover, we provide analogous identification and efficiency results for the average treatment effect on the treated. 
Our approach is illustrated via simulation studies and a data application on evaluating the effectiveness of right heart catheterization in the intensive care unit of critically ill patients.}

\vspace{0.1cm}

\noindent {\bf keywords:}
Double robustness, Efficient influence function, Identification, Proximal causal inference, Semiparametric theory, Unmeasured confounding  

\section{Introduction}\label{sec:intro}

A common assumption for causal inference from observational data is that of no unmeasured confounding, which states that one has measured a
sufficiently rich set of covariates to ensure that within covariate strata, subjects are exchangeable
across observed treatment values.  Skepticism about such exchangeability assumption in observational studies is often warranted because it essentially requires investigators to accurately measure
covariates capturing all potential sources of confounding. 
In practice, confounding mechanisms
can rarely be learned with certainty from measured covariates. One may therefore only hope that covariate measurements are at best proxies of true underlying confounders.

There is a growing literature on formal causal inference methods that leverage certain types of proxies known as negative control variables to mitigate confounding bias in analysis of observational data \citep{lipsitch2010negative,kuroki2014measurement,miao2018identifying,Shi2019MultiplyRC,shi2020selective}. Existing negative control methods rely for point identification of causal effects on fairly restrictive assumptions such as linear models for the outcome and the unmeasured confounder \citep{flanders2011method,GagnonBartsch2012UsingCG,Flanders2015ANM,wang2017}, rank preservation \citep{tchetgen2013}, monotonicity \citep{sofer2016negative}, or categorical unmeasured confounders \citep{Shi2019MultiplyRC}. \cite{miao2018identifying} stands out in this literature as they formally establish sufficient conditions for nonparametric identification of causal effects using a pair of treatment and outcome proxies in the point treatment setting.

Building on  \cite{miao2018identifying}, \cite{tchetgen2020} recently introduced a potential outcome framework for proximal causal inference, which offers an opportunity to learn about causal effects
in point treatment and time-varying treatment settings where exchangeability on the basis of measured covariates fails. 
Proximal causal inference essentially
requires that the analyst can correctly classify a subset of measured covariates into three bucket types: 1) variables that may be common causes of the treatment and outcome variables; 2) treatment-inducing confounding
proxies versus; 3) outcome-inducing confounding proxies. A proxy of type 2) is a potential cause of
the treatment which is related with the outcome only through an unmeasured common cause for which the variable is a proxy; while a proxy of type 3) is a potential cause of the outcome which
is related with the treatment only through an unmeasured common cause for which the variable
is a proxy. Proxies that are associated with an unmeasured confounder but that are neither causes of treatment or outcome variables can belong to either
bucket type 2) or 3). 

Examples of proxies of type 2) and 3) abound in
observational studies. For instance, in an observational study evaluating the
effects of a treatment on disease progression, one is typically concerned that
patients either self-select or are selected by their physician to take the
treatment based on prognostic factors for the outcome; therefore there may be
two distinct processes contributing to a subject's propensity to be treated.
In an effort to account for these sources of confounding, a diligent
investigator would endeavor to record biomarker lab measurements and other clinically
relevant covariate data. Lab measurements of biomarkers are well-known to be error prone and therefore to at best serve as proxies of patients' underlying biological mechanisms at the source of confounding. Even when such measurements are available to the physician for treatment decision making, they seldom constitute a cause of disease progression (e.g., CD4 count or viral load in HIV care), but may be strongly associated with the latter to the extent that they share an unmeasured common cause (e.g., CD4 count is a proxy of underlying immune system status), and therefore may be viewed as proxies of type 2). As discussed in \cite{shi2020selective}, pre-treatment variables that satisfy the three core instrumental variable (IV) conditions (IV relevance, IV independence and exclusion restriction) constitute valid proxies of type 2); and in fact remain valid proxies even if IV independence assumption is violated  \citep{shi2020selective}.  A prominent proxy of type 3) often entails a baseline measurement of the outcome process, the basis of which serves as justification of the widely used difference-in-differences approach to account for confounding bias under no interaction assumptions or monotonicity conditions \citep{sofer2016negative}. Lifestyle choices such as exercising, alcohol use, smoking behavior, nutritional habits, and other measurements of health seeking behaviors or self-reported health status are routinely collected via questionnaires with the implicit understanding that although such measurements are often well-validated instruments, they should be viewed as proxies of the latent factors inducing confounding in causal queries about potential public health or public policy interventions on health and related outcomes. Extensive discussion of proxies encountered in health and social sciences can be found in \cite{lipsitch2010negative,kuroki2014measurement,miao2018identifying,sofer2016negative,Shi2019MultiplyRC,shi2020selective}; the proposed proximal causal inference framework is therefore a unifying framework for identification and inference leveraging the various types of proxies that have appeared in prior literature.

Other prominent examples of treatment and outcome proxies are negative control treatment and outcome variables routinely used in recent literature on the protective effectiveness of COVID-19 vaccination in real world settings \citep{patel2020postlicensure,dagan2021bnt162b2,thompson2021effectiveness,olson2022effectiveness,li2022double}. For example, in \cite{li2022double}, immunization visits before December 2020, when the COVID-19 vaccine became available, were used as treatment confounding proxies and the following diagnoses after April 5, 2021 were used as outcome confounding proxies: arm/leg cellulitis, eye/ear disorder, gastroesophageal disease, atopic dermatitis, and injuries. Aside for these proxies, there may also be factors that
can accurately be described as true common causes of treatment and outcome
processes; these variables collected in bucket type 1) may in fact
include age, gender, race or ethnicity, and years of education depending on the context. Thus, rather than assuming that exchangeability can be attained by adjusting for measured
covariates, our proposed proximal framework requires that the investigator can correctly select proxies of types 2) and 3) relative to a latent factor (possibly multivariate) that would in principle suffice to account for confounding; this condition is formalized using the potential outcomes framework in the following section.               
In terms of proximal estimation and inference, \cite{tchetgen2020} focus primarily on so-called proximal g-computation, a generalization of Robins' g-computation algorithm which may be viewed essentially as maximum likelihood estimation, requiring a correctly specified model for the entire data generating mechanism; in case of linear models, they proposed a proximal recursive two-stage least squares algorithm for point and time-varying treatments which is somewhat more robust provided the specified linear models are correct.

In this paper, we develop a general semiparametric framework for proximal causal inference about the population average treatment effect (ATE) and the average treatment effect on the treated (ATT) using proxies of types 2) and 3) while accounting for a possibly large number of observed covariates in the point treatment setting. In addition, we establish an alternative condition to that of \cite{miao2018identifying}, \cite{Shi2019MultiplyRC} and \cite{tchetgen2020} for nonparametric proximal identification of the average treatment effect in case of a point intervention;
and we subsequently characterize the semiparametric efficiency bound for the identifying functional of the ATE under two key semiparametric models that place different restrictions on the observed data distribution, as well as under a nonparametric model in which both sets of restrictions are relaxed. 
We then propose a class of doubly robust locally efficient estimators of the average treatment effect that are consistent provided one of two aforementioned models restricting the observed data distribution is correct, but not necessarily both. The proposed estimators are locally efficient in the sense that when all working models are correctly specified (i.e., the intersection submodel), our estimators achieve the semiparametric efficiency bound for estimating the average treatment effect under the union model. Analogous results are obtained for the ATT in the Appendix~\ref{sec:G}. 

The remainder of the article is organized as followed. In Section~\ref{sec:identification},  we briefly review proximal identification results of ATE \citep{miao2018identifying,Shi2019MultiplyRC,tchetgen2020}  before introducing an alternative condition for nonparametric identification of ATE (and ATT).  In Section~\ref{sec:semi}, we develop semiparametric theory for proximal causal inference \citep{tchetgen2020} and describe construction of doubly robust and semiparametric locally efficient estimators.
In Section~\ref{sec:connection}, we draw parallels between the proposed doubly robust proximal estimators and standard augmented inverse-probability-weighted estimators developed by Robins and colleagues  under no unmeasured confounding \citep{Scharfstein1999}; we establish that the former may be viewed as a generalization of the latter allowing for unmeasured confounding under our identifying assumptions.
Simulation studies are presented in Section~\ref{sec:numeric}.
Section~\ref{sec:real} describes a real data application on evaluating the effectiveness of right heart catheterization in the intensive care unit of critically ill patients.
The article concludes with a discussion of future work in Section~\ref{sec:discussion}. Proofs and additional results are provided in the Appendix.

\section{Nonparametric proximal identification of the average treatment effect}\label{sec:identification}

\subsection{Background}

We consider estimating the effect of a binary treatment $A$ on an outcome $Y$ subject to potential unmeasured confounding. Throughout, we let $U$ denote the unmeasured confounder (possibly vector-valued) which may be discrete, or continuous, or include both types of variables. Let $Y(a)$, $a = 0, 1$ denote the potential outcome that would be observed if the treatment were set to $a$. We are interested in
the population average treatment effect defined as $\psi = \E[Y (1) - Y (0)]$. We assume that the following consistency assumption holds: 
\begin{assumption}\emph{(Consistency)}\label{asm:consistency}
$Y = Y(A)$ almost surely.
\end{assumption}
Moreover, suppose that one has measured covariates $L$ such that:
\begin{assumption}\emph{(Positivity)}\label{asm:positivity}
$0<\Pr(A=a|L)<1$ almost surely, $a=0,1$.
\end{assumption}

 A common identification strategy in observational studies invokes exchangeability condition on the basis of measured covariates.
\begin{assumption}\emph{(Exchangeability)}\label{asm:exchangeability}
$Y(a)\perp A|L$ for $a=0,1$. 
\end{assumption}
Assumption~\ref{asm:exchangeability} is sometimes interpreted as stating that $L$ includes all common
causes of $A$ and $Y;$ an assumption represented in causal directed acyclic
graph (DAG) in Figure~\ref{fig:sra dag0}(a), in which \yifan{case} $L$ is of type 1). Under Assumptions~\ref{asm:consistency}-\ref{asm:exchangeability}, it is known that the counterfactual mean $\E[Y(a)]$ is identified by celebrated g-formula \citep{robins1986new,hernan2020causal}.

It is also interesting to consider alternative data generating mechanisms under which Assumption~\ref{asm:exchangeability} holds, illustrated in Figures~\ref{fig:sra dag0}(b) and \ref{fig:sra dag0}(c), with the first of type 2) where $L$ includes all causes of $A$ that share an unmeasured common cause $U$ (and therefore are associated) with $Y$; while the second is of type 3) where $L$ includes all
causes of $Y$ that share an unmeasured common cause $U$ (and therefore are associated) with $A$. 
Measured covariates of types 1), 2), and 3) may coexist, as depicted in Figure~\ref{fig:sra dag0}(d), in which \yifan{case} $L$ has been decomposed into three bucket types of measured covariates $L=(X,W,Z),$ such that $X$ are measured covariates of type 1), $Z$ are measured covariates of type 2), and $W$ are measured covariates of type 3).
All four settings represented in Figure~\ref{fig:sra dag0} illustrate possible data generating mechanisms under which exchangeability assumption~\ref{asm:exchangeability} holds, without necessarily requiring that the analyst identify which bucket type each covariate in $L$ belongs to. Note that all four settings rule out the presence of an unmeasured common cause of $A$ and $Y$, therefore ruling out unmeasured confounding.

\vspace{1cm}
\begin{figure}[h]
  \centering
  \vfill
  \resizebox{410pt}{!}{
    \begin{tikzpicture}[state/.style={circle, draw, minimum size=0.7cm}]
  \def\Ax{0}
  \def\Ay{0}
  \def\offset{2.5}
  \def\Bx{\Ax+5}
  \def\By{\Ay}

  \node[state,shape=circle,draw=black] (Y) at (\Bx,\By) {$Y$};
  \node[state,shape=circle,draw=black] (A) at (\Bx-2.5,\By) {$A$};
  \node[state,shape=circle,draw=black] (L) at (\Bx-1.25,\By+1.5) {$L$};

  \draw [-latex] (L) to [bend left=0] (A);
  \draw [-latex] (L) to [bend left=0] (Y);
  \draw [-latex] (A) to [bend left=0] (Y);

\end{tikzpicture} 
    \begin{tikzpicture}[state/.style={circle, draw, minimum size=0.7cm}]
  \def\Ax{0}
  \def\Ay{0}
  \def\offset{2.5}
  \def\Bx{\Ax+5}
  \def\By{\Ay}
  
  \node[state,shape=circle,draw=black] (U) at (\Bx,\By+1.5) {$U$};
  \node[state,shape=circle,draw=black] (Y) at (\Bx,\By) {$Y$};
  \node[state,shape=circle,draw=black] (A) at (\Bx-2.5,\By) {$A$};
  \node[state,shape=circle,draw=black] (L) at (\Bx-1.25,\By+1.5) {$L$};

  \draw [-latex] (L) to [bend left=0] (A);
  \draw [-latex] (U) to [bend left=0] (L);
  \draw [-latex] (A) to [bend left=0] (Y);
\draw [-latex] (U) to [bend left=0] (Y);

\end{tikzpicture}
    \begin{tikzpicture}[state/.style={circle, draw, minimum size=0.7cm}]
  \def\Ax{0}
  \def\Ay{0}
  \def\offset{2.5}
  \def\Bx{\Ax+5}
  \def\By{\Ay}
  
  \node[state,shape=circle,draw=black] (U) at (\Bx-2.5,\By+1.5) {$U$};
  \node[state,shape=circle,draw=black] (Y) at (\Bx,\By) {$Y$};
  \node[state,shape=circle,draw=black] (A) at (\Bx-2.5,\By) {$A$};
  \node[state,shape=circle,draw=black] (L) at (\Bx-1.25,\By+1.5) {$L$};

  \draw [-latex] (L) to [bend left=0] (Y);
  \draw [-latex] (U) to [bend left=0] (L);
  \draw [-latex] (A) to [bend left=0] (Y);
\draw [-latex] (U) to [bend left=0] (A);

\end{tikzpicture}
    \begin{tikzpicture}[state/.style={circle, draw, minimum size=0.7cm}]
  \def\Ax{0}
  \def\Ay{0}
  \def\offset{2.5}
  \def\Bx{\Ax+5}
  \def\By{\Ay}
  \node[state,shape=circle,draw=black] (Z) at (\Bx-4,\Ay+1.5) {$Z$};
  \node[state,shape=circle,draw=black] (U1) at (\Bx-2.5,\Ay+1.5) {$U_1$};
  \node[state,shape=circle,draw=black] (U2) at (\Bx,\By+1.5) {$U_2$};
  \node[state,shape=circle,draw=black] (Y) at (\Bx,\By) {$Y$};
  \node[state,shape=circle,draw=black] (A) at (\Bx-2.5,\By) {$A$};
  \node[state,shape=circle,draw=black] (X) at (\Bx-1.25,\By+1.5) {$X$};
  \node[state,shape=circle,draw=black] (W) at (\Bx+1.5,\Ay+1.5) {$W$};

  \draw [-latex] (A) to [bend left=0] (Y);
  \draw [-latex] (X) to [bend left=0] (A);
  \draw [-latex] (X) to [bend left=0] (Y);
  \draw [-latex] (Z) to [bend left=0] (A);
  \draw [-latex] (W) to [bend left=0] (Y);

  \draw [-latex] (U1) to [bend left=0] (Y);
  \draw [-latex] (U2) to [bend left=0] (A);
    \draw [-latex] (U1) to [bend left=0] (Z);
  \draw [-latex] (U2) to [bend left=0] (W);

\end{tikzpicture}
}\\
  (a) type 1);~~~~~ (b) type 2);~~~~~ (c) type 3);~~~~~ (d) Coexistence of 1) 2) 3)
  \caption{DAGs representing treatment and outcome confounding proxies when exchangeability holds.}
  \label{fig:sra dag0}
\end{figure}
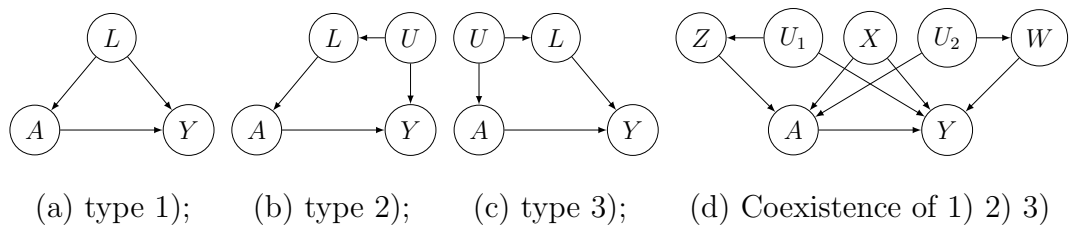
\vspace{0.5cm}

In order for exchangeability to hold in Figure~\ref{fig:sra dag0}(d), it
must be that as encoded in the DAG, unmeasured variables $U_{1}$ and $U_{2}$
are independent conditional on $A,X,Z,$ and $W$; otherwise, as illustrated in
Figure~\ref{fig:sra dag3}, the unblocked backdoor path $A-U_{2}-U_{3}-U_{1}-Y$ would
invalidate Assumption~\ref{asm:exchangeability}. As shown in
the next section, it is possible to relax this conditional independence assumption and therefore Assumption~\ref{asm:exchangeability} while preserving identification of the counterfactual mean parameter despite the presence of unmeasured confounding.

\vspace{1cm}
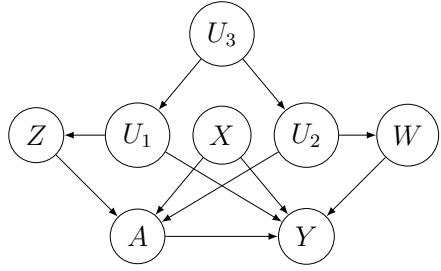
\begin{figure}[h]
  \centering
  \vfill
  \resizebox{170pt}{!}{%
    \begin{tikzpicture}[state/.style={circle, draw, minimum size=0.6cm}]
  \def\Ax{0}
  \def\Ay{0}
  \def\offset{2.5}
  \def\Bx{\Ax+5}
  \def\By{\Ay}
  \node[state,shape=circle,draw=black] (Z) at (\Bx-4,\Ay+1.5) {$Z$};
  \node[state,shape=circle,draw=black] (U1) at (\Bx-2.5,\Ay+1.5) {$U_1$};
  \node[state,shape=circle,draw=black] (U2) at (\Bx,\By+1.5) {$U_2$};
  \node[state,shape=circle,draw=black] (Y) at (\Bx,\By) {$Y$};
  \node[state,shape=circle,draw=black] (A) at (\Bx-2.5,\By) {$A$};
  \node[state,shape=circle,draw=black] (X) at (\Bx-1.25,\By+1.5) {$X$};
    \node[state,shape=circle,draw=black] (U3) at (\Bx-1.25,\By+3) {$U_3$};
  \node[state,shape=circle,draw=black] (W) at (\Bx+1.5,\Ay+1.5) {$W$};

  \draw [-latex] (A) to [bend left=0] (Y);
  \draw [-latex] (X) to [bend left=0] (A);
  \draw [-latex] (X) to [bend left=0] (Y);
  \draw [-latex] (Z) to [bend left=0] (A);
  \draw [-latex] (W) to [bend left=0] (Y);

  \draw [-latex] (U3) to [bend left=0] (U2);
  \draw [-latex] (U3) to [bend left=0] (U1);
  \draw [-latex] (U1) to [bend left=0] (Y);
  \draw [-latex] (U2) to [bend left=0] (A);
    \draw [-latex] (U1) to [bend left=0] (Z);
  \draw [-latex] (U2) to [bend left=0] (W);

\end{tikzpicture}
  }
  \vfill
  \caption{Coexistence of types 1), 2), and 3) proxies when exchangeability fails.}
  \label{fig:sra dag3}
\end{figure}
\vspace{0.5cm}

\subsection{Proximal identification}

Now consider a setting in which exchangeability assumption~\ref{asm:exchangeability} fails.
Suppose that one has partitioned $L$ into variables $(X,Z,W)$, such that $Z$ includes treatment-inducing confounding proxies, and $W$ includes outcome-inducing confounding proxies known to satisfy the following Assumptions~\ref{asm:condindy}-\ref{asm:condrand} which formalize the notion of covariate types 2) and 3).

\vspace{1cm}
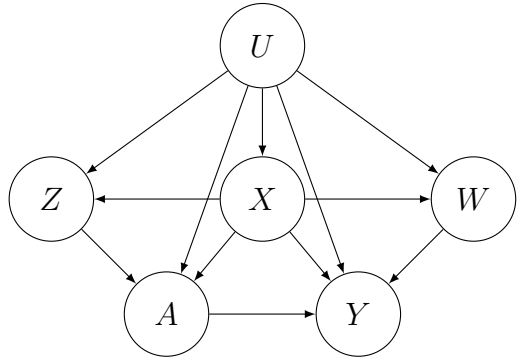
\begin{figure}[h]
  \centering
  \vfill
  \resizebox{200pt}{!}{%
    \begin{tikzpicture}[state/.style={circle, draw, minimum size=1.1cm}]
  \def\Ax{0}
  \def\Ay{0}
  \def\offset{2.5}
  \def\Bx{\Ax+5}
  \def\By{\Ay}
  \node[state,shape=circle,draw=black] (Z) at (\Bx-4,\Ay+1.5) {$Z$};
  \node[state,shape=circle,draw=black] (Y) at (\Bx,\By) {$Y$};
  \node[state,shape=circle,draw=black] (A) at (\Bx-2.5,\By) {$A$};
  \node[state,shape=circle,draw=black] (X) at (\Bx-1.25,\By+1.5) {$X$};
    \node[state,shape=circle,draw=black] (U) at (\Bx-1.25,\By+3.5) {$U$};
  \node[state,shape=circle,draw=black] (W) at (\Bx+1.5,\Ay+1.5) {$W$};

  \draw [-latex] (X) to [bend left=0] (W);
  \draw [-latex] (A) to [bend left=0] (Y);
  \draw [-latex] (X) to [bend left=0] (A);
  \draw [-latex] (X) to [bend left=0] (Z);
  \draw [-latex] (X) to [bend left=0] (Y);
  \draw [-latex] (Z) to [bend left=0] (A);
  \draw [-latex] (W) to [bend left=0] (Y);

  \draw [-latex] (U) to [bend left=0] (A);
  \draw [-latex] (U) to [bend left=0] (Z);
  \draw [-latex] (U) to [bend left=0] (Y);
  \draw [-latex] (U) to [bend left=0] (W);
  \draw [-latex] (U) to [bend left=0] (X);

\end{tikzpicture}
  }
  \vfill
  \caption{A causal DAG of proximal causal inference.}
  \label{fig:sra dag}
\end{figure}
\vspace{0.5cm}

\begin{assumption}\emph{(Conditional independence for $Y$)}\label{asm:condindy}
$Y \perp Z | U, X, A$.
\end{assumption}

\begin{assumption}\emph{(Conditional independence for $W$)}\label{asm:condindw}
$W \perp (Z,A) | U, X$.
\end{assumption}

\begin{assumption}\emph{(Conditional randomization)}\label{asm:condrand}
\begin{align*}
Y (a) \perp A|U,X ~\text{for}~ a=0,1.
\end{align*}
\end{assumption}

Figure~\ref{fig:sra dag} provides a graphical representation of Assumptions~\ref{asm:condindy}-\ref{asm:condrand}. For instance,  Assumptions~\ref{asm:condindy} and \ref{asm:condindw} formalize the hypothesis in \cite{li2022double} discussed in Section~\ref{sec:intro} that prior immunization visits are valid treatment confounding proxies and diagnoses such as arm/leg cellulitis, eye/ear disorder, injuries are valid outcome confounding proxies.
As discussed in \cite{shi2020selective}, a number of alternative DAGs are in fact compatible with Assumptions~\ref{asm:condindy}-\ref{asm:condrand}; see Table~A.1 of \cite{tchetgen2020} and \cite{shi2020selective}.
In addition, we make the following positivity assumption for nonparametric identification. 

\begin{assumption}\emph{(Positivity)}\label{asm:positivity2}
$0<\Pr(A=a|U,X)<1$ almost surely, $a=0,1$.
\end{assumption}

\yf{Assumption~\ref{asm:positivity2} essentially states that the probability of having a particular level of exposure, conditional on $X$ and $U$, is greater than zero for all strata.}
In order to identify the population average treatment effect, \cite{miao2018identifying,tchetgen2020} consider the following assumptions:

\begin{assumption}\emph{(Completeness)}\label{asm:completeness}
For any square-integrable function $g$ and for any $a,x$, $\E[g(U)|Z, A=a, X=x]=0$ almost surely if and only if $g(U)=0$ almost surely.
\end{assumption}
These conditions are formally known as completeness conditions which
can accommodate both categorical and continuous confounders. Completeness is a technical condition central to the study of sufficiency in foundational theory of
statistical inference. 
\yifan{We note that the completeness assumption~\ref{asm:completeness} rules out conditional independence of $U$ and $Z$ given $A$ and $X$.}
Here one may interpret the completeness condition as a requirement relating the range of $U$ to that of $Z$ which essentially states that the set of proxies
must have sufficient variability relative to variability of $U$. 
The condition is easiest understood in the case of categorical $U$ and $Z$ with number
of categories $d_{u}$ and $d_{z}$, respectively$.$ In this case,
completeness assumption~\ref{asm:completeness} requires that
\begin{equation}
d_{z}  \geq d_{u},\label{categorical completeness}
\end{equation}
which states that $Z$ must have at least as many categories as
$U$. Intuitively, condition $\left(  \ref{categorical completeness}\right)$
states that proximal causal inference can potentially account for unmeasured
confounding in the categorical case as long as the number of categories of $U$
is no larger than that of $Z$ (\cite{miao2018identifying}, \cite{Shi2019MultiplyRC} and \cite{tchetgen2020}). This further
provides a rationale for measuring a rich set of baseline characteristics in
observational studies as a potential strategy for mitigating unmeasured
confounding via the proximal approach we describe below. 
Many commonly used parametric and semiparametric models such as exponential families \citep{newey2003instrumental} satisfy the completeness condition. For nonparametric regression models, results of \cite{d2011completeness} and \cite{darolles2011nonparametric} can be used to justify the completeness condition, although they focused on nonparametric instrumental variable problems where completeness also plays an important role.
\yifan{\cite{chen2014local} and \cite{andrews2017examples} showed that if $Z$ and $U$ are continuously distributed and the dimension of $Z$ is larger than that of $U$, then under a mild regularity condition, the
completeness condition holds generically in the sense that the set of distributions for which completeness fails has a property similar to  being essentially Lebesgue measure zero. More specifically, as formally shown by \cite{canay2013testability}, distributions
for which a completeness condition fails can approximate distributions for which it holds arbitrarily well in the total variation distance sense. Thus, while completeness conditions may themselves not be directly testable, one may argue as in \cite{canay2013testability} that they are commonly satisfied.} 
We further refer to \cite{chen2014local}, \cite{andrews2017examples}, and Section~2 of the Supplementary Material of \cite{miao2022} for a review and examples of completeness. 

\cite{miao2018identifying} established the following nonparametric identification result which we have adapted to the proximal inference setting. 
\begin{theorem}\label{thm:id}\citep{miao2018identifying}
Suppose that there exists an outcome confounding bridge function $h(w, a, x)$ that solves the following integral equation
\begin{align}\label{eq:condexpec}
\E[Y |Z,A,X] = \int h(w, A, X) dF(w|Z,A,X),
\end{align}
almost surely.
\begin{itemize}
\item[(1)] (Factuals) Under Assumptions~\ref{asm:condindy}, \ref{asm:condindw}, \ref{asm:positivity2},
and \ref{asm:completeness}, one has that
\begin{align}\label{eq:u1}
\E[Y |U,A,X] = \int h(w, A, X) dF(w|U,X),
\end{align}
almost surely.
\item[(2)] (Causal) Suppose that Equation~\eqref{eq:u1} holds. Under Assumptions~\ref{asm:consistency}, \ref{asm:condrand}, and \ref{asm:positivity2}, the counterfactual mean $\E[Y(a)]$ is nonparametrically identified by
\begin{align}\label{eq:proximalg1}
\E[Y (a)] = \int_{\mathcal X} \int h(w, a, x)dF(w|x)dF(x),
\end{align}
and thus the average treatment effect is identified by $$\psi = \int_{\mathcal X} \int [h(w, 1, x)- h(w, 0, x)]dF(w|x)dF(x).$$
\end{itemize}
\end{theorem}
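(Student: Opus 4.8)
The plan is to treat the two parts separately, with the factual identity~(1) carrying essentially all the work and the causal identity~(2) following by a short chain of the structural assumptions. Existence of a bridge function $h$ solving \eqref{eq:condexpec} is taken as given. For part~(1), the starting observation is that the right-hand side of \eqref{eq:condexpec} is exactly $\E[h(W,A,X)\mid Z,A,X]$, so the defining equation reads $\E[Y\mid Z,A,X] = \E[h(W,A,X)\mid Z,A,X]$ almost surely. The idea is then to condition both sides on $U$ and use the conditional independence assumptions to remove all dependence on $Z$, reducing the problem to a completeness argument. Concretely, fix $a,x$ and set $b(u,a,x) := \E[Y\mid U=u,A=a,X=x]$ and $\tilde b(u,a,x) := \int h(w,a,x)\,dF(w\mid u,x)$.

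By iterated expectations, $\E[Y\mid Z,A,X] = \int \E[Y\mid U,Z,A,X]\,dF(u\mid Z,A,X)$, and Assumption~\ref{asm:condindy} ($Y\perp Z\mid U,X,A$) collapses the inner expectation to $b(U,A,X)$. Likewise $\E[h(W,A,X)\mid Z,A,X] = \int \E[h(W,A,X)\mid U,Z,A,X]\,dF(u\mid Z,A,X)$, and Assumption~\ref{asm:condindw} ($W\perp(Z,A)\mid U,X$) collapses the inner expectation to $\tilde b(U,A,X)$. Equating the two representations yields $\E[\,b(U,a,x)-\tilde b(U,a,x)\mid Z,A=a,X=x\,]=0$ almost surely.

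The main obstacle is passing from this conditional-moment equality to the pointwise identity $b=\tilde b$, and this is precisely where completeness is needed. Applying Assumption~\ref{asm:completeness} with $g(U)=b(U,a,x)-\tilde b(U,a,x)$ for each fixed $a,x$ (and granting the square-integrability of $Y$ and $h$ that the statement implicitly requires) forces $g(U)=0$ almost surely, which is exactly \eqref{eq:u1}; the positivity Assumption~\ref{asm:positivity2} guarantees the conditional expectations given $A=a$ are well defined. I expect this completeness step to be the only genuinely nontrivial point, since everything preceding it is bookkeeping with iterated expectations and the two conditional independence hypotheses.

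For part~(2), starting from $\E[Y(a)] = \E\big[\E[Y(a)\mid U,X]\big]$, I would use Assumption~\ref{asm:condrand} to insert the event $A=a$ into the inner conditioning and then Assumption~\ref{asm:consistency} to replace $Y(a)$ by $Y$, giving $\E[Y(a)\mid U,X] = \E[Y\mid U,A=a,X]$. Substituting \eqref{eq:u1} yields $\E[Y(a)\mid U=u,X=x] = \int h(w,a,x)\,dF(w\mid u,x)$, and marginalizing over $U$ with Fubini, using $\int dF(w\mid u,x)\,dF(u\mid x) = dF(w\mid x)$, collapses the $U$-integral to produce $\E[Y(a)] = \int_{\mathcal X}\int h(w,a,x)\,dF(w\mid x)\,dF(x)$, as in \eqref{eq:proximalg1}. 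Differencing over $a=1,0$ gives the stated ATE formula. The only point needing care here is the legitimacy of the interchange of integration, which holds under the same integrability already imposed in part~(1).
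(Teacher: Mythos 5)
Your proof is correct and follows essentially the same route as the paper's: the paper itself defers this result to \cite{miao2018identifying}, but its Appendix proofs of the dual Theorem~\ref{thm:identification} and of Theorem~\ref{ap:thm} use exactly your strategy --- rewrite both sides of the integral equation as integrals against the conditional law of $U$ given the proxy, collapse them via Assumptions~\ref{asm:condindy} and \ref{asm:condindw}, and invoke completeness (Assumption~\ref{asm:completeness}) to obtain the factual identity~\eqref{eq:u1}, after which the causal part follows by conditional randomization, consistency, and marginalization over $U$. Your bookkeeping, including the use of positivity to make conditioning on $A=a$ well defined and the tower-property step $\int dF(w\mid u,x)\,dF(u\mid x)=dF(w\mid x)$, matches the paper's treatment.
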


\begin{remark}
Technical conditions for the existence of a solution to Equation~\eqref{eq:condexpec} are provided in the Appendix~\ref{sec:existence}. In particular, we note that the following assumption: 
\begin{assumption}\emph{(Completeness)}\label{asm:completeness1.2}
For any square-integrable function $g$ and for any $a,x$, $\E[g(Z)|W, A=a, X=x]=0$ almost surely if and only if $g(Z)=0$ almost surely.
\end{assumption}
\noindent along with the regularity conditions on the singular value decomposition
of the conditional mean operators together, suffice for the existence of a solution to Equation~\eqref{eq:condexpec}.
\end{remark}

Theorem~\ref{thm:id} serves as a basis for inference in \cite{Shi2019MultiplyRC,tchetgen2020,miao2018confounding}, as described in Section~\ref{sec:semi}.
Equation~\eqref{eq:condexpec} defines a so-called inverse problem known as a Fredholm integral equation of the first kind \citep{kress1989linear,miao2018identifying}. 
 Importantly, while the theorem does not require uniqueness of a solution to the integral equation~\eqref{eq:condexpec}, all solutions lead to a unique value of the ATE. 
Note that Equation~\eqref{eq:u1} highlights the inverse problem nature of the task
accomplished by Equation~\eqref{eq:proximalg1} (\cite{tchetgen2020} refer to \eqref{eq:proximalg1} as proximal g-formula), which is to determine a function $h$ that satisfies this equality without
explicitly modeling or estimating the latent factor $U$. A remarkable feature of proximal causal inference 
is that accounting for $U$ without either measuring $U$ directly or estimating its distribution can
be accomplished provided that the set of proxies though imperfect, is sufficiently rich so that the
inverse problem admits a solution.

\begin{remark}\label{remark1}
\yifan{There are different strategies to achieve identification in proximal causal inference.}
Instead of taking Equation~\eqref{eq:condexpec} as a starting point,
\cite{miao2018confounding} consider an alternative identifying condition in which  they assume that there exists a bridge function \yifan{$h'(w, a, x)$} such that \yifan{$\E[Y|U,a,X] = \int h' (w, a, X) dF(w|U, X)$} almost surely. In addition, \cite{miao2018confounding} replace completeness assumption~\ref{asm:completeness} which involves unobservables with an alternative completeness condition that only involves observables; that is, for any square-integrable function $g$ and for any $a,x$, $\E [g(W)|Z, a, x] = 0$ almost surely if and only if $g(W) = 0$ almost surely. They then establish that such function \yifan{$h'$} must solve Equation~\eqref{eq:condexpec}, and arrive at the very same identifying proximal g-formula~\eqref{eq:proximalg1}. 
\end{remark}

\subsection{A new proximal identification result}

In this section, we establish an alternative proximal identification result to that of \cite{miao2018identifying}. We first consider identification with a discrete variable $U$. Suppose that $W,Z,U$ are discrete variables, each with $d$ categories. 
For notational convenience, we write $P(Z|w) = (\Pr(z_1|w), \ldots,\Pr(z_d|w))^T$, $P(z|W) = (\Pr(z|w_1), \ldots,\Pr(z|w_d))$, and $P(Z|W) = (\Pr(Z|w_1), \ldots,\Pr(Z|w_d))$ to denote a column vector, a row vector, and a matrix, respectively. 
For other variables, vectors and matrices are analogously defined. 
By $(Z,A)\perp W|U,X$, we have that
\begin{align*}
P(Z|W,a,x) & = P(Z|U,a,x)P(U|W,a,x),\\
P(a|W,x)^{-1} & = P(a|U,x)^{-1} P(U|W,a,x),
\end{align*}
where $P(a|W,x)^{-1}$ denotes $(1/\Pr(a|w_1,x), \ldots,1/\Pr(a|w_d,x))$ and $P(a|U,x)^{-1}$ denotes \\
$(1/\Pr(a|u_1,x), \ldots,1/\Pr(a|u_d,x))$. Therefore, assuming that $P(Z|W,a,X)$ is invertible for any $a=0,1$ and $x$, we have that
\begin{align*}
P(a|W,x)^{-1}P^{-1}(Z|W,a,x) P(Z|U,a,x) & = P(a|U,x)^{-1},
\end{align*}
where $P^{-1}(Z|W,a,x)$ denotes the inverse of the matrix $P(Z|W,a,x)$.
Furthermore, by $Y \perp Z|U, A, X$,
\begin{align*}
P(y|U,a,x) 
& =   [P(a|W,x)^{-1} P^{-1}(Z|W,a,x) P(Z|U,a,x)] \times P(a|U,x) \times P(y|U,a,x) \\
& = P(a|W,x)^{-1} P^{-1}(Z|W,a,x)  P(y,a,Z|U,x),
\end{align*}
where $\times$ denotes element-wise multiplication.
Upon multiplying both sides by $(\Pr(u_1|x),\ldots,\Pr(u_d|x))^T$, we have that
\begin{align*}
\Pr(y(a)|x)  = P(a|W,x)^{-1} P^{-1}(Z|W,a,x)P(y,a,Z|x).
\end{align*}
Therefore establishing identification of $\Pr(y(a)|x)$, and thus, identification of corresponding functionals, such as counterfactual means and average treatment effect. 
Next, we extend this identification result to a general setting in which $U$ can include both categorical and continuous factors, under the following completeness condition.

\begin{assumption}\emph{(Completeness)}\label{asm:completeness2}
For any square-integrable function $g$ and for any $a,x$, $\E[g(U)|W, A=a, X=x]=0$ almost surely if and only if $g(U)=0$ almost surely.
\end{assumption}

Assumption~\ref{asm:completeness2} essentially states that $W$ must have sufficient variability relative to variability of $U$.

\begin{theorem}\label{thm:identification}
Suppose that there exists a treatment confounding bridge function $q(z,a,x)$ that solves the integral equation:
\begin{align}\label{eq:condexpect2}
\E[q(Z,a,X)|W,A=a, X] = \frac{1}{f(A=a|W, X)},
\end{align}
almost surely.
\begin{itemize}
\item[(1)] (Factuals) Under Assumptions~\ref{asm:condindw},  \ref{asm:positivity2}, and \ref{asm:completeness2}, one has that
\begin{align}\label{eq:u2}
\int q(z,a,X)dF(z|U,A=a,X) = \frac{1}{f(A=a|U,X)},
\end{align}
almost surely.
Furthermore, if Assumption~\ref{asm:condindy} also holds, one also has that
\begin{align}\label{eq:u3}
\E[Y|U,A=a,X=x] = \E[I(A=a)Yq(Z,a,X)|U,X=x] ,
\end{align}
almost surely.
\item[(2)] (Causal) Suppose that Equation~\eqref{eq:u3} holds. Under Assumptions~\ref{asm:consistency}, \ref{asm:condrand}, and \ref{asm:positivity2}, $\E[Y(a)]$ is nonparametrically identified by
\begin{align}\label{eq:proximalg2}
\E[Y (a)] =\int_{\cX}\int I(\tilde a =a) q(z,a,x)y dF(y,z,\tilde a|x) dF(x).
\end{align}
Therefore, the average treatment effect is identified by $$\psi = \int_{\mathcal X} \int  (-1)^{1-a} q(z,a,x) y dF(y,z,a|x) dF(x).$$
\end{itemize}
\end{theorem}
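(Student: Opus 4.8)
The plan is to treat part (1) as the heart of the argument, where the completeness condition does the essential work, and then to obtain part (2) by a relatively routine application of conditional randomization, consistency, and the tower property. Throughout I fix $a$ and $x$ and exploit the defining equation~\eqref{eq:condexpect2} for the treatment confounding bridge function $q$ together with the conditional independence $(Z,A)\perp W\mid U,X$ of Assumption~\ref{asm:condindw}.

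For Equation~\eqref{eq:u2}, I would introduce the discrepancy
\[
g(U) \;=\; \int q(z,a,X)\,dF(z\mid U,A=a,X)\;-\;\frac{1}{f(A=a\mid U,X)},
\]
and aim to show $\E[g(U)\mid W,A=a,X]=0$ almost surely; Assumption~\ref{asm:completeness2} then forces $g(U)=0$ almost surely, which is exactly~\eqref{eq:u2}. The verification splits into two pieces. For the first term, I would use $(Z,A)\perp W\mid U,X$ to replace $dF(z\mid U,A=a,X)$ by $dF(z\mid U,A=a,W,X)$ and then marginalize over $U$, collapsing the iterated integral to $\E[q(Z,a,X)\mid W,A=a,X]$, which equals $1/f(A=a\mid W,X)$ by~\eqref{eq:condexpect2}. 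For the second term, I would rewrite $dF(u\mid W,A=a,X)$ via Bayes' rule as $f(A=a\mid u,W,X)\,dF(u\mid W,X)/f(A=a\mid W,X)$, invoke $A\perp W\mid U,X$ to reduce $f(A=a\mid u,W,X)$ to $f(A=a\mid u,X)$, and observe the cancellation that again yields $1/f(A=a\mid W,X)$. The two pieces coincide, so $\E[g(U)\mid W,A=a,X]=0$. I expect this twofold use of the conditional independence---once to fuse the integrals for the $q$ term and once to reweight for the $1/f$ term---to be the main obstacle, since it is the only place where the argument is genuinely delicate; invoking completeness at the end is then immediate.

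Given~\eqref{eq:u2}, Equation~\eqref{eq:u3} should follow quickly. I would expand the right-hand side by conditioning on $A$, writing $\E[I(A=a)Yq(Z,a,X)\mid U,X] = \Pr(A=a\mid U,X)\,\E[Yq(Z,a,X)\mid U,A=a,X]$. Since $q(Z,a,X)$ is a function of $Z$ once $a,X$ are fixed, Assumption~\ref{asm:condindy} ($Y\perp Z\mid U,A,X$) lets me factor this into $\E[Y\mid U,A=a,X]\,\E[q(Z,a,X)\mid U,A=a,X]$; the second factor is precisely the left-hand side of~\eqref{eq:u2} and therefore equals $1/f(A=a\mid U,X)$. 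The factor $\Pr(A=a\mid U,X)$ then cancels $1/f(A=a\mid U,X)$, leaving $\E[Y\mid U,A=a,X]$, as claimed.

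Finally, for part (2), I would start from $\E[Y(a)]=\E\big[\E[Y(a)\mid U,X]\big]$ and use conditional randomization (Assumption~\ref{asm:condrand}, $Y(a)\perp A\mid U,X$) together with consistency (Assumption~\ref{asm:consistency}) to identify $\E[Y(a)\mid U,X]=\E[Y\mid U,A=a,X]$. Substituting~\eqref{eq:u3} gives $\E[Y(a)]=\E\big[\E[I(A=a)Yq(Z,a,X)\mid U,X]\big]$, and the tower property collapses this to the fully observable functional $\E[I(A=a)Yq(Z,a,X)]$, which is the right-hand side of~\eqref{eq:proximalg2} once written as an integral against $dF(y,z,\tilde a\mid x)\,dF(x)$. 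Positivity (Assumption~\ref{asm:positivity2}) guarantees that the bridge function and all conditional densities involved are well defined. The stated ATE identity then follows by differencing $\E[Y(1)]$ and $\E[Y(0)]$ and absorbing the sign into the factor $(-1)^{1-a}$.
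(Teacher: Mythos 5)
Your proposal is correct and follows essentially the same route as the paper's proof: you establish $\E[g(U)\mid W,A=a,X]=0$ by the same two ingredients (Bayes' rule plus $W\perp(Z,A)\mid U,X$ to collapse both the $q$-integral and the $1/f$ term to $1/f(A=a\mid W,X)$), invoke completeness to get~\eqref{eq:u2}, factor via $Y\perp Z\mid U,A,X$ to get~\eqref{eq:u3}, and finish part (2) with consistency, conditional randomization, and the tower property. The only difference is organizational---you verify the two conditional expectations separately, while the paper writes a single chain of equalities---which is not a substantive departure.
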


\yf{\begin{remark}
We note that Theorem~\ref{thm:identification} also applies for a continuous possibly multivariate exposure $A$ with
$$\E[Y (a)]  = \int_{\mathcal X}\int_{\mathcal Y,\mathcal Z}  q(z,a,x) y dF(y,z,a|x) dF(x).$$
\end{remark}
}

\begin{remark}
Formal technical conditions for the existence of a solution to \eqref{eq:condexpect2} are provided in the Appendix~\ref{sec:existence}. In particular, we note that the following assumption: 

\begin{assumption}\emph{(Completeness)}\label{asm:completeness2.2}
For any square-integrable function $g$ and for any $a,x$, $\E[g(W)|Z, A=a, X=x]=0$ almost surely if and only if $g(W)=0$ almost surely.
\end{assumption}
\noindent along with the regularity conditions on the singular value decomposition
of the conditional mean operators together, suffice for the existence of a solution to Equation~\eqref{eq:condexpect2}.
\end{remark}

Theorem~\ref{thm:identification} provides a new proximal identification result which complements that of \cite{miao2018identifying,tchetgen2020}. 
Similar to Equation~\eqref{eq:condexpec}, 
Equation~\eqref{eq:condexpect2} also defines a Fredholm integral equation of the first kind. 
Fredholm equations are well known to often be ill-posed and solving them requires a form of regularization in practice. In the next section, we will consider using semiparametric models as an implicit form of regularization. 
We note that exchangeability assumption, \yf{albeit a structural assumption}, can also be viewed as a form of regularization of Equations~\eqref{eq:condexpec} and \eqref{eq:condexpect2} which automatically yields unique and stable solutions to the integral equations. Intuitively, 
\yf{suppose that $U=\emptyset$, then Equation~\eqref{eq:condexpec} reduces to}
\begin{align*}
\E[Y |A,X] = \int \E(Y|w, A, X) dF(w|A,X),
\end{align*}
i.e., $h (w, a, x) = \E(Y |w, a, x)$; alternatively, \yf{suppose that $U=\emptyset$, then Equation~\eqref{eq:condexpect2} reduces to}
\begin{align*}
\frac{1}{f(A=a|X)} = \int \frac{1}{f(A=a|z, X)} dF(z|A=a, X),
\end{align*}
i.e., $q (z, a, x) = 1/f(A=a|z, x)$.

Note that again Equation~\eqref{eq:u2} highlights the inverse problem nature of the task
accomplished by Equation~\eqref{eq:proximalg2}, which is to determine a function $q$ that satisfies the equation without
explicitly modeling or estimating the latent factor $U$. The theorem reveals that accounting for $U$ without either measuring $U$ directly or estimating its distribution can
be accomplished provided that the set of proxies is sufficiently rich so that the inverse problem admits a solution. 

\begin{remark}
Note that similar to Remark~\ref{remark1}, instead of assuming Equation~\eqref{eq:condexpect2}, one could alternatively consider the following identifying condition: suppose that there exists a bridge function $q'(z, a, x)$ such that
${1}/{f(A=a|U,X)} = \int q'(z,a,X)dF(z|U,A=a,X)$ almost surely, and replace completeness assumption~\ref{asm:completeness2} with: for any square-integrable function $g$ and for any $a,x$, $\E [g(Z)|W, a, x] = 0$ almost surely if and only if $g(Z) = 0$ almost surely; subsequently by Theorem~\ref{ap:thm} presented in the Appendix~\ref{sec:C}, any solution $q'$ must also solve Equation~\eqref{eq:condexpect2}; furthermore identifying formula~\eqref{eq:proximalg2} still applies. 
\end{remark}

\begin{remark}
Identification of $q$ under the conditions of Theorem~\ref{thm:identification} offers an opportunity for identification of any smooth functional $\beta$ of the marginal counterfactual distribution that can be defined as a solution to a moment equation, say $\E(\int m(Y(a),a,X;\beta)d\mu(a))=0$, with $\mu$ a dominating measure of $a$. The theorem implies the following observed data moment equation analog obtained by reweighting the moment equation by the treatment confounding bridge function, i.e., $\E(q(Z,A,X)m(Y,A,X;\beta))=0$, provided that the expectation is well defined. For instance, the marginal distribution of $Y(a)$ at $y$ can be identified by $m(Y(a),a,x;y,\beta)=I(Y(a) \leq y) -\beta$. Therefore, identification Theorem~\ref{thm:identification} is more general than Theorem~\ref{thm:id} which only identifies the ATE.   
\end{remark}

\section{Semiparametric theory and inference}\label{sec:semi}

\yifan{We consider inference for $\psi$ under the semiparametric model $\cM_{sp}$
which places no restriction on the observed data distribution other than existence (but not necessarily uniqueness) of a bridge function $h$ that solves Equation~\eqref{eq:condexpec}.
Let $T:L_2(W,A,X)\rightarrow L_2(Z,A,X)$ be the conditional expectation operator given by $T(g)\equiv
\E[g(W,A,X)|Z,A,X]$, and the adjoint $T':L_2(Z,A,X)\rightarrow L_2(W,A,X)$ be $T'(g)\equiv \E[g(Z,A,X)|W,A,X]$. 
We consider the following regularity condition under the model:}
\begin{assumption}\label{asm:bound}
$T$ and $T'$ are surjective. 
\end{assumption}

\begin{theorem}\label{thm:eif} We have the following results.
\begin{itemize}
\item[(1)] Under Assumptions~\ref{asm:completeness1.2} and \ref{asm:completeness2.2}, $h$ and $q$ that solve integral equations~\eqref{eq:condexpec} and \eqref{eq:condexpect2} are uniquely identified.
\item[(2)]
The efficient influence function of $\psi$ under the semiparametric model $\cM_{sp}$ evaluated at the submodel where Assumption~\ref{asm:bound} holds, Equation~\eqref{eq:condexpect2} holds at the true data generating law, and $h$, $q$ are uniquely defined, is given by
\begin{align}\label{eq:eif}
EIF(\psi) = (-1)^{1-A} q(Z,A,X) [Y- h(W,A,X)] + h(W,1,X) - h(W,0,X) - \psi.
\end{align}
Therefore, the corresponding semiparametric local efficiency bound of $\psi$ 
equals $\E[EIF^2(\psi)]$.
\item[(3)] Equation~\eqref{eq:eif} admits the double robustness property: $\E[EIF(q^{*}, h^{*};\psi)]=0$ provided that either $h^{*}$ is a solution to equation \eqref{eq:condexpec}, or $ q^{*}$ is a solution to equation \eqref{eq:condexpect2}, but both do not necessarily hold. 
\end{itemize}
\end{theorem}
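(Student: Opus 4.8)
The plan is to dispatch parts (1) and (3) first, since both are direct consequences of the completeness conditions and the identification theorems, and to reserve part (2) as the crux. For part (1), I would argue straight from the completeness assumptions: if $h_1$ and $h_2$ both solve \eqref{eq:condexpec}, then $\E[(h_1-h_2)(W,A,X)\mid Z,A,X]=0$ almost surely, which is exactly the hypothesis of Assumption~\ref{asm:completeness2.2}, forcing $h_1=h_2$ almost surely; symmetrically, two solutions of \eqref{eq:condexpect2} differ by a function $g(Z)$ with $\E[g(Z)\mid W,A=a,X]=0$, and Assumption~\ref{asm:completeness1.2} forces $g=0$. This also makes the $EIF$ in part (2) well-defined.

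For part (3) I would verify $\E[EIF(q^{*},h^{*};\psi)]=0$ under each sufficient condition separately. If $h^{*}$ solves \eqref{eq:condexpec}, then $\E[Y-h^{*}(W,A,X)\mid Z,A,X]=0$; conditioning the leading term of \eqref{eq:eif} on $(Z,A,X)$, under which $(-1)^{1-A}q^{*}(Z,A,X)$ is measurable, annihilates it, leaving $\E[h^{*}(W,1,X)-h^{*}(W,0,X)]-\psi$, which vanishes because every solution of \eqref{eq:condexpec} returns the same identified value $\psi$ by Theorem~\ref{thm:id}. If instead $q^{*}$ solves \eqref{eq:condexpect2}, the key is the reweighting identity $\E[I(A=a)q^{*}(Z,a,X)\mid W,X]=f(A=a\mid W,X)\,\{f(A=a\mid W,X)\}^{-1}=1$, whence $\E[I(A=a)q^{*}(Z,a,X)g(W,X)]=\E[g(W,X)]$ for every $g$; applied with $g=h^{*}(\cdot,a,\cdot)$ this shows the reweighted $h^{*}$ contribution cancels the explicit $h^{*}(W,1,X)-h^{*}(W,0,X)$ term, leaving $\E[(-1)^{1-A}q^{*}(Z,A,X)Y]-\psi$, which vanishes by the identifying formula \eqref{eq:proximalg2} of Theorem~\ref{thm:identification}.

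The main obstacle is part (2). I would compute the $EIF$ by the pathwise-derivative route: fix a regular parametric submodel through the truth with score $S$, and differentiate $\psi_t=\E_t[h_t(W,1,X)-h_t(W,0,X)]$ at $t=0$. This produces two pieces: $\E[(h(W,1,X)-h(W,0,X))S]$ from perturbing the outer law, which already yields $h(W,1,X)-h(W,0,X)-\psi$ after centering (legitimate since $\E[\psi S]=0$), and $\E[\dot h(W,1,X)-\dot h(W,0,X)]$ from perturbing the bridge function, where $\dot h=\partial_t h_t|_{t=0}$. The delicate step is to re-express this second piece as an inner product $\E[\,\cdot\,S]$ without explicit knowledge of $\dot h$. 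Differentiating the defining conditional moment restriction $\E_t[Y-h_t(W,A,X)\mid Z,A,X]=0$ and using $\E[Y-h(W,A,X)\mid Z,A,X]=0$ at the truth yields the Fredholm relation $\E[\dot h(W,A,X)\mid Z,A,X]=\E[(Y-h(W,A,X))S\mid Z,A,X]$. The treatment bridge function $q$ then acts as the Riesz representer inverting this relation: by the reweighting identity from part (3), $\E[\dot h(W,a,X)]=\E[I(A=a)q(Z,a,X)\dot h(W,a,X)]$, and substituting the Fredholm relation after conditioning on $(Z,A=a,X)$ converts this into $\E[I(A=a)q(Z,a,X)(Y-h(W,A,X))S]$; summing over $a$ with signs $(-1)^{1-a}$ gives precisely $\E[(-1)^{1-A}q(Z,A,X)(Y-h(W,A,X))S]$, so that $\dot\psi=\E[EIF\cdot S]$ with $EIF$ as in \eqref{eq:eif}.

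To upgrade this influence function to the \emph{efficient} one, I would argue that under Assumption~\ref{asm:bound} the surjectivity of $T$ and $T'$ guarantees that bridge functions exist for every law in a neighborhood of the truth, so the constraint defining $\cM_{sp}$ is not locally binding and the tangent space is the full space of mean-zero square-integrable scores; consequently the influence function is unique and hence efficient. I expect the most technical care to be required in justifying the interchange of differentiation and integration that defines $\dot h$, in verifying square-integrability of the representer $q$ so that the inner-product manipulations are licit, and in establishing rigorously that the tangent space is unrestricted at the evaluation point.
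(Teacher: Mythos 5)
Your proposal is correct and follows essentially the same route as the paper: parts (1) and (3) use the identical completeness and conditioning arguments, and the influence-function derivation in part (2) is the paper's own pathwise-derivative computation, with $q$ entering as the Riesz representer through $\E[q(Z,A,X)\mid W,A,X]=1/f(A\mid W,X)$ exactly as in the paper's proof. The only notable difference is the efficiency upgrade: you argue that surjectivity (Assumption~\ref{asm:bound}) makes the tangent space all of mean-zero $L_2$ so the influence function is unique and hence efficient, whereas the paper writes the tangent space as $\Lambda_1+\Lambda_2$ with a range-closure constraint and verifies that the influence function decomposes into those two pieces --- the two arguments coincide because surjectivity of $T$ renders the constraint $\E[\epsilon S\mid Z,A,X]\in cl(R(T))$ vacuous.
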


\yifan{We note that at the submodel where both Assumptions~\ref{asm:completeness1.2} and \ref{asm:completeness2.2} hold, both completeness of the law of $W$ given $Z$, $A = a$, $X=x$ and of the law of $Z$ given $W$, $A = a$, $X=x$ must hold. Therefore, for
$W$ and $Z$ finitely valued, this imposes the restriction that the
sample space of $Z$ and $W$ have equal cardinality; likewise, for continuous $W$ and $Z$, this requires that $Z$ and $W$ are of equal dimension. In the event that available candidate proxies $W$ and $Z$ are of unequal dimensions, say $W$ has higher dimension than $Z$, it may be possible to coarsen $W$ so that its dimension matches the dimension of $Z$ without compromising identification provided $Z$ has higher dimension than $U$; although a formal approach to operationalize such coarsening is currently lacking in the literature: see \cite{Shi2019MultiplyRC} for additional discussion in the categorical unmeasured confounding case. However, we note that formally, the double robustness result of Theorem~\ref{thm:eif}(3) does not strictly require either Assumption~\ref{asm:completeness1.2} or \ref{asm:completeness2.2} to hold, as uniqueness of either confounding bridge function is not necessary for the efficient influence function to be an unbiased moment equation, only that at least one of the bridge functions satisfies the corresponding integral equation at the true data generating law. For inference, in principle, one may wish for greater robustness to estimate $\psi$ under a nonparametric model for both nuisance functions $h$ and $q$; this is in fact the approach taken by \cite{ghassami2022minimax} and \cite{kallus2021causal}, who recently adopted the efficient influence function~\eqref{eq:eif} based on an earlier preprint of the current paper, to develop an adversarial inference framework which accommodates reproducing kernel Hilbert space, neural networks and other nonparametric or machine learning estimators of $h$ and $q$ \yf{(see also \cite{singh2020kernel,mastouri2021proximal,kompa2022deep} who propose nonparametric methods to evaluate average treatment effect with proximal causal inference)}. In order for the resulting estimator of the causal effect to be regular, these works require that both nuisance functions can be estimated at rates faster than $n^{-1/4}$ which may not be feasible where $L$ is of moderate to high dimension or depending on the extent to which the integral equations defining either confounding bridge function are ill-posed.}
It is therefore of interest to develop a doubly robust estimation approach that a priori posits low-dimensional working models for $h$ and $q$, but however is guaranteed to deliver valid inferences about $\psi$
provided that one but not necessarily both low dimensional models used to estimate $h$ and $q$ can be specified correctly. In this paper, we focus primarily on developing such a doubly robust approach much in the spirit of \cite{Scharfstein1999}, and demonstrate its ability to resolve concerns about ill-posedness and partial model misspecification.  In order to describe our proposed doubly robust approach, consider the following two semiparametric models that place parametric restrictions on different components of the observed data
likelihood while allowing the rest of the likelihood to remain unrestricted:

\noindent $\mathcal M_1$: $h(W,A,X)$ is assumed to be correctly specified and suppose Assumptions~\ref{asm:consistency}, \ref{asm:condindy}-\ref{asm:completeness}, and \ref{asm:completeness2.2} hold;\\
$\mathcal M_2$: $q(Z,A,X)$ is assumed to be correctly specified and suppose Assumptions~\ref{asm:consistency}, \ref{asm:condindy}-\ref{asm:positivity2}, and  \ref{asm:completeness1.2}-\ref{asm:completeness2} hold.

Our proposed doubly robust locally efficient estimator thus entails modeling both $h$ and $q$, however, as we will show below only one of these models will ultimately need to be correct for valid inferences about the ATE, without knowing a priori which model is correct. Directly modeling the outcome and treatment confounding bridge functions is a simple and practical regularization strategy that obviates the need to solve complicated integral equations that are well-known to be ill-posed and therefore to admit unstable solutions in practice 
\yifan{\citep{ai2003efficient,newey2003instrumental,hall2005nonparametric,horowitz2011applied}}.
\yifan{Illposedness refers to the discontinuity of the operator $T^{-1}$ (or $T'^{-1}$), and therefore $T^{-1}\hat r$ (or $T'^{-1} \hat r$) might not converge to $T^{-1} r$ (or $T'^{-1} r$) under a given norm even if $\hat r$ converges to $r$ with respect to a given norm for $r$ in the range of $T$ (or $T'$).}
In the Appendix~\ref{sec:E}, we also consider semiparametric efficient inference in submodels $\mathcal M_1$ and $\mathcal M_2$, respectively.

 Interestingly, the second part of Theorem~\ref{thm:if} in the Appendix~\ref{sec:E} suggests that, although $q$ solves an integral Equation~\eqref{eq:condexpect2} involving the reciprocal of the propensity score function $1/f(A|W,X)$, surprisingly, inferences about a model for $q$ may be obtained without the need for estimating the propensity score provided that an influence function for $t$ is used as an estimating equation. In other words, influence function based estimation of $t$ is fully robust to misspecification of the propensity score \yifan{as it implicitly uses a nonparametric estimator of the propensity score.}
The derived influence functions in Theorem~\ref{thm:if} motivate various 
 estimating equations for the corresponding confounding bridge functions $h(W,A,X;b)$ and $q(Z,A,X;t)$, respectively. For instance, if $b$ and $t$ are of dimensions $p_x+p_z+2$ and $p_x+p_w+2$, a natural choice of estimating equations are
\begin{align}\label{est:1}
\PP_n \left\{[Y-h(W,A,X;b)](1,Z,A,X)^T\right\}=0,\\
\PP_n \left\{(-1)^{1-A}q(Z,A,X;t)(1,W,A,X)^T-(0,(0)_{p_w},1,(0)_{p_x})^T\right\}=0,\label{est:2}
\end{align}
which correspond to $m(Z,A,X)=(1,Z,A,X)^T$ and $n(W,A,X)=(-1)^{1-A}(1,W,A,X)^T$ defined in Theorem~\ref{thm:if},
where $p_w$ is the dimension of $W$, $p_z$ is the dimension of $Z$, $p_x$ is the dimension of $X$, and the corresponding estimators are denoted by $\widehat h$ and $\widehat q$, respectively. 
\yifan{Linearity in $W$ in $h$ is essentially implied by a proportional relationship between the confounding effects of $U$ on $Y$ and $W$, although it does not necessarily imply the latter; likewise, linearity in $Z$ in $q$ is implied by (but does not necessarily imply) a logit model between $A$ and $U$ under certain conditions about the distribution of $U$, e.g., Gaussian $U$.}
Such estimators can then be used to construct a corresponding substitution estimator of $\psi$.
Specifically, proximal outcome regression (POR) and proximal inverse probability weighted (PIPW) estimators are defined as
\begin{align*}
\widehat \psi_{POR} =& \PP_n \left\{\widehat h(W,1,X) -\widehat h (W,0,X)\right\},\\
\widehat \psi_{PIPW} =& \PP_n \left\{(-1)^{1-A} \widehat q(Z,A,X) Y\right\},
\end{align*}
respectively. Note that as established in the Appendix~\ref{sec:E}, construction of a locally efficient estimator of $h$ under $\cM_1$ and $q$ under $\cM_2$, requires correct specification of additional components of the observed data law beyond $q$ and $h$. In principle, a locally efficient estimator of $h$ under $\cM_1$ may then be used to construct a locally efficient estimator of $\psi$ under $\cM_1$ by the plug-in principle \yifan{\citep{bickel2003}}. However, as pointed out by \cite{stephens2014locally}, such additional modeling efforts seldom deliver the anticipated efficiency gain when, as in the current case, they involve complex features of the observed data distribution which are difficult to model correctly, and thus the potential prize of attempting to attain semiparametric local efficiency for $h$ and $q$ may not always be worth the chase. Furthermore, as our primary objective is to obtain doubly robust locally efficient inferences about $\psi$, we show next that such a goal can be attained without necessarily obtaining a locally efficient estimator of $h$ under $\cM_1$ and $q$ under $\cM_2$. For these reasons, optimal index functions $m_{eff}(Z,A,X)$ and $n_{eff}(W,A,X)$ given in the Appendix~\ref{sec:E} are not considered for implementation. Instead, the simpler, albeit inefficient, estimators $\widehat h$ and $\widehat q$ are used in construction of a doubly robust locally efficient estimator of $\psi$ given in the next theorem.
\begin{theorem}\label{thm:estimator}
Under standard regularity conditions given in the Appendix~\ref{sec:F}, 
\begin{align*}
\widehat \psi_{PDR} = \PP_n\left\{ (-1)^{1-A} \widehat q(Z,A,X) [Y- \widehat h(W,A,X)] +\widehat h(W,1,X) -\widehat h(W,0,X) \right\},
\end{align*}
is a consistent and asymptotically normal estimator of $\psi$ under the semiparametric union model $\mM_{union} =\mM_1 \cup \mM_2$. Furthermore,  $\widehat \psi_{PDR}$ is semiparametric locally efficient in $\cM_{sp}$ at the intersection submodel $\mM_{int} =\mM_1 \cap \mM_2$ where Assumption~\ref{asm:bound} also holds.
\end{theorem}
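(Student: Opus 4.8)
The plan is to view $\widehat\psi_{PDR}$ jointly with the nuisance estimators as the solution of a single stacked just-identified estimating-equation system and then invoke standard Z-estimation theory. Writing $\theta=(\psi,b^{T},t^{T})^{T}$ and letting $U_\psi$ denote the summand of $\widehat\psi_{PDR}$ set to have mean zero, I would stack $U_\psi$ together with the two nuisance equations~\eqref{est:1} and~\eqref{est:2}. Under the stated regularity conditions these two just-identified equations possess well-defined population limits $b^{*},t^{*}$, obtained as unique zeros of their population analogues, even when the corresponding working model is misspecified. Consistency of $\widehat\psi_{PDR}$ under $\mM_{union}$ then reduces to the double robustness already established in Theorem~\ref{thm:eif}(3): under $\mM_1$, correct specification of $h$ together with~\eqref{est:1} forces $h(\cdot;b^{*})$ to solve~\eqref{eq:condexpec}, so $\E[EIF(q(\cdot;t^{*}),h(\cdot;b^{*});\psi)]=0$ at the true $\psi$ regardless of $t^{*}$; under $\mM_2$ the symmetric statement holds with the roles of $h$ and $q$ interchanged. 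Either way the population $\psi$-equation is unbiased at the truth, so $\widehat\psi_{PDR}\to\psi$ in probability.

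For asymptotic normality I would expand $\PP_n U(\widehat\theta)=0$ around $\theta^{*}$. Assuming smoothness of $h(\cdot;b)$ and $q(\cdot;t)$ in their finite-dimensional parameters, invertibility of the expected Jacobian $A=\E[\partial U(\theta^{*})/\partial\theta^{T}]$, and integrable envelopes for the relevant classes, the usual sandwich argument yields $\sqrt n(\widehat\theta-\theta^{*})=-A^{-1}\sqrt n\,\PP_n U(\theta^{*})+o_p(1)$. Hence $\widehat\psi_{PDR}$ is $\sqrt n$-consistent and asymptotically normal, with influence function given by the first coordinate of $-A^{-1}U(\theta^{*})$ and a consistent variance estimator supplied by the empirical sandwich formula.

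To obtain local efficiency at $\mM_{int}$ I would exploit the doubly robust structure: when both bridge functions are correct the off-diagonal Jacobian blocks linking the $\psi$-equation to the nuisance parameters vanish, i.e.\ $\E[\partial U_\psi/\partial b]=0$ and $\E[\partial U_\psi/\partial t]=0$. The first follows from the reweighting identity $\E[(-1)^{1-A}q(Z,A,X)g(W,A,X)]=\E[g(W,1,X)-g(W,0,X)]$, valid whenever $q$ solves~\eqref{eq:condexpect2}, applied to $g=\partial h/\partial b$; the second follows from $\E[Y-h(W,A,X)\mid Z,A,X]=0$, valid whenever $h$ solves~\eqref{eq:condexpec}, together with the fact that $\partial q/\partial t$ is a function of $(Z,A,X)$. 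Consequently the estimation of $b$ and $t$ contributes nothing to first order, the influence function of $\widehat\psi_{PDR}$ collapses to $U_\psi(\psi,b^{*},t^{*})=EIF(\psi)$ of~\eqref{eq:eif}, and its asymptotic variance equals $\E[EIF^2(\psi)]$, the semiparametric bound in $\cM_{sp}$ from Theorem~\ref{thm:eif}(2).

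The main obstacle I anticipate is not the asymptotic bookkeeping but the careful characterization of the nuisance limits under partial misspecification. One must verify that the just-identified equations~\eqref{est:1} and~\eqref{est:2} retain unique, well-separated zeros off their respective models (a rank/identifiability condition on the index functions $m$ and $n$), and---more delicately---that on $\mM_1$ the finite-dimensional zero $b^{*}$ genuinely delivers a solution of the Fredholm equation~\eqref{eq:condexpec}, and symmetrically that on $\mM_2$ the zero $t^{*}$ solves~\eqref{eq:condexpect2}, so that Theorem~\ref{thm:eif}(3) is applicable. This linkage between the low-dimensional moment conditions and the ill-posed integral equations, which leans on the correct-specification clauses built into $\mM_1$ and $\mM_2$ and the accompanying completeness conditions, is the crux of the argument.
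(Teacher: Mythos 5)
Your proposal is correct and follows essentially the same route as the paper's Appendix~\ref{sec:F} proof: a stacked M-estimation system for $(\psi,b,t)$ with the regularity conditions supplying unique (pseudo-true) zeros, consistency via the two double-robustness identities ($\E[Y-h(W,A,X)\mid Z,A,X]=0$ when $h$ solves \eqref{eq:condexpec}, and the reweighting identity $\E[(-1)^{1-A}q(Z,A,X)g(W,A,X)]=\E[g(W,1,X)-g(W,0,X)]$ when $q$ solves \eqref{eq:condexpect2}), and local efficiency from a Taylor expansion in which the nuisance-derivative term $\E[\partial EIF(\psi)/\partial\zeta]$ vanishes at $\mM_{int}$, so the influence function collapses to \eqref{eq:eif} and the variance attains the bound of Theorem~\ref{thm:eif}(2). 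Your explicit computation of the vanishing off-diagonal Jacobian blocks is precisely what the paper compresses into ``following from the proof of double robustness,'' and the identifiability linkage you flag as the crux is exactly what the paper's regularity condition 2 ($\E[H(\phi)]\neq 0$ for $\phi\neq\phi_0$) assumes.
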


\section{Connection with AIPW estimator under exchangeability}\label{sec:connection}

It is interesting to relate the influence function of the ATE in the proximal inference framework to the standard augmented inverse-probability-weighted (AIPW) estimator of \cite{robinsetal1994} under exchangeability. In fact, \yf{suppose that $U=\emptyset$}, then Equation~\eqref{eq:condexpec} reduces to
\begin{align*}
h(W,a,X) = \E[Y|W,A=a,X],
\end{align*}
and Equation~\eqref{eq:condexpect2} reduces to
\begin{align*}
q(Z,a,X) = \frac{1}{f(A=a|Z, X)}.
\end{align*}
Therefore, the efficient influence function of $\psi$,
\begin{align*}
(-1)^{1-A} q(Z,A,X) [Y- h(W,A,X)] + h(W,1,X) - h(W,0,X) - \psi,
\end{align*}
becomes
\begin{align}\label{eq:eifnou}
 \frac{(-1)^{1-A}}{f(A|Z,X)} \{Y- \E[Y|W,A,X]\} + \E[Y|W,A=1,X] -\E[Y|W,A=0,X] - \psi.
\end{align}
Equation~\eqref{eq:eifnou} has the form of the efficient influence function of the ATE under excheangeabiltiy given by \cite{robinsetal1994}. Therefore, the AIPW estimator can be viewed as a special case of the proposed efficient influence function under exchangeability. In this vein, the proposed proximal doubly robust estimator can be viewed as a generalization of the standard doubly robust estimator \citep{robinsetal1994} to account for potential unmeasured confounding.

\section{Numerical experiments}\label{sec:numeric}

In this section, we report simulation studies comparing various estimators we have proposed under varying degree of model misspecification.

\subsection{Simulation setup}

We first describe the data generating mechanism. 
Covariates $X$ are generated from a multivariate normal distribution $N(\Gamma_x,\Sigma_x)$. We then generate $A$ conditional on $X$ from a Bernoulli distribution. 

Next, we generate $Z,W,U$ from the following multivariate normal distribution,
\[
\left( Z,W,U\right) |A,X\sim MVN\left( \left(
\begin{array}{c}
\alpha _{0}+\alpha _{a}A+\alpha _{x}X \\
\mu _{0}+\mu _{a}A+\mu _{x}X \\
\kappa _{0}+\kappa _{a}A+\kappa _{x}X
\end{array}
\right) ,\Sigma=\left(
\begin{array}{ccc}
\sigma _{z}^{2} & \sigma _{zw} & \sigma _{zu} \\
\sigma _{zw} & \sigma _{w}^{2} & \sigma _{wu} \\
\sigma _{zu} & \sigma _{wu} & \sigma _{u}^{2}
\end{array}
\right) \right).
\]

Finally, $Y$ is generated from $\E\left( Y|W,U,A,Z,X\right)$ plus a normal noise $N(0,\sigma_y^2)$ with
\begin{eqnarray*}
\E\left( Y|W,U,A,Z,X\right)  &=&\E\left( Y|U,A,Z,X\right) +\omega \left\{
W-\E\left( W|U,A,Z,X\right) \right\}  \\
&=&\E\left( Y|U,A,X\right) +\omega \left\{ W-\E\left( W|U,X\right) \right\}  \\
&=&b_{0}+b_{a}A+b_{x}X+b_{w}\E\left( W|U,X\right) +\omega \left\{ W-\E\left( W|U,X\right)
\right\}  \\
&=&b_{0}+b_{a}A+b_{x}X+\left( b_{w}-\omega \right) \E\left( W|U,X\right) +\omega W,
\end{eqnarray*}
where
\[
\E\left( W|U,X\right) =\E\left( W|U,A,Z,X\right) =\mu _{0}+\mu _{x}X+\frac{\sigma _{wu}}{
\sigma _{u}^{2}}\left( U-\kappa _{0}-\kappa_{x}X\right).
\]
The parameters are set as follows:
\begin{itemize}
\setlength\itemsep{1em}

\item $\Gamma_x=(0.25,0.25)^T$, $\Sigma_x=\left(
\begin{array}{ccc}
\sigma_x^2 & 0\\
0 & \sigma_x^2\\
\end{array}
\right)$, $\sigma_x=0.25$.

\item $\Pr \left(A=1|X\right)=\left[1+ \exp\{(0.125,0.125)^TX\}\right]^{-1}$.


\item $\alpha_0= 0.25$, $\alpha_a= 0.25$, $\alpha_x= (0.25,0.25)^T$.

\item $\mu_0= 0.25$, $\mu_a= 0.125$, $\mu_x= (0.25,0.25)^T$.

\item $\kappa_0= 0.25$, $\kappa_a= 0.25$, $\kappa_x= (0.25,0.25)^T$.

\item $\Sigma=\left(
\begin{array}{ccc}
1 & 0.25 & 0.5 \\
0.25 & 1 & 0.5 \\
0.5 & 0.5 & 1
\end{array}
\right), \sigma_y=0.25.$

\item $b_0= 2$, $b_a= 2$, $b_x= (0.25,0.25)^T$, $b_w=4$, $\omega=2$. 
\end{itemize}

As shown in the Appendix~\ref{sec:H}, the above data generating mechanism is compatible with the following models of $h$ and $q$ :
\begin{align}
h(W,A,X;b)&=b_{0}+b_{a}A+b_{w}W+b_{x}X, \label{eq1}\\
q(Z,A,X,t)&=1+\exp \left\{ \left( -1\right) ^{1-A}t_{0}+\left( -1\right)
^{1-A}t_{z}Z+\left( -1\right) ^{1-A}t_{a}A + \left( -1\right) ^{1-A}t_{x}X\right\},\label{eq2}
\end{align}
respectively, where $t_0=0.25, t_z=-0.5, t_a=-0.125,$ and $t_x= (0.25,0.25)^T$. All remaining parameter values are listed in the Appendix~\ref{sec:H}.

\subsection{Various estimators}

We implemented our three proposed proximal analogs of outcome regression (proximal OR), inverse probability weighted (proximal IPW), and doubly robust estimators  (proximal DR) of the causal effect. Confounding bridge functions $h$ and $q$ were estimated by solving estimating equations \eqref{est:1} and \eqref{est:2}, to yield $\widehat h$ and $\widehat q$, respectively under models ~\eqref{eq1} and \eqref{eq2}.
The resulting proximal OR, proximal IPW, and proximal DR estimators are denoted as $\widehat \psi_{POR}$, $\widehat \psi_{PIPW}$, $\widehat \psi_{PDR}$, respectively.
 Standard errors of estimators are computed using an empirical sandwich estimator obtained from standard theory of M-estimation \citep{Stefanski2002M}.
Furthermore, we compared proximal estimators to a standard doubly robust estimator, which is in principle, valid only under exchangeability.
The standard doubly robust estimator is given by
\begin{align*}
\widehat \psi_{DR} = \PP_n \left\{ \frac{(-1)^{1-A}}{\widehat f(A|L)} \{Y- \widehat \E[Y|L,A]\} + \widehat \E[Y|L,A=1] -\widehat \E[Y|L,A=0] \right\},
\end{align*}
where $\widehat f(A|L)$ and $\widehat \E[Y|L,A]$ are estimated via standard logistic regression and linear regression, respectively.

We consider four scenarios to investigate operational characteristics of our various estimators in a range of settings.
Following \cite{kang2007demystifying}, we evaluate the performance of the proposed estimators in situations where either or both confounding bridge functions are mis-specified by considering a model using a transformation of observed variables. 
In particular, covariates $X,W$ are used for $h$ and $X,Z$ are used for $q$ in the first scenario, i.e., the models are both correctly specified. 
In the second scenario, we use $W^*=|W|^{1/2}+3$ instead of $W$ for estimation of $h$, i.e., the outcome confounding bridge function is mis-specified. 
In the third scenario, we use  $Z^*=|Z|^{1/2}+3$ instead of $Z$ for estimation of $q$, i.e., the treatment confounding bridge function is mis-specified.
In the fourth scenario, we consider the case in which both confounding bridge functions are mis-specified, i.e., we use 
$W^* = |W|^{1/2}+1$ and $Z^* = |Z|^{1/2}+1$ to estimate $h$ and $q$, respectively. 
Moreover, for each scenario, transformed variables are used to estimate outcome regression and treatment process respectively in standard doubly robust estimator.
We consider sample size $n=2000$ and each simulation is repeated \yifan{500} times.

\subsection{Numerical results}

\begin{table}[h]
\begin{center}
\caption{\label{simu1}
Simulation results: absolute bias ($\times 10^{-2}$) and MSE ($\times 10^{-2}$)}
\begin{tabular}{cccccc}
\noalign{\smallskip}
\noalign{\smallskip}
 & & $\widehat \psi_{DR}$ & $\widehat \psi_{POR}$ & $\widehat \psi_{PIPW}$ & $\widehat \psi_{PDR}$ \\
\noalign{\smallskip}
\multirow{2}{*}{Scenario~1}  & Bias &  9.9  &  0.4  &  0.5  &  0.5 \\
                          & MSE &  1.1  &   0.6  &  0.7 & 0.7   \\
\noalign{\smallskip}
\multirow{2}{*}{Scenario~2} & Bias  &  23.4 &  64.0  &   0.5  & 0.1 \\
                & MSE &  6.5   &  47.5  & 0.7 & 3.0 \\
\noalign{\smallskip}
\multirow{2}{*}{Scenario~3} & Bias  & 17.3 &  0.4  &  20.4  & 0.2  \\
                & MSE  &  3.2  &  0.6  &  4.5 & 0.7  \\
                \noalign{\smallskip}
\multirow{2}{*}{Scenario~4} & Bias  & 43.2 &  30.0  &  13.7  & 34.5  \\
                & MSE  &  19.9  &  19.2  &  2.3 & 23.5 \\
\noalign{\smallskip}
\end{tabular}
\end{center}
\source{\small{$\widehat \psi_{DR}$ refers to the standard doubly robust estimator; $\widehat \psi_{POR}$ refers to the proximal outcome regression approach; $\widehat \psi_{PIPW}$ refers to the proximal treatment confounding bridge approach; $\widehat \psi_{PDR}$ refers to the proximal doubly robust estimator.}}
\end{table}

\begin{table}[h]
\begin{center}
\caption{\label{simu2}
Simulation results: coverage ($\%$) and average length ($\times 10^{-2}$)}
\begin{tabular}{cccccc}
\noalign{\smallskip}
\noalign{\smallskip}
 & & $\widehat \psi_{DR}$ & $\widehat \psi_{POR}$ & $\widehat \psi_{PIPW}$ & $\widehat \psi_{PDR}$ \\
\noalign{\smallskip}
\multirow{2}{*}{Scenario~1}  & Coverage &  25.4   &  95.2  &  95.8  &  95.8  \\
                          & Length &  14.4  &  31.9  &  33.1 & 33.1   \\
\noalign{\smallskip}
\multirow{2}{*}{Scenario~2} & Coverage  &  37.6 &  14.0  &   95.8  & 98.6  \\
                & Length &  41.2  &  71.2  &  33.1 & 69.5  \\
\noalign{\smallskip}
\multirow{2}{*}{Scenario~3} & Coverage  & 0.2 &  95.2  &   26.0  & 96.6  \\
                & Length  &  15.6  &  31.9  &  33.6 &  34.2  \\
                \noalign{\smallskip}
\multirow{2}{*}{Scenario~4} & Coverage  & 3.0 &  83.0  &  72.2  & 90.8  \\
                & Length  &  43.8  &  130.3  &  37.9 & 152.1 \\
\noalign{\smallskip}
\end{tabular}
\end{center}
\source{}
\end{table}

Table~\ref{simu1} summarizes simulation results. As expected, the proximal IPW estimator has small bias in Scenarios~1 and 2, and the proximal OR estimator has small bias in Scenarios~1 and 3. 
The proximal doubly robust estimator has small bias in the first three scenarios. 
In Scenario~4, the doubly robust proximal estimator has similar bias to both proximal IPW and proximal OR estimators due to model misspecification. 
In addition, as expected, the standard doubly robust estimator is severely biased in all four scenarios due to unmeasured confounding.

Table~\ref{simu2} presents coverage and average length of 95\% confidence intervals. In agreement with semiparametric theory, proximal OR yields the narrowest confidence intervals with nominal coverage when $h$ is correctly specified, i.e., Scenarios~1 and 3. Likewise proximal IPW confidence intervals have correct coverage in Scenarios~1 and 2, while proximal doubly robust approach has nominal coverage in all first three scenarios.

In the Appendix~\ref{sec:addi}, we also report additional simulation results for the case where $U$ is not a confounder, and sensitivity analysis on violation of Assumptions~4 and 5, and on the strength of dependence between $Z$ and $W$ given $X$ and $A$.

\section{Data analysis}\label{sec:real}

	In this section, we illustrate the proposed semiparametric proximal estimators of the ATE in a data application considered in \cite{tchetgen2020}. The Study to Understand Prognoses
and Preferences for Outcomes and Risks of Treatments (SUPPORT) to evaluate the effectiveness of right heart catheterization (RHC) in the intensive care unit of critically ill patients \citep{5c6af36c0fb64cfcbb482d75c2bc7ff1}. These data have been re-analyzed in a number of papers in causal inference literature under a key exchangeability condition on basis of measured covariates; including  \cite{tan2006distributional,2015biasreduce,tan2019model,tan2019regularized,cui2019selective}. 

	We consider the effect of RHC on 30-day survival. Data are available on 5735 individuals, 2184 treated and 3551 controls. In total, 3817 patients survived and 1918 died within 30 days. The outcome $Y$ is the number of days between admission and death or censoring at day 30. Similar to \cite{Hirano2001} and \cite{tchetgen2020}, we include 71 baseline covariates to adjust for potential confounding,
including demographics (such as age, sex, race, education, income, and insurance status), estimated
probability of survival, comorbidity, vital signs, physiological status, and functional status. See Table 1 of \cite{Hirano2001} for further details. 

In order to address a general concern that patients either self-selected or were selected by their physician to take the treatment based on information not recorded in the data set; we performed an analysis using the proximal framework and methods developed in this paper.
Ten variables measuring the patients' overall physiological status were measured from a blood test during the initial 24 hours in the intensive care unit. These variables might be subject to substantial measurement error
and as a single snapshot of the underlying physiological state over time may be viewed as potential confounding proxies.
Among those ten physiological status measures, pafi1, paco21, ph1, and
hema1 are strongly correlated with both the treatment and the outcome.
As in \cite{tchetgen2020},  we allocated $Z$ = (pafi1, paco21) and $W$ = (ph1, hema1), and collected the remaining 67 variables in $X$. 
\yifan{The tests for pairwise partial correlations \citep{kim2015ppcor} between $Z$ and $W$ given $X$ and $A$ are all significant at the 0.05 level.}
We specified the outcome counfounding bridge function according to Equation~\eqref{eq1}, and specified the model given by  Equation~\eqref{eq2} for the treatment confounding bridge function, including interaction terms $Z$-$A$ and $X$-$A$ to improve goodness of fit.
 
Table~\ref{real} presents point estimates and corresponding 95\% confidence intervals for the average treatment effect.
 The proximal doubly robust estimator, proximal IPW estimator, and proximal outcome regression estimator are much larger than the standard doubly robust estimator.
Concordance between the three proximal estimators offers confidence in modeling assumptions, indicating that RHC may have a more harmful effect on 30 day-survival among critically ill patients admitted into an intensive care unit than previously reported.

 \yf{We highlight that the proximal causal inference framework is an alternative to traditional methods: instead of requiring no unmeasured confounding, we require validity of the treatment- and outcome-inducing confounding proxies.
If the choice of $Z$ and $W$ does not meet Assumptions~\ref{asm:condindy} or \ref{asm:condindw}, the proximal causal estimators can be biased. This is a potential limitation of the proximal causal inference framework. 
 Therefore, the selection of valid treatment and outcome proxies should be based on reliable subject matter knowledge because Assumptions~\ref{asm:condindy}, \ref{asm:condindw}, and completeness conditions must be met.} In the Appendix~\ref{sec:addi}, we perform a sensitivity analysis in which a variable is removed from $Z$ (pafi1 or paco21) or $W$ (ph1 or hema1). The results suggest that the proxies may not be equally relevant to the potential source of unmeasured confounding, although the totality of evidence is well aligned with the results given in Table~\ref{real}.

\begin{table}[h]
\begin{center}
\caption{\label{real}
Treatment effect estimates (standard deviations) and 95\% confidence intervals of the average treatment effect}
\begin{tabular}{ccccc}
\noalign{\smallskip}
\noalign{\smallskip}
 & $\widehat \psi_{DR}$ & $\widehat \psi_{POR}$ & $\widehat \psi_{PIPW}$ & $\widehat \psi_{PDR}$ \\
\noalign{\smallskip}
Treatment effects (SDs) &  -1.17 (0.32)   &  -1.80 (0.44)  &  -1.72 (0.30)  &   -1.66 (0.43)\\
 95\% CIs &  (-1.79,-0.55)  &   (-2.65,-0.94)  &  (-2.30,-1.14) & (-2.50,-0.83)   \\
\noalign{\smallskip}
\end{tabular}
\end{center}
\end{table}

\section{Discussion}\label{sec:discussion}

In this paper, we have provided a new condition for nonparametric proximal identification of the population average treatment effect. We have also derived the semiparametric locally efficiency bound for estimating the corresponding identifying functional under three key semiparametric models: (i) one in which the observed data law is solely restricted by a parametric model for the outcome confounding bridge function; (ii) one in which the observed data law is solely restricted by a parametric model for the treatment confounding bridge function; and (iii) a model in which both confounding bridge functions are unrestricted. For inference, we have provided a large class of doubly robust locally efficient estimators for the ATE, which attain the efficiency bound for the model given in (iii) at the intersection submodel where both (i) and (ii) hold. In addition, we have also derived analogous results for the average treatment effect on the treated.
Our approach was illustrated via
 simulation studies and a real data application. 
Our paper contributes to the literature on doubly robust functionals as our semiparametric proximal causal inference approach can be viewed as a generalization of standard doubly robust estimators \citep{robinsetal1994} to account for potential unmeasured confounding by leveraging a pair of proxy variables.

The proposed methods may be improved or extended in several directions.
\yf{
Our proximal causal inference framework relies on the validity of treatment- and outcome-inducing confounding proxies. 
When Assumption~\ref{asm:condindy} or \ref{asm:condindw} is violated,  the proximal causal inference estimators can be biased even if exchangeability on the basis of measured covariates holds. Therefore, future research should study methods to carefully sort out proxies when domain knowledge is lacking.
} 
\yf{Moreover, because using semiparametric models as an implicit form of regularization involves modeling assumptions on bridge functions which the data analyst might sometimes have little a priori understanding, we caution that using the semiparametric model restrictions to identify bridge functions might be disadvantageous in the absence of subject-matter knowledge \citep{rotnitzky1997}.} 
\yf{In addition, given that models for the bridge functions might sometimes be difficult to postulate, it is important to further develop model checking tools for bridge functions in future research.} 

Another important direction is to characterize the minimax rate of estimation of the average treatment effect in settings where the smoothness of nuisance parameters is very low and/or the dimension of proxies and other observed covariates is high, so that root-$n$ estimation rates of the causal effect functional may no longer be possible. Upon establishing new lower bound rates, we plan to construct estimators to achieve minimax optimal rates by leveraging the efficient influence function obtained in the current paper as well as higher order influence functions needed to further correct higher order biases thus effectively extending the work of Robins and colleagues \citep{robins2008HOIF,robins2017minimax}. Other directions for future research include proxy selection and validation methods, as well as partial identification results in case completeness conditions needed for point identification are only approximately true.

\newpage
\appendix
\begin{center}
\LARGE Appendix
\end{center}

\section{Proof of Theorem~\ref{thm:identification}}

\begin{proof}
(1) We first show that
\begin{align*}
\frac{1}{f(A=a|U=u,X=x)} = \int q(z,a,x)dF(z|U=u,A=a,X=x).
\end{align*}
To show this, note that $W \perp (Z, A) | U, X$ and
\begin{align*}
\E[q(Z,a,x)|W=w,A=a, X=x] = \frac{1}{f(A=a|W=w, X=x)},
\end{align*}
so we have that
\begin{align*}
 & \int \frac{1}{f(A=a|u,x)} dF(u|w,a,x)\\
= & \int \frac{1}{f(A=a|u,w,x)} dF(u|w,a,x)\\
=& \frac{1}{f(A=a|w,x)}\\
= & \E[q(Z,a,x)|W=w,A=a, X=x]\\
=& \int \int q(z,a,x)dF(z|u,w,a,x)dF(u|w,a,x)\\
= & \int \int q(z,a,x)dF(z|u,a,x)dF(u|w,a,x).
\end{align*}

Furthermore, by the completeness assumption~\ref{asm:completeness2}(1), we have
\begin{align*}
\frac{1}{f(A=a|U,x)} = \int q(z,a,x)dF(z|U,a,x),
\end{align*}
almost surely. 
Thus, by $Y \perp Z | (U, A, X)$,
\begin{align*}
 &\E[I(A=a)Yq(Z,a,X)|U,X=x] ~~\yf{(\int_{\mathcal Y,\mathcal Z} yq(z,a,x)dF(y,z,a|U,x) \text{~if~} A \text{~is~continuous})}\\
= &\E[Yq(Z,a,X)|U,A=a,X=x]f(A=a|U,X=x)\\
= &\E[Y|U,A=a,X=x]\E[q(Z,a,X)|U,A=a,X=x]f(A=a|U,X=x)\\
= & \E[Y|U,A=a,X=x].
\end{align*}

(2) By Assumptions~\ref{asm:consistency} and \ref{asm:condrand},
\begin{align*}
\E[Y|U,A=a,X=x] = &\E[Y(a)|U,X=x].
\end{align*}

Therefore, given that Equation~\eqref{eq:u3} holds,
\begin{align*}
& \E[Y(a)]\\
= &\int_\cX \E[Y(a)|X=x]dF(x)\\
= & \int_{\cX} \int I(\tilde a=a)q(z,a,x)y dF(y,z,\tilde a|x)dF(x) ~~\yf{(\int_{\mathcal X}\int_{\mathcal Y,\mathcal Z} yq(z,a,x)dF(y,z,a|x)dF(x) \text{~if~} A \text{~is~continuous})},
\end{align*}
which completes the proof.
\end{proof}

\section{Existence of solutions to Equations~\eqref{eq:condexpec} and \eqref{eq:condexpect2}}\label{sec:existence}

We consider the singular value decomposition (\cite{carrasco2007linear}, Theorem 2.41) of compact operators to characterize conditions for existence of a solution to Equations~\eqref{eq:condexpec} and \eqref{eq:condexpect2}.
The former one is the same as that of \cite{miao2018identifying}. 
Let $L^2\{F(t)\}$ denote the space of all square integrable functions of $t$ with respect to a cumulative distribution function
$F(t)$, which is a Hilbert space with inner product $\langle g_1, g_2 \rangle=\int g_1(t)g_2(t)dF(t)$.
Similar to Assumption~\ref{asm:bound}, let $T_{a,x}$ denote the operator: $L^2\{F(w|a,x)\}\rightarrow L^2\{F(z|a,x)\}$, $T_{a,x}h=\E[h(W)|z,a,x]$ and let $(\lambda_{a,x,n},\varphi_{a,x,n},\phi_{a,x,n})_{n=1}^{\infty}$ denote
a singular value decomposition of $T_{a,x}$.
Also let $T'_{a,x}$ denote the operator: $L^2\{F(z|a,x)\}\rightarrow L^2\{F(w|a,x)\}$, $T'_{a,x}q=\E[q(Z)|w,a,x]$ and let $(\lambda'_{a,x,n},\varphi'_{a,x,n},\phi'_{a,x,n})_{n=1}^{\infty}$ denote
a singular value decomposition of $T'_{a,x}$.
Furthermore, we assume the following regularity conditions:\\
1)$\int\int f(w|z,a,x)f(z|w,a,x) dwdz < \infty$;\\
2)$\int \E^2[Y|z,a,x] f(z|a,x) dz < \infty$;~~
2')$\int f^{-2}(a|w,x)f(w|a,x) dw < \infty$;\\
3)$\sum_{n=1}^{\infty} \lambda_{a,x,n}^{-2}|\langle \E[Y|z,a,x],\phi_{a,x,n}\rangle |^2<\infty$;\\
3')$\sum_{n=1}^{\infty} \lambda_{a,x,n}^{'-2}|\langle f^{-1}(a|w,x),\phi'_{a,x,n}\rangle |^2<\infty$.

Given $\E(Y|z,a,x)$, $f(w|z,a,x)$, or $f(a|w,x)$, $f(z|w,a,x)$, the solution to \eqref{eq:condexpec} or \eqref{eq:condexpect2} exists if Assumption~\ref{asm:completeness1.2} and conditions (1)(2)(3) or Assumption~\ref{asm:completeness2.2} and conditions (1)(2')(3') hold, respectively.
The proof follows immediately from Picard's theorem \citep{kress1989linear}.

\section{Theorem~\ref{ap:thm} and its proof}\label{sec:C}

\begin{theorem}\label{ap:thm}
\yifan{Suppose Assumptions~\ref{asm:condindw}, \ref{asm:positivity2} hold and that there exists a bridge function $q'(z, a, x)$} such that
${1}/{f(A=a|U,X)} = \int q'(z,a,X)dF(z|U,A=a,X)$ almost surely. Furthermore, suppose that for any square-integrable function $g$ and for any $a,x$, $\E [g(Z)|W, A=a, X=x] = 0$ almost surely if and only if $g(Z) = 0$ almost surely, we have that
Equation~\eqref{eq:condexpect2} has a unique solution $q'$.
\end{theorem}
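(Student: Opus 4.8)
The plan is to establish the statement in two stages: first show that the hypothesized bridge function $q'$ is itself a solution of Equation~\eqref{eq:condexpect2}, and then invoke the stated completeness of the law of $Z$ given $W$, $A=a$, $X=x$ to rule out any competing solution. The existence stage closely parallels the argument used in the proof of Theorem~\ref{thm:identification}(1), so I would reuse that template rather than reinvent it.

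For the existence stage, fix $a$ and $x$ and expand the left-hand side of Equation~\eqref{eq:condexpect2} by iterating over the latent $U$:
\[
\E[q'(Z,a,x)\mid W=w,A=a,X=x] = \int\!\!\int q'(z,a,x)\, dF(z\mid u,w,a,x)\, dF(u\mid w,a,x).
\]
The essential simplification comes from Assumption~\ref{asm:condindw}, namely $W\perp(Z,A)\mid U,X$, which yields both $dF(z\mid u,w,a,x)=dF(z\mid u,a,x)$ and $f(a\mid u,w,x)=f(a\mid u,x)$. Substituting the hypothesized identity $\int q'(z,a,x)\,dF(z\mid u,a,x)=1/f(A=a\mid u,x)$ into the inner integral collapses the display to $\int \{1/f(A=a\mid u,x)\}\, dF(u\mid w,a,x)$. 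I would then evaluate this remaining integral via Bayes' rule, writing $dF(u\mid w,a,x)=\{f(a\mid u,w,x)f(u\mid w,x)/f(a\mid w,x)\}\,du$ and cancelling $f(a\mid u,x)=f(a\mid u,w,x)$ against its reciprocal, which leaves $\int f(u\mid w,x)\,du / f(a\mid w,x) = 1/f(A=a\mid w,x)$. Positivity (Assumption~\ref{asm:positivity2}) is what guarantees these reciprocals are well defined. This shows $q'$ satisfies Equation~\eqref{eq:condexpect2} almost surely.

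For uniqueness, suppose $q_1$ and $q_2$ both solve Equation~\eqref{eq:condexpect2}. For each fixed $a,x$ their difference $g(Z)=q_1(Z,a,x)-q_2(Z,a,x)$ then satisfies $\E[g(Z)\mid W,A=a,X=x]=0$ almost surely, and the assumed completeness of $Z$ given $W$, $A=a$, $X=x$ forces $g(Z)=0$ almost surely. Hence $q_1=q_2$, and combining the two stages identifies $q'$ as the unique solution.

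The computations themselves are routine; the step requiring genuine care is the measure-theoretic justification of the conditional-independence substitutions, that is, verifying that $dF(z\mid u,w,a,x)=dF(z\mid u,a,x)$ and $f(a\mid u,w,x)=f(a\mid u,x)$ really do follow from $W\perp(Z,A)\mid U,X$, and checking that each integral is well defined under positivity before performing the cancellations. I expect this bookkeeping, rather than any conceptual difficulty, to be the main obstacle.
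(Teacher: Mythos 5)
Your proposal is correct and takes essentially the same approach as the paper's proof: the same iterated-expectation expansion over the latent $U$ justified by Assumption~\ref{asm:condindw}, the same substitution of the hypothesized bridge identity $\int q'(z,a,X)\,dF(z\mid U,A=a,X) = 1/f(A=a\mid U,X)$, and the identical completeness argument for uniqueness. The only cosmetic difference is directional — you start from $\E[q'(Z,a,x)\mid W=w,A=a,X=x]$ and work toward $1/f(A=a\mid w,x)$, spelling out the Bayes'-rule cancellation explicitly, whereas the paper starts from $1/f(A=a\mid w,x)$ and states that identity without derivation.
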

\begin{proof}
We need to show that
\begin{align*}
\frac{1}{f(A=a|W=w,X=x)} = \int q'(z,a,x)dF(z|W=w,A=a,X=x).
\end{align*}
To show this, note that $W \perp (Z, A) | U, X$ and
\begin{align*}
\E[q'(Z,a,x)|U=u, A=a, X=x] = \frac{1}{f(A=a|U=u, X=x)},
\end{align*}
so we have that
\begin{align*}
&  \frac{1}{f(A=a|w,x)} \\
= & \int \frac{1}{f(A=a|u,w,x)} dF(u|w,a,x)\\
= & \int \frac{1}{f(A=a|u,x)} dF(u|w,a,x)\\
= &  \int \E[q'(Z,a,x)|U=u, A=a, X=x] dF(u|w,a,x)\\
=& \int \int q'(z,a,x)dF(z|u,w,a,x)dF(u|w,a,x)\\
=& \int q'(z,a,x)dF(z|w,a,x).
\end{align*}
Therefore, 
\begin{align*}
\frac{1}{f(A=a|w,x)} = \int q'(z,a,x)dF(z|w,a,x).
\end{align*}
For the uniqueness, suppose both $q'_1(z,a,x)$ and $q'_2(z,a,x)$ satisfy the above equation, then one must have that for all $w$, $a$, and $x$,
$$\int [q'_1(z,a,x)-q'_2(z,a,x)]dF(z|w,a,x)=0.$$
By the completeness condition, $q'_1(z,a,x)$ must equal $q'_2(z,a,x)$ almost surely.
Thus, Equation~\eqref{eq:condexpect2} has a unique solution $q'$.
\end{proof}

\section{Proof of Theorem~\ref{thm:eif}}
\begin{proof}
(1)
For the uniqueness of $h$, suppose both $h_1(w,a,x)$ and $h_2(w,a,x)$ satisfy Equation~\eqref{eq:condexpec}, then one must have that for all $z$, $a$, and $x$,
$$\int [h_1(z,a,x)-h_2(z,a,x)]dF(w|z,a,x)=0.$$
By the completeness assumption~\ref{asm:completeness2.2}, $h_1(w,a,x)$ must equal $h_2(w,a,x)$ almost surely.
Thus, the solution $h$ is unique.

For the uniqueness of $q$, suppose both $q_1(z,a,x)$ and $q_2(z,a,x)$ satisfy Equation~\eqref{eq:condexpect2}, then one must have that for all $w$, $a$, and $x$,
$$\int [q_1(z,a,x)-q_2(z,a,x)]dF(z|w,a,x)=0.$$
By the completeness assumption~\ref{asm:completeness1.2}, $q_1(z,a,x)$ must equal $q_2(z,a,x)$ almost surely.
Thus, the solution $q$ is unique.

(2)
In order to find the efficient influence function for $\psi$, we need to first find a random variable $G$ with mean 0 and
\begin{align}\label{eq:score}
\frac{\partial \psi_t}{\partial t}|_{t=0} = \E[GS(\cO;t)]|_{t=0},
\end{align}
where $S(\cO;t)=\partial \log f(\cO;t)/\partial t$, and $\psi_t$ is the parameter of interest $\psi$ under a regular parametric submodel in $\mM_{sp}$ indexed by $t$ that includes the true data generating mechanism at $t = 0$ \citep{vaart_1998}.

Recall that
\begin{align*}
\E[Y|Z,A,X] = \int h(w,A,X)dF(w|Z,A,X),
\end{align*}
so we have
\begin{align*}
\partial \E_t[Y-h_t(W,A,X)|Z,A,X]/\partial t|_{t=0}=0.
\end{align*}
Thus,
\begin{align*}
\int \frac{\partial [\{y-h_t(w,A,X)\}f_t(w,y|Z,A,X)]}{\partial t}|_{t=0} d(w,y) = 0.
\end{align*}
Let $\epsilon = Y - h(W,A,X)$, we have that
\begin{align*}
\E[\epsilon S(W,Y|Z,A,X)|Z,A,X] = \E[\partial h_t(W,A,X)/\partial t|_{t=0}|Z,A,X].
\end{align*}

The left hand side of Equation~\eqref{eq:score} equals to
\begin{align*}
\partial \psi_t/\partial t|_{t=0} =& \partial [\int \int \{h_t(w,1,x)-h_t(w,0,x)\} dF_t(w|x)dF_t(x)]/\partial t|_{t=0} \\
=& \E[\{h(W,1,X)-h(W,0,X)\}S(W,X)]\\& + \left[\int \int \partial h_t(w,1,x)/\partial t|_{t=0} dF(w|x)dF(x) - \int \int \partial h_t(w,0,x)/\partial t|_{t=0} dF(w|x)dF(x)\right].
\end{align*}
The first term is equal to
\begin{align*}
& \E[(h(W,1,X)-h(W,0,X)-\psi)(S(Z,Y,A|W,X)+S(W,X))]\\
= & \E[(h(W,1,X)-h(W,0,X)-\psi)S(\cO)].
\end{align*}
The second term is equal to
\begin{align*}
& \int \int \partial h_t(w,1,x)/\partial t|_{t=0} dF(w|x)dF(x) - \int \int \partial h_t(w,0,x)/\partial t|_{t=0} dF(w|x)dF(x)\\
= &\E [\frac{(-1)^{1-A}}{f(A|W,X)}\partial h_t(W,A,X)/\partial t|_{t=0} ] \\
= & \E[(-1)^{1-A}q(Z,A,X)\partial h_t(W,A,X)/\partial t|_{t=0} ]\\
= & \E[(-1)^{1-A}q(Z,A,X)\E[\partial h_t(W,A,X)/\partial t|_{t=0}|Z,A,X] ]\\
= & \E[(-1)^{1-A}q(Z,A,X)\epsilon S(W,Y|Z,A,X) ]\\
= & \E[(-1)^{1-A}q(Z,A,X)\epsilon S(W,Y|Z,A,X) ] + \E[(-1)^{1-A}q(Z,A,X)\epsilon S(Z,A,X) ]\\
= & \E[(-1)^{1-A}q(Z,A,X)\epsilon S(\cO) ].
\end{align*}
Finally, combing two terms gives
\begin{align*}
\frac{\partial \psi_t}{\partial t}|_{t=0} =\E\left[ \left\{ (-1)^{1-A} q(Z,A,X) [Y- h(W,A,X)] + h(W,1,X)-h(W,0,X)- \psi  \right\} S(\cO)\right].
\end{align*}
Therefore, \begin{align}\label{eq:if}
(-1)^{1-A} q(Z,A,X) [Y- h(W,A,X)] + h(W,1,X)-h(W,0,X)- \psi,
\end{align}
is an influence function of $\psi$. 
Next, we show that the influence function~\eqref{eq:if}
belongs to the tangent space 
\begin{align*}
&   \Lambda_1 + \Lambda_2 \\ \equiv & \{S(Z,A,X) \in L_2(Z,A,X):\E[S(Z,A,X)]=0\}\\& + \{S(Y,W|Z,A,X)\in L_2(Z,A,X)^\perp:\E[\epsilon S(Y,W|Z,A,X)|Z,A,X]\in cl(R(T))\},
\end{align*}
where $R(T)$ denotes the range space of $T$, $A^\perp$ denotes the orthogonal complement of $A$, and $cl(A)$ refers to the closure of $A$.
To see this, 
note that we have the following decomposition of Equation~\eqref{eq:if}, 
\begin{align*}
& (-1)^{1-A} q(Z,A,X) [Y- h(W,A,X)] + h(W,1,X)-h(W,0,X)- \psi\\
= & \{\E[h(W,1,X)-h(W,0,X)- \psi|Z,A,X]\}\\&  + h(W,1,X)-h(W,0,X)-\psi-\{\E[h(W,1,X)-h(W,0,X)- \psi|Z,A,X]\}\\& + (-1)^{1-A} q(Z,A,X) [Y- h(W,A,X)].
\end{align*}
We have
$ \{\E[h(W,1,X)-h(W,0,X)- \psi|Z,A,X]\}\in \Lambda_1$, and the remaining part 
belongs to $\Lambda_2$ as
\begin{align*}
& \E[\epsilon\{h(W,1,X)-h(W,0,X)-\psi\}-\epsilon\{\E[h(W,1,X)-h(W,0,X)- \psi|Z,A,X]\}|Z,A,X]
\in cl(R(T)),\\
 &  \E[ (-1)^{1-A} q(Z,A,X)\epsilon^2|Z,A,X] \in cl(R(T)),
\end{align*}
because of Assumption~\ref{asm:bound}, and
\begin{align*}
&\E[\{h(W,1,X)-h(W,0,X)-\psi\}-\{\E[h(W,1,X)-h(W,0,X)- \psi|Z,A,X]\}|Z,A,X]=0,\\
&\E[ (-1)^{1-A} q(Z,A,X)\epsilon|Z,A,X]=0,
\end{align*}
Therefore it completes the proof.

3) The proof of double robustness we refer to Section~\ref{sec:F}.

\end{proof}

\section{Theorems~\ref{thm:if} and \ref{thm:eifappendix} and their proofs}\label{sec:E}

\begin{theorem}\label{thm:if} 
a) Let $h(W,A,X;b)$ denote the parametric component of the semiparametric model $\mathcal M_1$ with unknown finite dimensional parameter $b$.
Then, all influence functions of regular and asymptotically linear (RAL) estimators of $b$ in  $\mathcal M_1$ are of the form: 
\begin{align}\label{eq:inff1}
\left\{\E\left[\frac{\partial h(W,A,X;b)m(Z,A,X)}{\partial b}\right]\right\}^{-1} \{Y -  h(W, A, X;b) \} m(Z,A,X),
\end{align}
for some function $m(Z,A,X)$ of the same dimension as $b$; furthermore, the semiparametric efficient influence function of $b$ in model $\mathcal M_1$ is given by $m_{eff}(Z,A,X)$ provided in Theorem~\ref{thm:eifappendix}.  

b) Let $q(W,A,X;t)$ denote the parametric component of the semiparametric model $\mathcal M_2$ with unknown finite dimensional parameter $t$. Then, all influence functions of regular and asymptotically linear (RAL) estimators of $t$ in $\mathcal M_2$ are of the form: 
\begin{align}\label{eq:inff2}
-\left\{\E\left[\frac{\partial q(Z,A,X;t) n(W,A,X)}{\partial t}\right]\right\}^{-1}[q(Z,A,X;t) n(W,A,X) -n(W,1,X) -n(W,0,X)],
\end{align}
where $n(W,A,X)$ is any function having the same dimension as $t$; furthermore, the semiparametric efficient influence function of $t$ in model $\mathcal M_2$ is given by $n_{eff}(W,A,X)$ provided in Theorem~\ref{thm:eifappendix}.  

\end{theorem}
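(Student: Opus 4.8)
The plan is to recognize that in both parts the finite-dimensional parameter is defined by a \emph{conditional moment restriction} on the observed-data law, and then to invoke the classical semiparametric characterization of regular asymptotically linear (RAL) estimators for such models \citep[e.g.,][]{newey2003instrumental,bickel2003}. For part a), I would first rewrite the defining integral equation~\eqref{eq:condexpec} as the conditional moment restriction $\E[Y-h(W,A,X;b)\mid Z,A,X]=0$. The point to stress is that this is the only restriction the observed-data model $\mathcal M_1$ imposes that is relevant to the tangent space for $b$: the consistency, conditional-independence, and conditional-randomization assumptions constrain the full law over the latent $(U,Y(a))$ but leave the observed law unconstrained beyond this moment equation, while completeness Assumption~\ref{asm:completeness2.2} only guarantees that $b$ is identified without deleting any direction from the tangent space. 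Factoring the observed density as $f(Y,W\mid Z,A,X)f(Z,A,X)$, the unrestricted marginal contributes all mean-zero $S(Z,A,X)$, while the conditional factor contributes nuisance scores $a(Y,W\mid Z,A,X)$ with $\E[a\mid Z,A,X]=0$ and, upon differentiating the moment restriction at fixed $b$, $\E[(Y-h)\,a\mid Z,A,X]=0$.

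The core computation is then that, within conditional-mean-zero functions of $(Y,W)$ given $(Z,A,X)$, the only direction excised from this nuisance set is the one along the residual $Y-h$; hence the orthocomplement of the nuisance tangent space is exactly $\{m(Z,A,X)(Y-h)\}$. Any RAL influence function must be orthogonal to the nuisance tangent space and is therefore of this form, and conversely each such function is a valid gradient, so these exhaust the influence functions; the pathwise-derivative normalization then pins down the Jacobian factor $\{\E[\partial h/\partial b\; m]\}^{-1}$, yielding~\eqref{eq:inff1}. Projecting onto the tangent space, equivalently minimizing the variance over $m$, gives the optimal instrument $m_{eff}\propto \mathrm{Var}[Y\mid Z,A,X]^{-1}\,\E[\partial h(W,A,X;b)/\partial b\mid Z,A,X]$, i.e.\ the efficient influence function of Theorem~\ref{thm:eifappendix}.

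Part b) requires one extra idea, because the defining equation~\eqref{eq:condexpect2} contains the unknown propensity score $1/f(A=a\mid W,X)$. The key first step is to clear this nuisance: multiplying through by $f(A=a\mid W,X)$ and using iterated expectations turns~\eqref{eq:condexpect2} into the propensity-score-free restriction $\E[I(A=a)\,q(Z,a,X;t)\mid W,X]=1$ for $a=0,1$, equivalently $\E[\rho(O;t)\mid W,X]=0$ with the two-dimensional residual $\rho=(I(A=1)q(Z,1,X;t)-1,\; I(A=0)q(Z,0,X;t)-1)^T$. Once the propensity score has disappeared, the identical tangent-space argument applies: an instrument matrix $B(W,X)$ repackaged as $n(W,A,X)$ produces the estimating function $q(Z,A,X;t)\,n(W,A,X)-n(W,1,X)-n(W,0,X)$, giving influence functions of the form~\eqref{eq:inff2}, with the efficient $n_{eff}$ again obtained by projection.

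I expect the main obstacle to be twofold. The first is the reformulation in part b) together with the verification that it genuinely eliminates dependence on $f(A\mid W,X)$, since this is precisely what underlies the claimed robustness of influence-function-based estimation of $t$ to propensity-score misspecification. The second is the orthocomplement computation common to both parts: establishing rigorously that the only direction removed from the conditional nuisance tangent space is the one along the residual. Here some care is needed because the residual depends on $W$ (respectively $Z$), which lies outside the conditioning set $(Z,A,X)$ (respectively $(W,X)$); this does not break the argument, as conditional moment restriction theory permits the residual to depend on all of $O$, but it must be handled explicitly when identifying the tangent space.
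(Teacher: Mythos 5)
Your characterization arguments are correct but take a genuinely different route from the paper's. For part a), the paper works in the ``forward'' direction: it differentiates the restriction $\E[Y-h(W,A,X;b)\mid Z,A,X]=0$ along parametric submodels to verify that each candidate \eqref{eq:inff1} satisfies the gradient identity $\E[\partial h(W,A,X;b)\,m(Z,A,X)/\partial b]\,\partial b_s/\partial s|_{s=0}=\E[\{Y-h\}\,m\,S(\cO)]$, and then invokes a result of Newey (1994) for $m_{eff}$; your nuisance-tangent-space/orthocomplement argument supplies the ``these exhaust all RAL influence functions'' direction more explicitly than the paper does, so on that point your route is, if anything, more complete. For part b) the difference is more substantial: the paper differentiates $\E[\{q(Z,A,X;t_s)-1/f_s(A\mid W,X)\}n(W,A,X)]$ directly, letting the propensity score vary with the submodel and showing that the $\partial_s\{1/f_s\}$ term contributes exactly $\E[\{n(W,A,X)/f(A\mid W,X)-n(W,1,X)-n(W,0,X)\}S(\cO)]$, so the propensity score cancels only at the end of the computation. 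You instead reformulate \eqref{eq:condexpect2} as $\E[I(A=a)\,q(Z,a,X;t)\mid W,X]=1$ for $a=0,1$ (equivalent under positivity), eliminating $f(A\mid W,X)$ before any differentiation, and your instrument algebra $B(W,X)\rho=q(Z,A,X;t)n(W,A,X)-n(W,1,X)-n(W,0,X)$ reproduces \eqref{eq:inff2} exactly. Your route makes the propensity-score robustness transparent from the outset, where the paper obtains it as a byproduct of cancellation.

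There are, however, two genuine problems in the efficiency half. First, your formula $m_{eff}\propto \mathrm{Var}[Y\mid Z,A,X]^{-1}\,\E[\partial h(W,A,X;b)/\partial b\mid Z,A,X]$ is wrong: the correct weight in Theorem~\ref{thm:eifappendix}(a) is $\E\{[Y-h(W,A,X;b)]^2\mid Z,A,X\}^{-1}$, the conditional second moment of the residual. Because $h$ depends on $W$, which lies outside the conditioning set, $\E\{[Y-h]^2\mid Z,A,X\}=\mathrm{Var}(Y-h\mid Z,A,X)\neq \mathrm{Var}(Y\mid Z,A,X)$ in general --- this is precisely the subtlety you flag in your closing paragraph, but it already bites here. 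Second, for part b) you assert that $n_{eff}$ is ``again obtained by projection'' without carrying out any computation; yet the theorem asserts the specific, rather intricate formula of Theorem~\ref{thm:eifappendix}(b), and deriving it (the paper solves $\E\{[S(A\mid W,X)-IF(n_{eff})]\,IF(n)\}=0$ for all $n$, which occupies most of its proof) is the hard part. Your plan --- the optimal-instrument formula $D(W,X)^{T}\Sigma(W,X)^{-1}$ for the reformulated restriction, with $D=\E[\nabla_t\rho\mid W,X]$ and $\Sigma=\E[\rho\rho^{T}\mid W,X]$ --- should in principle yield the same object, but you would still need to perform that algebra and verify it coincides with the stated $n_{eff}$; as written, the efficiency claims of the theorem remain unproven.
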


\begin{theorem}\label{thm:eifappendix}We consider the semiparametric models of Theorem~\ref{thm:if}:

a) The semiparametric efficient influence function of $b$ in model $\mathcal M_1$ is given by \eqref{eq:inff1} with
\begin{align*}
m_{eff}\left( Z,A,X\right) = \E\left\{ \left[ Y-h\left( W,A,X;b\right) \right]
^{2}|Z,A,X\right\} ^{-1}\E\left\{ \nabla _{b}h\left( W,A,X;b\right)
|Z,A,X\right\}.
\end{align*}

b) The semiparametric efficient influence function of $t$ in model $\mathcal M_2$ is given by
\eqref{eq:inff2} with
\[
n_{eff}\left( W,A,X\right) \ =\left\{ 
\begin{array}{c}
\E\left\{ \left. S\left( A|W,X\right) \times \left\{ \frac{A}{f\left(
A|W,X\right) }-1\right\} \right\vert W,X\right\}\frac{\Omega(W,A,X)}{
\E\left\{ \left. \left\{ q\left( Z,A,X\right) -\frac{1}{f\left( A|W,X\right) }
\right\} ^{2}\right\vert W,A,X\right\} }  \\ 
-\left[ 
\begin{array}{c}
v\left(W,X\right) 
\times \E\left\{ \left. \left\{ \frac{A}{f\left( A|W,X\right) }-1\right\}
^{2}\right\vert W,X\right\} \frac{\Omega(W,A,X)}{
\E\left\{ \left. \left\{ q\left( Z,A,X\right) -\frac{1}{f\left( A|W,X\right) }
\right\} ^{2}\right\vert W,A,X\right\} }
\end{array}
\right] 
\end{array}
\right\},
\]
where
\begin{align*}
\Omega(W,A,X)\equiv \left[ A\frac{1\ }{f\left( 1|W,X\right) }%
-\left( 1-A\right) \frac{f\left( 1|W,X\right) }{\left[ 1-f\left(
1|W,X\right) \right]^{2}}\right].
\end{align*}
\end{theorem}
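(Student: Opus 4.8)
The plan is to leverage Theorem~\ref{thm:if}, which already exhibits every RAL influence function in each submodel as a member of a family indexed by a free instrument ($m(Z,A,X)$ in $\mathcal M_1$, $n(W,A,X)$ in $\mathcal M_2$). The efficient influence function is the unique member of such a family that lies in the tangent space, equivalently the member of least asymptotic variance; so the whole task reduces to minimising, over the instrument, the variance of the generic influence function, and to checking that the minimiser indeed lands in the tangent space. I would treat parts a) and b) under this single umbrella, noting at the outset that the sign and the normalising matrix in front of each influence function cancel in the variance and may be ignored while optimising.

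For part a), correct specification of $h$ is equivalent to the conditional moment restriction $\E[Y-h(W,A,X;b)\mid Z,A,X]=0$, with residual $\rho=Y-h(W,A,X;b)$ and conditioning set $V=(Z,A,X)$. Writing the influence function as $M(m)^{-1}\rho\,m$ with $M(m)=\E[m\,(\nabla_b h)^{\top}]$, its variance is $M(m)^{-1}\E[\E(\rho^2\mid V)\,mm^{\top}]M(m)^{-\top}$, a standard generalised-least-squares object. The optimal-instrument calculation then gives $m_{eff}(V)\propto \E[\rho^2\mid V]^{-1}\E[\nabla_b h\mid V]$; substituting $\rho^2=[Y-h(W,A,X;b)]^2$ and $\E[\nabla_b h\mid V]=\E\{\nabla_b h(W,A,X;b)\mid Z,A,X\}$ reproduces the stated $m_{eff}$. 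This part is routine.

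For part b), the same minimisation is carried out for $\phi(n)=q(Z,A,X;t)\,n(W,A,X)-n(W,1,X)-n(W,0,X)$, but now the underlying restriction $\E[q(Z,A,X;t)\mid W,A,X]=1/f(A\mid W,X)$ carries the \emph{nuisance} propensity score $f(A\mid W,X)$. The efficient influence function must therefore be orthogonal not only within the instrument family but also to the nuisance tangent space spanned by the propensity scores $S(A\mid W,X)$ with $\E[S(A\mid W,X)\mid W,X]=0$. The plan is: (i) record the pathwise derivative of the reciprocal propensity $1/f(A\mid W,X)$ along a propensity perturbation, and check that this derivative equals $\Omega(W,A,X)$, which explains its appearance; (ii) form the residual $q-1/f$ and its conditional second moment $\E[\{q-1/f\}^2\mid W,A,X]$, the optimal weight in exact analogy with part a); and (iii) project the score for $t$ off the one-dimensional propensity direction, which introduces the regression ingredients $\E[S(A\mid W,X)\{A/f(A\mid W,X)-1\}\mid W,X]$ and $\E[\{A/f(A\mid W,X)-1\}^2\mid W,X]$ together with the coefficient $v(W,X)$ fixed by the orthogonality (normalisation) constraint. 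Assembling these into $n_{eff}$, namely the projected-score coefficient multiplied by $\Omega(W,A,X)/\E[\{q-1/f\}^2\mid W,A,X]$, yields the stated expression.

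The main obstacle is part b). Unlike part a), the instrument $n$ enters $\phi(n)$ twice --- multiplicatively through $q\,n(W,A,X)$ and additively through the centring $n(W,1,X)+n(W,0,X)$ that arises from $\E[n(W,A,X)/f(A\mid W,X)\mid W,X]=n(W,1,X)+n(W,0,X)$ --- so the variance is not minimised pointwise in $A$ but is a coupled two-arm quadratic at each $(W,X)$; and the nuisance propensity score, which appears in the moment through $1/f$, must be projected out, a step altogether absent in part a). Keeping the arm-coupling and the propensity-score projection simultaneously consistent, solving the resulting stationarity equations for the $(W,X)$-coefficient and for $v(W,X)$, and verifying that the optimiser genuinely lies in the tangent space (so that it is the efficient influence function and not merely a stationary point) is the delicate portion of the argument.
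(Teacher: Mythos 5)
Part a) of your proposal is correct and is essentially the paper's own argument: the paper simply invokes Newey's (1994) optimal-instrument result for the conditional moment restriction $\E[Y-h(W,A,X;b)\mid Z,A,X]=0$, which is exactly the generalized-least-squares calculation you sketch.

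Part b) contains a genuine structural error, located exactly at the step you defer as ``the delicate portion.'' Your premise is inverted: in $\mathcal M_2$ the pure propensity perturbations $S(A\mid W,X)$ are \emph{not} nuisance scores. Because the restriction $\E[q(Z,A,X;t)\mid W,A,X]=1/f(A\mid W,X)$ ties the propensity to $t$ once the law of $Z$ given $(W,A,X)$ is held fixed, the paper derives $S(A\mid W,X)=\nabla_t\log f(A\mid W,X;t)$ as the score \emph{for} $t$ itself. No influence function of $t$ can be orthogonal to the $t$-score (this would contradict the normalization $\E[\mathrm{IF}\cdot S_t]=I$), and imposing your orthogonality requirement on the family $IF(n)=q\,n(W,A,X)-n(W,1,X)-n(W,0,X)$ forces $\E[n_{eff}(W,A,X)\,\Omega(W,A,X)\mid W,X]\equiv 0$, i.e.\ $v\equiv 0$, which contradicts the stated $n_{eff}$. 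What the paper actually does is the opposite of your step (iii): it characterizes $IF(n_{eff})$ as the \emph{projection of} the $t$-score \emph{onto} the closed span of $\{IF(n)\}$, i.e.\ it solves the normal equations $\E\{[S(A\mid W,X)-IF(n_{eff})]\,IF(n)\}=0$ for all $n$. Your minimum-variance-within-the-family idea is in fact equivalent to this (by Godambe optimality of estimating functions), so your overall strategy could succeed, but only via this projection, never via orthogonality to the propensity direction.

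Two further ingredients that you do not supply are indispensable for actually solving those equations. First, $\Omega$ is not the pathwise derivative of $1/f(A\mid W,X)$ (that derivative is $-S(A\mid W,X)/f(A\mid W,X)$); it is the conditional Riesz representer satisfying $\E[n(W,A,X)\,\Omega(W,A,X)\mid W,X]=n(W,1,X)-\frac{f(1\mid W,X)}{1-f(1\mid W,X)}\,n(W,0,X)$, which is what lets the arm-coupled part of $IF(n)$ be written as $\{A/f(A\mid W,X)-1\}\,\E[n\Omega\mid W,X]$ and makes the normal equations pointwise-solvable in $n$. Second, $v(W,X)=\E[n_{eff}\Omega\mid W,X]$ is determined not by any normalization constraint but by a self-consistency (fixed-point) equation: the pointwise solution of the normal equations expresses $n_{eff}$ in terms of $v$, and substituting it back into $\E[n_{eff}\Omega\mid W,X]$ and solving yields the displayed $v(W,X)$, whose denominator carries the characteristic ``$+1$'' that no naive projection coefficient of $S(A\mid W,X)$ onto $\{A/f(A\mid W,X)-1\}$ would produce. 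Without these two steps the stated $n_{eff}$ cannot be recovered, so the proof of part b) is incomplete and, as outlined, would lead to an incorrect answer.
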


In the following, we provide the proofs of Theorems~\ref{thm:if} and \ref{thm:eifappendix}.
\begin{proof}
a) Recall that
\begin{align*}
\E[Y -  h(W, A, X;b)  |Z,A,X] = 0,
\end{align*}
so we have 
\begin{align*}
\partial \E_s[ \{Y -  h(W, A, X;b_s) \} m(Z,A,X)] /\partial s|_{s=0} = 0,
\end{align*}
for any function $m(Z,A,X)$ having the same dimension as $b$.
Therefore,
\begin{align*}
\E\left[\frac{\partial h(W,A,X;b)m(Z,A,X)}{\partial b}\right]\frac{\partial b_s}{\partial s}|_{s=0} =
\E[ \{Y -  h(W, A, X;b) \} m(Z,A,X)S(\cO)].
\end{align*}

To show the efficient influence function for outcome confounding bridge function, consider the set of influence functions (up to constant matrix multipliers),
\begin{align*}
\left[ Y-h\left( W,A,X;b\right) \right] m\left( Z,A,X\right).
\end{align*}
Then using a result due to \cite{NEWEY19942111}, one can show that 
\begin{align*}
m_{eff}\left( Z,A,X\right) = \E\left\{ \left[ Y-h\left( W,A,X;b\right) \right]
^{2}|Z,A,X\right\} ^{-1}\E\left\{ \nabla _{b}h\left( W,A,X;b\right)
|Z,A,X\right\}.
\end{align*}

b) Recall that
\begin{align*}
\E \left[q(Z,A,X;t) - \frac{1}{f(A|W,X)}\bigg|W,A,X\right] = 0,
\end{align*}
so we have 
\begin{align*}
\partial \E_s \left[\bigg\{q(Z,A,X;t_s) - \frac{1}{f_s(A|W,X)}\bigg\}n(W,A,X)\right]/\partial s\bigg|_{s=0}= 0,
\end{align*}
for any function $n(W,A,X)$ having the same dimension as $t$.

Note that
\begin{align*}
    \frac{\partial}{\partial s}\frac{1}{f_s(A|W,X)} = - \frac{\partial f_s(A|W,X)}{f^2(A|W,X)},
\end{align*}
so we have that
\begin{align*}
&-\E\left[\frac{\partial q(Z,A,X;t) n(W,A,X)}{\partial t}\right]\frac{\partial t_s}{\partial s}|_{s=0}\\&= \E\left[\left\{ q(Z,A,X;t) - \frac{1}{f(A|W,X)} \right\}n(W,A,X)S(\cO)\right]\\& + \E\left[\left\{  \frac{n(W,A,X)}{f(A|W,X)} -  n(W,1,X) - n(W,0,X)\right\}S(\cO)\right]\\
& = \E\left[ \left\{ q(Z,A,X;t) n(W,A,X) -n(W,1,X) -n(W,0,X) \right\} S(\cO)\right].
\end{align*}

To show the efficient influence function for treatment confounding bridge function, consider the semiparametric model with sole restriction that $q\left(
Z,A,X;t\right) $ belongs to a parametric model with unknown parameter such that 
\begin{align*}
\frac{1}{f\left( A|W,X;t\right) }=\int q\left( z,A,X;t\right) dF\left(
z|W,A,X\right). 
\end{align*}
The score equation of $t$ is therefore 
\begin{align*}
S\left( A|W,X\right)  =&\frac{\nabla _{t}f\left( A|W,X;t\right) }{f\left(
A|W,X\right) } \\
=&-f\left( A|W,X\right) \int \nabla _{t}q\left( z,A,X;t\right) dF\left(
z|W,A,X\right)  \\
=&-\frac{\int \nabla _{t}q\left( z,A,X;t\right) dF\left( z|W,A,X\right) }{
\int q\left(z,A,X\right) dF\left( z|W,A,X\right) }.
\end{align*}
The set of influence functions of $t$ evaluated at the true parameter includes functions of form (up to constant matrix multipliers):
\begin{align*}
IF\left( n\right)  =& q\left( Z,A,X\right)
n\left( W,A,X\right) -n\left( W,1,X\right) - n\left( W,0,X\right). 
\end{align*}
We therefore have that 
\begin{align*}
&IF\left( n\right) -\E\left[ IF\left( n\right) |W,A,X\right]
+\E\left[ IF\left( n\right) |W,A,X\right]  \\
= & \left\{q\left( Z,A,X\right)- \frac{1}{f(A|W,X)}\right\}
n\left( W,A,X\right) \\
&+\left\{ \frac{A\ }{f\left( A|W,X\right) }-1\right\} n\left(W,1,X\right)  +\left\{\frac{1-A}{f\left( A|W,X\right) } -1\right\} n\left(W,0,X\right) \\
= & \left\{q\left( Z,A,X\right)- \frac{1}{f(A|W,X)}\right\}
n\left( W,A,X\right) \\
& +\left\{ \frac{A\ }{f\left( A|W,X\right) }-1\right\}
\left[n(W,1,X) - \frac{f(1|W,X)}{1-f(1|W,X)} n(W,0,X)\right]\\
= & \left\{q\left( Z,A,X\right)- \frac{1}{f(A|W,X)}\right\}
n\left( W,A,X\right) \\
& +\left\{ \frac{A\ }{f\left( A|W,X\right) }-1\right\}
\E\left[ n(W,A,X)  \left\{\frac{A}{f(1|W,X)} - \frac{[1-A]f(1|W,X)}{[1-f(1|W,X)]^2}\right\} \bigg |W,X\right].
\end{align*}

The first term is orthogonal to $S(A|W,X)$ as it has mean
zero given $\left(W,A,X\right)$. The efficient influence function $IF(n_{eff})$ is given by the solution to: for all $n(W,A,X)$,
\begin{eqnarray*}
&&0\\
&=&\E\left\{ \left[ S\left( A|W,X\right) -IF\left( n_{eff}\right) 
\right] IF\left( n\right) \right\}  \\
&=&\E\left\{ 
\begin{array}{c}
\left[ 
\begin{array}{c}
S\left( A|W,X\right) -\left\{ q\left( Z,A,X\right) -\frac{1}{f\left(
A|W,X\right) }\right\} n_{eff}\left( W,A,X\right)  \\ 
-\left\{ \frac{A}{f\left( A|W,X\right) }-1\right\} \times \left\{ \E\left(
\left. n_{eff}\left(W,A,X\right) \Omega\left(W,A,X\right) \right\vert W,X\right) \right\} 
\end{array}
\right]  \\ 
\left\{ 
\begin{array}{c}
\left\{ q\left( Z,A,X\right) -\frac{1}{f\left( A|W,X\right) }\right\} n\left(W,A,X\right)  \\ 
+\left\{ \frac{A}{f\left( A|W,X\right) }-1\right\} \times \left\{ \E\left(
\left. n\left( W,A,X\right) \Omega\left(W,A,X\right) \right\vert W,X\right) \right\} 
\end{array}
\right\} 
\end{array}
\right\}  \\
&=&\E\left\{ \left[ 
\begin{array}{c}
S\left( A|W,X\right) \times \left\{ \frac{A}{f\left( A|W,X\right) }
-1\right\} \times \left\{ \E\left( \left. n\left(W,A,X\right) \Omega\left(W,A,X\right)
\right\vert W,X\right) \right\}  \\ 
-\left\{ q\left( Z,A,X\right) -\frac{1}{f\left( A|W,X\right) }\right\}
n_{eff}\left( W,A,X\right) \left\{ q\left(
Z,A,X\right) -\frac{1}{f\left( A|W,X\right) }\right\} n\left( W,A,X\right)  \\ 
-\left[ 
\begin{array}{c}
 \left\{ \frac{A}{f\left( A|W,X\right) }-1\right\} \times \left\{ \E\left(
\left. n_{eff}\left( W,A,X\right) \Omega\left(W,A,X\right) \right\vert W,X\right) \right\}  \\ 
\times \left\{ \frac{A}{f\left( A|W,X\right) }-1\right\} \times \left\{ \E\left(
\left. n\left( W,A,X\right) \Omega\left(W,A,X\right) \right\vert W,X\right) \right\} 
\end{array}
\right] 
\end{array}
\right] \right\}  \\
&=&\E\left\{ \left[ 
\begin{array}{c}
\E\left\{ \left. S\left( A|W,X\right) \times \left\{ \frac{A\ }{f\left(
A|W,X\right) }-1\right\} \right\vert W,X\right\} \Omega(W,A,X) n\left( W,A,X\right)  \\ 
-\E\left\{ \left. \left\{ q\left( Z,A,X\right) -\frac{1}{f\left( A|W,X\right) 
}\right\} ^{2}\right\vert W,A,X\right\} n_{eff}\left( W,A,X\right) n\left(
W,A,X\right)  \\ 
-\left[ 
\begin{array}{c}
\left\{ \E\left( \left. n_{eff}\left( W,A,X\right) \Omega(W,A,X) \right\vert W,X\right) \right\} \\ 
\times \E\left\{ \left. \left\{ \frac{A\ }{f\left( A|W,X\right) }-1\right\}
^{2}\right\vert W,X\right\} \ \left\{ \Omega(W,A,X) \right\} 
\end{array}
\right] n\left( W,A,X\right) 
\end{array}
\right] \right\}, 
\end{eqnarray*}
where $\Omega(W,A,X)$ is shorthand for 
\begin{align*}
\left[ A\frac{1\ }{f\left( 1|W,X\right) }%
-\left( 1-A\right) \frac{f\left( 1|W,X\right) }{\left[ 1-f\left(
1|W,X\right) \right]^{2}}\right].
\end{align*}
This implies that
\[
0=\left[ 
\begin{array}{c}
\E\left\{ \left. S\left( A|W,X\right) \times \left\{ \frac{A\ }{f\left(
A|W,X\right) }-1\right\} \right\vert W,X\right\} \Omega(W,A,X) \\ 
-\E\left\{ \left. \left\{ q\left( Z,A,X\right) -\frac{1}{f\left( A|W,X\right) 
}\right\} ^{2}\right\vert W,A,X\right\} n_{eff}\left( W,A,X\right) \\ 
-\left[ 
\begin{array}{c}
\left\{ \E\left( \left. n_{eff}\left( W,A,X\right) \Omega(W,A,X) \right\vert W,X\right) \right\} 
\\ 
\times \E\left\{ \left. \left\{ \frac{A}{f\left( A|W,X\right) }-1\right\}
^{2}\right\vert W,X\right\}  \left\{ \Omega(W,A,X)\right\} 
\end{array}
\right] \ 
\end{array}
\right]. 
\]
So we further have that 
\[
0=\left[ 
\begin{array}{c}
\E\left\{ \left. S\left( A|W,X\right) \times \left\{ \frac{A}{f\left(
A|W,X\right) }-1\right\} \right\vert W,X\right\}\frac{\Omega^2(W,A,X)}{
\E\left\{ \left. \left\{ q\left( Z,A,X\right) -\frac{1}{f\left( A|W,X\right) }
\right\} ^{2}\right\vert W,A,X\right\} } \\ 
- n_{eff}\left(W,A,X\right) \Omega(W,A,X)\\ 
-\left[ 
\begin{array}{c}
\left\{ \E\left( \left. n_{eff}\left( W,A,X\right) \Omega(W,A,X) \right\vert W,X\right) \right\} 
\\ 
\times \E\left\{ \left. \left\{ \frac{A}{f\left( A|W,X\right) }-1\right\}
^{2}\right\vert W,X\right\}\\
\times \frac{\Omega^2(W,A,X)}{
\E\left\{ \left. \left\{ q\left( Z,A,X\right) -\frac{1}{f\left( A|W,X\right) }
\right\} ^{2}\right\vert W,A,X\right\} }
\end{array}
\right] \ 
\end{array}
\right] 
\]
and therefore
\[
0=\left[ 
\begin{array}{c}
\E\left\{ \left. S\left( A|W,X\right) \times \left\{ \frac{A\ }{f\left(
A|W,X\right) }-1\right\} \right\vert W,X\right\} \E\left[ \left. \frac{\Omega^2(W,A,X)}{
\E\left\{ \left. \left\{ q\left( Z,A,X\right) -\frac{1}{f\left( A|W,X\right) }
\right\} ^{2}\right\vert W,A,X\right\} }  \right\vert W,X\right]  \\ 
-\left\{ \E\left( \left. n_{eff}\left( W,A,X\right) \Omega(W,A,X) \right\vert W,X\right) \right\} \\ 
-\left[ 
\begin{array}{c}
\left\{ \E\left( \left. n_{eff}\left( W,A,X\right) \Omega(W,A,X) \right\vert W,X\right) \right\} 
\\ 
\times \E\left\{ \left. \left\{ \frac{A\ }{f\left( A|W,X\right) }-1\right\}
^{2}\right\vert W,X\right\} \E\left[ \left. \frac{ \Omega^2(W,A,X) }{\E\left\{ \left.
\left\{ q\left( Z,A,X\right) -\frac{1}{f\left( A|W,X\right) }\right\}
^{2}\right\vert W,A,X\right\} }\right\vert W,X\right] 
\end{array}
\right]
\end{array}
\right],
\]
which implies that 
\begin{eqnarray*}
&&\left\{ \E\left( \left. n_{eff}\left( W,A,X\right) \Omega(W,A,X) \right\vert W,X\right) \right\} \\
&=&\frac{\E\left\{ \left. S\left( A|W,X\right) \times \left\{ \frac{A}{
f\left( A|W,X\right) }-1\right\} \right\vert W,X\right\} \E\left[ \left. \frac{\Omega^2(W,A,X)}{\E\left\{ \left. \left\{ q\left( Z,A,X\right) -\frac{1}{f\left(
A|W,X\right) }\right\} ^{2}\right\vert W,A,X\right\} } \right\vert W,X
\right] }{\E\left\{ \left. \left\{ \frac{A}{f\left( A|W,X\right) }
-1\right\} ^{2}\right\vert W,X\right\} E\left[ \left. \frac{\Omega^2(W,A,X)}{\E\left\{ \left. \left\{ q\left( Z,A,X\right) -\frac{1}{f\left(
A|W,X\right) }\right\} ^{2}\right\vert W,A,X\right\} }\right\vert W,X\right]
+1} \\
&\equiv &v\left( W,X\right). 
\end{eqnarray*}
Therefore,
\begin{eqnarray*}
0 &=&\left[ 
\begin{array}{c}
\E\left\{ \left. S\left( A|W,X\right) \times \left\{ \frac{A}{f\left(
A|W,X\right) }-1\right\} \right\vert W,X\right\} \frac{\Omega^2(W,A,X)}{
\E\left\{ \left. \left\{ q\left( Z,A,X\right) -\frac{1}{f\left( A|W,X\right) }
\right\} ^{2}\right\vert W,A,X\right\} }  \\ 
-n_{eff}\left( W,A,X\right) \Omega(W,A,X) \\ 
-\left[ 
\begin{array}{c}
v(W,X)
\times \E\left\{ \left. \left\{ \frac{A}{f\left( A|W,X\right) }-1\right\}
^{2}\right\vert W,X\right\} \frac{\Omega^2(W,A,X)}{
\E\left\{ \left. \left\{ q\left( Z,A,X\right) -\frac{1}{f\left( A|W,X\right) }
\right\} ^{2}\right\vert W,A,X\right\} }
\end{array}
\right] \ 
\end{array}
\right].
\end{eqnarray*}
So we have  
\[
n_{eff}\left( W,A,X\right) \ =\left\{ 
\begin{array}{c}
\E\left\{ \left. S\left( A|W,X\right) \times \left\{ \frac{A}{f\left(
A|W,X\right) }-1\right\} \right\vert W,X\right\}\frac{\Omega(W,A,X)}{
\E\left\{ \left. \left\{ q\left( Z,A,X\right) -\frac{1}{f\left( A|W,X\right) }
\right\} ^{2}\right\vert W,A,X\right\} }  \\ 
-\left[ 
\begin{array}{c}
v\left(W,X\right) 
\times \E\left\{ \left. \left\{ \frac{A}{f\left( A|W,X\right) }-1\right\}
^{2}\right\vert W,X\right\} \frac{\Omega(W,A,X)}{
\E\left\{ \left. \left\{ q\left( Z,A,X\right) -\frac{1}{f\left( A|W,X\right) }
\right\} ^{2}\right\vert W,A,X\right\} }
\end{array}
\right] 
\end{array}
\right\}.
\]
\end{proof}

\section{Proof of Theorem~\ref{thm:estimator}}\label{sec:F}
We prove Theorem~\ref{thm:estimator} under the following regularity conditions (Appendix B of \cite{robinsetal1994}):\\
 Let $H(\phi)'= (EIF(\psi), S(\zeta)')$ and $\phi'= (\psi, \zeta')$, where $\zeta$ is a vector including all nuisance parameters, and $S(\zeta)$ are the estimating equations for solving $\zeta$.\\
 1. $\phi$  lies in the interior of a compact set;\\
 2. $\E[H(\phi)]\neq 0$ if $\phi\neq \phi_0$;\\
 3. $var[H(\phi_0)]$ is finite and positive definite;\\
 4. $\E[\partial H(\phi_0)/\partial \phi']$ exists and is invertible;\\
 5. A neighborhood $N$ of $\phi_0$ such that $\E[\sup_{\phi\in N}||H(\phi)||]$,
 $\E[\sup_{\phi\in N}||\partial H(\phi)/\partial \phi'||]$, and $\E[\sup_{\phi\in N}||H(\phi)H(\phi)'||]$ are all finite, where $||\cdot||$ denotes Frobenius norm;\\
 6. For all $\phi$ in a neighborhood $N$ of $\phi_0$, $\E_{\phi^*}[||H(\phi^*)||]$ and $\E_{\phi\in N}[H(\phi)H(\phi)']$ are bounded.

\begin{proof}
We start from the proof of double robustness. Under some regularity conditions \citep{white1982maximum}, the nuisance estimators, $\widehat q(Z,A,X)$ and $\widehat h(W,A,X)$ converge in probability to $q^*(Z,A,X)$ and $h^*(W,A,X)$. Suppose $h^*(W,A,X)$ is correctly specified,
\begin{align*}
&\E \left[(-1)^{1-A} q^*(Z,A,X) [Y- h(W,A,X)] + h(W,1,X) - h(W,0,X) - \psi \right]\\
= & \E \big[(-1)^{1-A} q^*(Z,A,X) [\E(Y|Z,A,X)- \int h(w,A,X)dF(w|Z,A,X)]\\
 & + h(W,1,X) - h(W,0,X)\big] - \psi  \\
= & \E[h(W,1,X) - h(W,0,X)] - \psi \\
= & 0.
\end{align*}
On the other hand, suppose $q^*(Z,A,X)$ is correctly specified,
\begin{align*}
&\E \left[(-1)^{1-A} q(Z,A,X) [Y- h^*(W,A,X)] + h^*(W,1,X) - h^*(W,0,X) - \psi \right]\\
= & \E \left[(-1)^{1-A} q(Z,A,X) Y |Z,A,X \right] - \E \left[(-1)^{1-A} q(Z,A,X) h^*(W,A,X) |Z,A,X \right]\\
+& \E\left[ h^*(W,1,X) - h^*(W,0,X) \right] - \psi.\\
\end{align*}

Note that
\begin{align*}
& \E \left[(-1)^{1-A} q(Z,A,X) Y\right]\\
= & \E \left[(-1)^{1-A} q(Z,A,X) \E[Y |Z,A,X] \right] \\
= & \E \left[(-1)^{1-A} q(Z,A,X) \E[h(W,A,X) |Z,A,X] \right]\\
= & \E \left[(-1)^{1-A} q(Z,A,X) h(W,A,X) \right]\\
= & \E \left[(-1)^{1-A}\E[q(Z,A,X)|W,A,X] h(W,A,X) \right] \\
= & \E \left[ \frac{(-1)^{1-A} h(W,A,X)}{f(A|W,X)} \right]\\
= & \psi.
\end{align*}

Furthermore,
\begin{align*}
& \E[ (-1)^{1-A}q(Z,A,X)h^*(W,A,X) ] \\
= & \E[ (-1)^{1-A} \E[q(Z,A,X)|W,A,X]h^*(W,A,X) ] \\
= & \E\left[  \frac{(-1)^{1-A}}{f(A|W,X)} h^*(W,A,X) \right]\\
= & \E[h^*(W,1,X) - h^*(W,0,X)]
\end{align*}
cancels with $\E\left[ h^*(W,1,X) - h^*(W,0,X)\right]$.

In order to show asymptotic normality and local efficiency, we need to derive the influence function of $\widehat \psi$.
Let $\zeta$ be a vector including all nuisance parameters.
 From a standard Taylor expansion of $EIF(\psi)$ around $\psi$ and $\zeta$, following uniform weak law of large number \citep{NEWEY19942111} under the regularity conditions, 
 we have
\begin{align*}
\sqrt n (\widehat \psi -\psi ) = \frac{1}{\sqrt n} \sum_{i=1}^{n} EIF(\psi; \cO_i) + \E( \frac{\partial EIF(\psi; \cO)}{\partial \zeta} )\sqrt n(\widehat \zeta - \zeta) + o_p(1).
\end{align*}
Following from the proof of double robustness, we have $\E[\partial EIF(\psi)/\partial \zeta]=0$ under the intersection model $\cM_{int}$.
This completes our proof.
\end{proof}

\section{Average treatment effect on the treated}\label{sec:G}

In this section, we briefly consider proximal inference about the average treatment effect on the treated, 
\begin{align*}
\mu  = \E[Y(1)-Y(0)|A=1], 
\end{align*}
 which is often equally of interest \citep{heckman1998characterizing,hahn1998role}.
As the first term in the contrast defining the ATT is point identified under consistency only, in order to identify the ATT, one solely needs to invoke the confounding bridge function for the control group only (along with  Assumptions~\ref{asm:consistency}, \ref{asm:condindy}-\ref{asm:condindw}, 
$Y (0) \perp A|U,X$, 
and the support of $\Pr(U,X|A=1)$ is a subset of the support of $\Pr(U,X|A=0)$), a somewhat weaker requirement than for the ATE. 
In the following theorem, we derive the efficient influence function of ATT and thus its corresponding semiparametric efficiency bound under the semiparametric model $\cM_{sp}$ which places no restrictions on the observed data distribution other than existence (but not necessarily uniqueness) of a bridge function $h$ that solves
\begin{align}
\E[Y|Z,A=0,X] &= \int h(w,X)dF(w|Z,A=0,X). \label{eq:attcondexpec}
\end{align}
A regularity condition is provided in Section~\ref{sec:G1}. 

\begin{theorem}\label{thm:eif2}
The efficient influence function of $\mu$ at the submodel of $\cM_{sp}$ where Section~\ref{sec:G1} holds, 
\begin{align}
\E[q(Z,X)|W, A=0, X] &= \frac{f(A=1|W, X)}{f(A=0|W, X)},\label{eq:attcondexpect2}
\end{align}
holds at the true data law, and $h$, $q$ are unique is given by
\begin{align}\label{eq:eif2}
EIF(\mu) = A Y/f(A=1) - (1-A) q(Z,X) [Y- h(W,X)]/f(A=1)  -A [ h(W,X) + \mu]/f(A=1).
\end{align}
Therefore, the semiparametric efficiency bound of $\mu$ at the submodel of $\cM_{sp}$ where Section~\ref{sec:G1} holds, Equation~\eqref{eq:attcondexpect2} holds at the true data law, and $h$, $q$ are unique equals $\E[EIF^2(\mu)]$.
\end{theorem}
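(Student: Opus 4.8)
The plan is to follow the two-step template used for the ATE in the proof of Theorem~\ref{thm:eif}: first exhibit a mean-zero gradient $G$ satisfying $\partial_t\mu_t|_{t=0}=\E[G\,S(\cO)]|_{t=0}$ along every regular parametric submodel, and then verify that $G$ lies in the observed-data tangent space, so that it is in fact the efficient influence function. The starting observation is that the ATT decomposes as $\mu=\E[Y(1)\mid A=1]-\E[Y(0)\mid A=1]$, where by consistency the first term equals the point-identified $\E[Y\mid A=1]=\E[AY]/f(A=1)$. Using the $A=0$ factual analog of Theorem~\ref{thm:identification}(1) together with $W\perp A\mid U,X$, one shows $\E[h(W,X)\mid U,A=1,X]=\E[Y(0)\mid U,A=1,X]$, whence $\E[Y(0)\mid A=1]=\E[h(W,X)\mid A=1]$ and therefore $\mu=\E[A\{Y-h(W,X)\}]/f(A=1)$ under the stated assumptions (including the support condition on $\Pr(U,X\mid A=1)$ relative to $\Pr(U,X\mid A=0)$). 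This representation is the only input the differentiation argument needs.

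Next I would differentiate $\mu_t=\E_t[A\{Y-h_t(W,X)\}]/f_t(A=1)$ at $t=0$, isolating three sources of $t$-dependence. The dependence of the outer law $f(\cO;t)$ contributes the ``naive'' score terms weighted by $A/f(A=1)$; the dependence of the normalizing constant $f_t(A=1)=\E_t[A]$ contributes a centering term proportional to $A\mu/f(A=1)$; and the implicit dependence through the bridge function $h_t$ must be handled separately. For the last piece I would differentiate the defining restriction $\E_t[Y-h_t(W,X)\mid Z,A=0,X]=0$ of Equation~\eqref{eq:attcondexpec}, which yields, with $\epsilon\equiv Y-h(W,X)$, the key identity
\begin{align*}
\E[\partial_t h_t(W,X)|_{t=0}\mid Z,A=0,X]=\E[\epsilon\,S(W,Y\mid Z,A=0,X)\mid Z,A=0,X],
\end{align*}
exactly paralleling the $\epsilon$-relation exploited in the proof of Theorem~\ref{thm:eif}(2).

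The crux is converting the $\partial_t h_t$ contribution into a score representation. Differentiating $\E[h(W,X)\mid A=1]=\E[f(A=1\mid W,X)\,h(W,X)]/f(A=1)$ through $h_t$ produces a term proportional to $\E[f(A=1\mid W,X)\,\partial_t h_t(W,X)]$, in which the density ratio $f(A=1\mid W,X)$ encodes the restriction to the treated and is not directly $Z$-measurable. This is precisely where the treatment bridge $q$ of Equation~\eqref{eq:attcondexpect2} enters: writing $f(A=1\mid W,X)=f(A=0\mid W,X)\,\E[q(Z,X)\mid W,A=0,X]$ lets me re-express $\E[f(A=1\mid W,X)\,\partial_t h_t(W,X)]=\E[(1-A)\,q(Z,X)\,\partial_t h_t(W,X)]$; conditioning on $(Z,A=0,X)$ and substituting the differentiated-bridge identity above then turns $\partial_t h_t$ into $\epsilon\,S$, generating exactly the correction term $(1-A)q(Z,X)\{Y-h(W,X)\}/f(A=1)$ appearing in Equation~\eqref{eq:eif2}. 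I expect this conversion, and the bookkeeping that collects the residual pieces into the stated gradient, to be the main obstacle, just as the $\Lambda_1+\Lambda_2$ decomposition was for the ATE.

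Finally, as in Theorem~\ref{thm:eif}(2), I would confirm that the assembled gradient is an element of the observed-data tangent space, rather than merely a gradient, using the surjectivity-type regularity condition of Section~\ref{sec:G1} (the ATT analog of Assumption~\ref{asm:bound}) together with uniqueness of $h$ and $q$; the semiparametric efficiency bound is then $\E[EIF^2(\mu)]$. As a consistency check I would specialize to $Z=W$ under exchangeability, where $h(W,X)=\E[Y\mid W,A=0,X]$ and $q(Z,X)=f(A=1\mid Z,X)/f(A=0\mid Z,X)$, and verify that Equation~\eqref{eq:eif2} reduces to the classical ATT influence function of \cite{hahn1998role}; this reduction also pins down the correct signs of the $A$-weighted terms.
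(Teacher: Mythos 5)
Your proposal is correct and follows essentially the same route as the paper's own proof: it represents $\E[Y(0)\mid A=1]$ through $h$, differentiates the bridge restriction \eqref{eq:attcondexpec} to trade $\partial_t h_t$ for $\epsilon\, S(W,Y\mid Z,A=0,X)$, uses \eqref{eq:attcondexpect2} in the form $f(A=1\mid W,X)=f(A=0\mid W,X)\,\E[q(Z,X)\mid W,A=0,X]$ to convert the treated-density weight into the $(1-A)\,q(Z,X)$ correction term, and verifies tangent-space membership under the surjectivity condition of Section~\ref{sec:G1}; differentiating the combined functional $\E_t[A\{Y-h_t(W,X)\}]/f_t(A=1)$ rather than handling $\E[Y(1)\mid A=1]$ separately and perturbing $\E_t[h_t(W,X)\mid A=1]$, as the paper does, is a purely cosmetic difference. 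One point your Hahn-type consistency check will surface: a correct execution of this argument yields the last term as $-A[h(W,X)+\mu]/f(A=1)$ rather than the $-A[h(W,X)-\mu]/f(A=1)$ printed in \eqref{eq:eif2} (the printed expression has expectation $2\mu$, not zero); this is a sign typo in the theorem statement, and your version agrees with what one obtains by combining the paper's derived influence function for $\E[Y(0)\mid A=1]$, namely $(1-A)q(Z,X)\epsilon/f(A=1)+A[h(W,X)-\E(Y(0)\mid A=1)]/f(A=1)$, with the standard treated-arm term $A\{Y-\E[Y\mid A=1]\}/f(A=1)$.
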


The above theorem characterizes the efficient influence function of the treatment effect on the treated $\mu$ in $\cM_{sp}$ and therefore characterizes the semiparametric efficiency bound for the model.
Estimation and inference follows directly in a manner analogous to the approach described in the previous section for the ATE now based on the efficient influence function \eqref{eq:eif2}, although details are omitted.

\subsection{Regularity condition for average treatment effect on the treated}\label{sec:G1}






 Let $T:L_2(W,X)\rightarrow L_2(Z,X)$ be the operator given by $T(g)\equiv \E[g(W,X)|Z,A=0,X]$, and the adjoint $T':L_2(Z,X)\rightarrow L_2(W,X)$ be $T'(g)\equiv \E[g(Z,X)|W,A=0,X]$. We assume that $T$ and $T'$ are surjective.

\subsection{Proof of Theorem~\ref{thm:eif2}}\label{sec:G2}
\begin{proof}
We essentially need to consider $\mu = \E[Y(0)|A=1]$. Note that $\mu$ is identified by 
$$\E[h(W,X)|A=1]~~~~\text{and}~~~~\E[I(A=0)Yq(Z,X)]/p,$$
where $p$ stands for $f(A=1)$. The proofs of identification are similar to those of Theorems~\ref{thm:id} and \ref{thm:identification}, respectively, and thus we omit here. 
In order to find the efficient influence function for $\mu$, we need to first find a random variable $G$ with mean 0 and
\begin{align}\label{eq:score2}
\frac{\partial \mu_t}{\partial t}|_{t=0} = \E[GS(\cO;t)]|_{t=0},
\end{align}
where $S(\cO;t)=\partial \log f(\cO;t)/\partial t$, and $\mu_t$ is the parameter of interest $\mu$ under a regular parametric submodel in $\mM_{sp}$ indexed by $t$ that includes the true data generating mechanism at $t = 0$ \citep{vaart_1998}.

Recall that
\begin{align*}
\E[Y|Z,A=0,X] = \int h(w,X)dF(w|Z,A=0,X),
\end{align*}
so we have
\begin{align*}
\partial \E_t[Y-h_t(W,X)|Z,A=0,X]/\partial t|_{t=0}=0.
\end{align*}
Thus,
\begin{align*}
\int \frac{\partial [\{y-h_t(w,X)\}f_t(w,y|Z,A=0,X)]}{\partial t}|_{t=0} d(w,y) = 0.
\end{align*}
Let $\epsilon = Y - h(W,X)$, we have that
\begin{align*}
\E[\epsilon S(W,Y|Z,A=0,X)|Z,A=0,X] = \E[\partial h_t(W,X)/\partial t|_{t=0}|Z,A=0,X].
\end{align*}

The left hand side of Equation~\eqref{eq:score2} is equal to
\begin{align*}
&\partial \mu_t/\partial t|_{t=0}\\ =& \partial [\int \int h_t(w,x) dF_t(w,x|A=1)dF_t(x|A=1)]/\partial t|_{t=0} \\
=& \E[h(W,X)S(W,X|A=1)|A=1] +\int \int \partial h_t(w,x)/\partial t|_{t=0} dF(w,x|A=1)dF(x|A=1).
\end{align*}
The first term is equal to
\begin{align*}
& \E[h(W,X)S(W,X|A=1)|A=1] \\
=& \E[(h(W,X)-\mu)S(W,X|A=1)|A=1] \\
=& \E[A(h(W,X)-\mu)S(W,X|A)]/p \\
= & \E[A(h(W,X)-\mu)S(\cO)]/p.
\end{align*}
The second term is equal to
\begin{align*}
& \int \int \partial h_t(w,x)/\partial t|_{t=0} p dF(w,x|A=1)dF(x|A=1) \\
= &  \E[ \frac{f(A=0|W,X)}{f(A=0|W,X)}f(A=1|W,X)\partial h_t(W,X)/\partial t|_{t=0}]\\
= & \E[f(A=0|W,X)\E[q(Z,X)|W,A=0,X]\partial h_t(W,X)/\partial t|_{t=0} ]\\
= & \E[I(A=0)q(Z,X)\partial h_t(W,X)/\partial t|_{t=0} ]\\
= & \E[f(A=0|Z,X)q(Z,X)\E[\partial h_t(W,X)/\partial t|_{t=0}|Z,A=0,X] ]\\
= & \E[f(A=0|Z,X)q(Z,X)\epsilon S(W,Y|Z,A=0,X) ]\\
= & \E[I(A=0)q(Z,X)\epsilon S(W,Y|Z,A,X) ]\\
= & \E[I(A=0)q(Z,X)\epsilon S(W,Y|Z,A,X) ] + \E[I(A=0)q(Z,X)\epsilon S(Z,A,X) ]\\
= & \E[I(A=0)q(Z,X)\epsilon S(\cO) ].
\end{align*}
Finally, combing two terms gives
\begin{align*}
\frac{\partial \mu_t}{\partial t}|_{t=0} =\E\left[ \left\{ I(A=0) q(Z,X) [Y- h(W,X)]/p + I\{A=1\}[h(W,X)- \mu]/p  \right\} S(\cO)\right].
\end{align*}
Therefore, \begin{align}\label{eq:iff}
I\{A=0\} q(Z,X) [Y- h(W,X)]/p + I\{A=1\}[h(W,X)- \mu]/p,
\end{align}
is an influence function of $\mu$.  
Next, we show that the influence function~\eqref{eq:iff}
belongs to the tangent space 
\begin{align*}
&   \Lambda_1 + \Lambda_2 \\ & \equiv  \{S(Z,A,X) \in L_2(Z,A,X):\E[S(Z,A,X)]=0\}\\& + \{S(Y,W|Z,A,X)\in L_2(Z,A,X)^\perp:\E[\epsilon S(Y,W|Z,A=0,X)|Z,A=0,X]\in cl(R(T)), \\& ~~~~ S(Y,W|Z,A=1,X)~\text{unrestricted}\}.
\end{align*}
To see this, 
note that we have the following decomposition of Equation~\eqref{eq:iff}, 
\begin{align*}
& I\{A=0\} q(Z,X) [Y- h(W,X)]/p + I\{A=1\}\{h(W,X)- \mu\}/p\\
= & \{ \E[ I\{A=1\}\{h(W,X)- \mu\} |Z,A,X]\}/p\\&  + I\{A=1\} \{h(W,X)-\mu\}/p-\{\E[ I\{A=1\}\{h(W,X)-\mu\}|Z,A,X]\}/p\\& + I\{A=0\} q(Z,X) [Y- h(W,X)]/p,
\end{align*}
where 
$ \{\E[I\{A=1\}(h(W,X)- \mu)|Z,A,X]\}/p \in \Lambda_1$, and the remaining part 
belongs to $\Lambda_2$ as
\begin{align*}
  \E[  q(Z,X)\epsilon^2|Z,A=0,X]/p \in cl(R(T)),
\end{align*}
and
\begin{align*}
&\E[I\{A=1\} \{h(W,X)-\mu\}/p -\{\E[ I\{A=1\}\{h(W,X)-\mu\}|Z,A,X]\}/p]=0,\\
&\E[I\{A=0\} q(Z,X)\epsilon/p|Z,A,X]=0,
\end{align*}
which therefore completes the proof.
\end{proof}

\section{Choices of the parameters for data generating process}\label{sec:H}

We consider generating data $(X,W,A,Z,U,Y)\ $such that
\begin{eqnarray*}
\E\left( Y|U,a,X\right)  &=& \int h(w,a,X)dF(w|U,X), a=0,1, \\
\frac{1}{\Pr \left( A=a|U,X\right) } &=&\int q(z,a,X)dF(z|U,a,X),a=0,1.
\end{eqnarray*}
Let
\[
h(W,A,X)=b_{0}+b_{a}A+b_{w}W+b_{x}X,
\]
and 
\[
q(Z,A,X)=1+\exp \left\{ \left( -1\right) ^{1-A}t_{0}+\left( -1\right)
^{1-A}t_{z}Z+\left( -1\right) ^{1-A}t_{a}A + \left( -1\right) ^{1-A}t_{x}X\right\},
\]
therefore, we have that 
\begin{eqnarray*}
&&\frac{1}{\Pr \left(A|U,X\right) }\\&=&1+\exp \left\{ t_{0}\left( -1\right)
^{1-A}+t_{a}A\left( -1\right) ^{1-A}+t_{x}X\left( -1\right) ^{1-A}\right\} \int \exp \left\{ \left(
-1\right) ^{1-A}t_{z}z\right\} dF(z|U,A,X).
\end{eqnarray*}
Suppose that $Z|U,A,X\sim N(\theta _{0}+\theta _{a}A+\theta _{u}U+\theta _{x}X,\sigma
_{z|u,a,x}^{2})$ so that
\begin{eqnarray*}
&&\frac{1}{\Pr \left( A|U,X\right) }\\ &=&1+\exp \left\{ \left( -1\right)
^{1-A}t_{0}+\left( -1\right) ^{1-A}t_{a}A+\left( -1\right) ^{1-A}t_{x}X\right\} \int \exp \left\{ \left(
-1\right) ^{1-A}t_{z}z\right\} dF(z|U,A,X) \\
&=&1+\exp \left\{ \left( -1\right) ^{1-A}[t_{0}+t_{a}A+t_{x}X]+\left( -1\right) ^{1-A}t_{z}\left( \theta _{0}+\theta
_{a}A+\theta _{u}U+\theta _{x}X\right) +\frac{t_{z}^{2}\sigma _{z|u,a,x}^{2}}{2}\right\}.
\end{eqnarray*}
Now we need
\begin{eqnarray*}
\Pr \left( A=1|U,X\right) +\Pr \left( A=0|U,X\right)  = 1.
\end{eqnarray*}
Thus
\begin{eqnarray*}
&&t_{0}+t_{a}+t_{x}X+t_{z}\left( \theta _{0}+\theta _{a}+\theta
_{u}U+\theta _{x}X\right) +\frac{t_{z}^{2}\sigma _{z|u,a,x}^{2}}{2} \\
&=&t_{0}+t_{x}X+t_{z}\left( \theta _{0}+\theta _{u}U+\theta _{x}X\right) -\frac{
t_{z}^{2}\sigma _{z|u,a,x}^{2}}{2},
\end{eqnarray*}
which implies that
\[
t_{a}=-t_{z}^{2}\sigma _{z|u,a,x}^{2}-t_{z}\theta_a.
\]
We therefore have that
\begin{eqnarray*}
q(Z,A,X)=1+\exp \left\{ \left( -1\right)^{1-A}[t_{0}+t_{z}Z+t_{x}X-t_{z}^{2}\sigma _{z|u,a,x}^{2}A-t_{z}\theta_a
A]\right\}.
\end{eqnarray*}

Next we let
\[
\left( Z,W,U\right) |A,X\sim MVN\left( \left(
\begin{array}{c}
\alpha _{0}+\alpha _{a}A+\alpha _{x}X \\
\mu _{0}+\mu _{a}A+\mu _{x}X \\
\kappa _{0}+\kappa _{a}A+\kappa _{x}X
\end{array}
\right) ,\left(
\begin{array}{ccc}
\sigma _{z}^{2} & \sigma _{zw} & \sigma _{zu} \\
\sigma _{zw} & \sigma _{w}^{2} & \sigma _{wu} \\
\sigma _{zu} & \sigma _{wu} & \sigma _{u}^{2}
\end{array}
\right) \right).
\]
Therefore, 
\begin{eqnarray*}
\E[Z|U,A,X] = \alpha _{0}+\alpha _{a}A+\alpha _{x}X+\frac{\sigma_{zu}}{\sigma_{u}^2} (U-\kappa_0-\kappa_a A-\kappa_x X),\\
\theta_0 = \alpha_0-\frac{\sigma_{zu}}{\sigma_{u}^2}\kappa_0,\\
\theta_a = \alpha_a -\frac{\sigma_{zu}}{\sigma_{u}^2}\kappa_a,\\
\theta_x = \alpha_x -\frac{\sigma_{zu}}{\sigma_{u}^2}\kappa_x,\\
\theta_u=\frac{\sigma_{zu}}{\sigma_{u}^2}.
\end{eqnarray*}

In addition, we impose
\begin{eqnarray*}
&&W\perp (A,Z)|U,X.
\end{eqnarray*}

The independence implies that $W|U,A,Z,X \sim$
\begin{eqnarray*}
&& N\left( \left( \mu _{0}+\mu _{a}A+\mu _{x}X+\Sigma _{w(u,z)}\Sigma
_{u,z}^{-1}\left(
\begin{array}{c}
U-\kappa _{0}-\kappa _{a}A-\kappa _{x}X\\
Z-\alpha _{0}-\alpha _{a}A-\alpha _{x}X
\end{array}
\right) \right) ,\sigma _{w}^{2}-\Sigma _{w(u,z)}\Sigma _{u,z}^{-1}\Sigma
_{w(u,z)}^{T}\right), \\
&&\text{where}\\
&& \Sigma _{w(u,z)} = \left(
\begin{array}{cc}
\sigma _{wu} & \sigma _{wz}
\end{array}
\right),\\
&& \Sigma _{u,z} =\left(
\begin{array}{cc}
\sigma _{u}^{2} & \sigma _{zu}\\
\sigma _{zu} & \sigma _{z}^{2}
\end{array}
\right),
\end{eqnarray*}
such that
\[
\E\left( W|U,A,Z,X\right) =\E\left( W|U,A,X\right) =\mu _{0}+\mu _{a}A+\mu _{x}X+\frac{
\sigma _{wu}}{\sigma _{u}^{2}}\left( U-\left( \kappa _{0}+\kappa
_{a}A+\kappa
_{x}X\right) \right)
\]%
does not depend on $A$ and $Z$. Therefore
\[
\frac{\sigma_{wz}\sigma_{u}^2-\sigma_{wu}\sigma_{zu}}{\sigma_{z}^2\sigma_{u}^2-\sigma_{zu}^2}=0.
\]
and
\[
\mu _{a}=\frac{\sigma _{wu}}{\sigma _{u}^{2}}\kappa _{a}.
\]

Moreover, we impose that
\[
Z\perp Y|U,A,X,
\]
which we impose by setting
\begin{eqnarray*}
\E\left( Y|W,U,A,Z,X\right)  &=&\E\left( Y|U,A,Z,X\right) +\omega \left\{
W-\E\left( W|U,A,Z,X\right) \right\}  \\
&=&\E\left( Y|U,A,X\right) +\omega \left\{ W-\E\left( W|U,X\right) \right\}  \\
&=&b_{0}+b_{a}A+b_{x}X+b_{w}\E\left( W|U,X\right) +\omega \left\{ W-\E\left( W|U,X\right)
\right\}  \\
&=&b_{0}+b_{a}A+b_{x}X+\left( b_{w}-\omega \right) \E\left( W|U,X\right) +\omega W,
\end{eqnarray*}
where
\[
\E\left( W|U,X\right) =\E\left( W|U,A,Z,X\right) =\mu _{0}+\mu _{x}X+\frac{\sigma _{wu}}{
\sigma _{u}^{2}}\left( U-\kappa _{0}-\kappa_{x}X\right).
\]
Furthermore, notice that as $\Pr \left( A=a|U,X\right) =\Pr \left( A=a|U,W,X\right) $,
the log odds ratio must be that
\begin{eqnarray*}
\log OR\left( A,U|W,X\right)  &=&-t_{z}\theta _{u}A U\\
&=&\frac{
\E\left( U|A=1,W,X\right) -\E\left( U|A=0,W,X\right) }{\sigma _{u|w,a,x}^{2}}UA  \\
&=&\frac{\kappa _{a}-\sigma _{wu}\mu _{a}/\sigma _{w}^{2}}{\sigma
_{u|w,a,x}^{2}}UA.
\end{eqnarray*}
Therefore
\[
t_{z}\theta _{u}=-\frac{\kappa _{a}-\sigma _{wu}\mu _{a}/\sigma _{w}^{2}}{
\sigma _{u|w,a,x}^{2}},
\]
and
\[
t_{z}=-\frac{\kappa _{a}-\sigma _{wu}\mu _{a}/\sigma _{w}^{2}}{\theta
_{u}\sigma _{u|w,a,x}^{2}}.
\]

Recall that
\begin{eqnarray*}
&&\frac{1}{\Pr \left( A=a|W,X\right) }\\ &=&\int \frac{1}{\Pr \left(
A=a|U,W,X\right) }dF\left( U|W,A=a,X\right)  \\
&=&1+\exp \left\{ \left( -1\right) ^{1-a}[t_{0}+t_{a}a+t_{x}X]+\left( -1\right) ^{1-a}t_z\left( \theta _{0}+\theta _{a}a+\theta _{x}X\right) +\frac{t_{z}^{2}\sigma
_{z|u,a,x}^{2}}{2}\right\}  \\
&&\times \int \exp \left[ \left( -1\right) ^{1-a}t_{z}\theta _{u}U\right]
dF\left( U|W,A=a,X\right)  \\
&=&1+\exp \left\{ \left( -1\right) ^{1-a}[t_{0}+t_{a}a+t_{x}X]+\left( -1\right)
^{1-a}t_{z}\left( \theta _{0}+\theta _{a}a+\theta _{x}X\right) +\frac{t_{z}^{2}\sigma
_{z|u,a,x}^{2}}{2}\right\}  \\
&&\times \exp \left[ \left( -1\right) ^{1-a}t_{z}\theta
_{u}\E(U|W,A=a,X)+\sigma _{u|w,a,x}^{2}\frac{t_{z}^{2}\theta _{u}^{2}}{2}\right].
\end{eqnarray*}

Note that
\[
\Pr \left( A=0|W,X\right) +\Pr \left( A=1|W,X\right) =1,
\]
as long as
\begin{eqnarray*}
&&t_{0}+t_{a}+t_{x}X+t_{z}
\left( \theta _{0}+\theta _{a}+\theta _{x}X\right) +\frac{t_{z}^{2}\sigma _{z|u,a,x}^{2}}{2}
\\
&&+t_{z}\theta _{u}\left[\E\left( U|A=1,W,X\right) -\E\left( U|A=0,W,X\right)
\right] +\sigma _{u|w,a,x}^{2}\frac{t_{z}^{2}\theta _{u}^{2}}{2} \\
&=&t_{0}+t_{x}X+t_{z}(\theta _{0}+\theta_{x}X)-\frac{t_{z}^{2}\sigma _{z|u,a,x}^{2}}{2} -\sigma _{u|w,a,x}^{2}\frac{t_{z}^{2}\theta _{u}^{2}}{2},
\end{eqnarray*}
which holds under the model because%
\[
\E\left( U|A=0,W,X\right) -\E\left( U|A=1,W,X\right) =-[\kappa _{a}-\sigma
_{wu}\mu _{a}/\sigma _{w}^{2}],
\]
and
\[
t_{z}\theta _{u}=-\frac{\kappa _{a}-\sigma _{wu}\mu _{a}/\sigma _{w}^{2}}{
\sigma _{u|w,a,x}^{2}}.
\]
Thus
\[
\E\left( U|A=1,W,X\right) -\E\left( U|A=0,W,X\right) =-t_{z}\theta _{u}\sigma_{u|w,a,x}^2,
\]
and therefore
\[
t_{z}\theta _{u}\left[ \E\left( U|A=1,W,X\right) -\E\left( U|A=0,W,X\right)
\right] =-t_{z}^{2}\theta _{u}^{2}\sigma_{u|w,a,x}^2.
\]
Recall also that
\[
t_{a}=-t_{z}^{2}\sigma _{z|u,a,x}^{2}-t_z\theta_a,
\]
therefore
\begin{eqnarray*}
&&t_{a}+t_{z}
\theta _{a} +t_{z}^{2}\sigma _{z|u,a,x}^{2}-\sigma_{u|w,a,x}^2t_{z}^2\theta _{u}^2 +\sigma _{u|w,a,x}^{2}t_{z}^{2}\theta _{u}^{2}=0.
\end{eqnarray*}

Finally, recall that
\begin{eqnarray*}
&&\frac{1}{\Pr \left(A|U,X\right) }\\ &=&1+\exp \left\{ \left( -1\right)
^{1-A}t_{0}+\left( -1\right) ^{1-A}t_{a}A+\left( -1\right) ^{1-A}t_{x}X\right\} \int \exp \left\{ \left(
-1\right) ^{1-A}t_{z}Z\right\} dF(Z|U,A,X) \\
&=&1+\exp \left\{ \left( -1\right) ^{1-A}[t_{0}+t_{a}A+t_{x}X]+\left( -1\right) ^{1-A}t_{z}\left( \theta _{0}+\theta
_{a}A+\theta _{u}U+\theta _{x}X\right) +\frac{t_{z}^{2}\sigma _{z|u,a,x}^{2}}{2}\right\},
\end{eqnarray*}
and thus
\begin{eqnarray*}
&&\frac{1}{\Pr \left( A=a|X\right) }\\
&=& 1+\exp \left\{ \left( -1\right) ^{1-a}[t_{0}+t_{a}a+t_{x}X]+\left( -1\right) ^{1-a}t_{z}\left( \theta _{0}+\theta
_{a}a+\theta
_{x}X\right) +\frac{t_{z}^{2}\sigma _{z|u,a,x}^{2}}{2}\right\}\\
&&\int \exp \left[ \left( -1\right) ^{1-a}t_{z}\theta _{u}U\right] dF(U|A=a,X)\\
&=& 1+\exp \left\{ \left( -1\right) ^{1-a}[t_{0}+t_{a}a+t_{x}X]+\left( -1\right) ^{1-a}t_{z}\left( \theta _{0}+\theta
_{a}a+\theta
_{x}X\right) +\frac{t_{z}^{2}\sigma _{z|u,a,x}^{2}}{2}\right\}\\
&&\exp \left[ \left( -1\right) ^{1-a}t_{z}\theta
_{u}\E(U|A=a,X)+\sigma _{u|a,x}^{2}\frac{t_{z}^{2}\theta _{u}^{2}}{2}\right]\\
&=& 1+\exp \left\{ \left( -1\right) ^{1-a}[t_{0}+t_{a}a+t_{x}X]+\left( -1\right) ^{1-a}t_{z}\left( \theta _{0}+\theta
_{a}a+\theta
_{x}X\right) +\frac{t_{z}^{2}\sigma _{z|u,a,x}^{2}}{2}\right\}\\
&&\exp \left[ \left( -1\right) ^{1-a}t_{z}\theta
_{u}[\kappa_0+\kappa_a a+\kappa_x X]+\sigma _{u|a,x}^{2}\frac{t_{z}^{2}\theta _{u}^{2}}{2}\right].
\end{eqnarray*}

Thus, $A|X$ is generated by
\begin{eqnarray*}
&&1/\Pr(A=1|X)\\&=& 1+\exp \left\{ t_{0}+t_{a}+t_{x}X+t_{z}\left( \theta _{0}+\theta
_{a}+\theta
_{x}X\right) +\frac{t_{z}^{2}(1-\frac{\sigma_{zu}^2}{\sigma_z^2\sigma_u^2})\sigma _{z}^{2}}{2}\right\}\\
&&\exp \left[t_{z}\theta
_{u}[\kappa_0+\kappa_a+\kappa_x X]+\sigma _{u}^{2}\frac{t_{z}^{2}\theta _{u}^{2}}{2}\right].
\end{eqnarray*}

We conclude this section by summarizing the constraints of data generating mechanism, 
\begin{align*}
t_{a}=-t_{z}^{2}\sigma _{z|u,a,x}^{2}-t_{z}\theta_a=-t_{z}^{2}[(1-\frac{\sigma_{zu}^2}{\sigma_z^2\sigma_u^2})\sigma_{z}^2]-t_{z}\theta_a,\\
\sigma_{wz}\sigma_{u}^2-\sigma_{wu}\sigma_{zu}=0,\\
\mu _{a}\sigma _{u}^{2}=\sigma _{wu}\kappa _{a},\\
-\theta
_{u}t_{z}[(1-\frac{\sigma_{wu}^2}{\sigma_u^2\sigma_w^2})\sigma_{u}^2]=-\theta
_{u}\sigma _{u|w,a,x}^{2}t_{z}=\kappa _{a}-\sigma _{wu}\mu _{a}/\sigma _{w}^{2}.
\end{align*}

\section{Additional numerical results}\label{sec:addi}

\subsection{Simulation results when $U$ is not a confounder}

Consider $\kappa_a=0$, so $U$ does not affect $A$ and therefore is not a confounder. The parameters are set as follows:
\begin{itemize}
\setlength\itemsep{1em}

\item $\Gamma_x=(0.25,0.25)^T$, $\Sigma_x=\left(
\begin{array}{ccc}
\sigma_x^2 & 0\\
0 & \sigma_x^2\\
\end{array}
\right)$, $\sigma_x=0.25$.

\item $\Pr \left(A=1|X\right)=\left[1+ \exp\{(0.125,0.125)^TX\}\right]^{-1}$.

\item $\alpha_0= 0.25$, $\alpha_a= 0.25$, $\alpha_x= (0.25,0.25)^T$.

\item $\mu_0= 0.25$, $\mu_a= 0$, $\mu_x= (0.25,0.25)^T$.

\item $\kappa_0= 0.25$, $\kappa_a= 0$, $\kappa_x= (0.25,0.25)^T$.

\item $\Sigma=\left(
\begin{array}{ccc}
1 & 0.25 & 0.5 \\
0.25 & 1 & 0.5 \\
0.5 & 0.5 & 1
\end{array}
\right), \sigma_y=0.25.$

\item $b_0= 2$, $b_a= 2$, $b_x= (0.25,0.25)^T$, $b_w=4$, $\omega=2$. 

\item $t_0=0.25, t_z=0, t_a=0,$ $t_x= (0.25,0.25)^T$.
\end{itemize}

\begin{table}[!h]
\begin{center}
\caption{\label{j1}
Simulation results: absolute bias ($\times 10^{-2}$) and MSE ($\times 10^{-2}$)}
\begin{tabular}{cccccc}
\noalign{\smallskip}
\noalign{\smallskip}
 & & $\widehat \psi_{DR}$ & $\widehat \psi_{POR}$ & $\widehat \psi_{PIPW}$ & $\widehat \psi_{PDR}$ \\
\noalign{\smallskip}
\multirow{2}{*}{Scenario~1}  & Bias &  0.7  &  0.2  &  0.2  & 0.2 \\
                          & MSE &  1.5  &   0.7  &  0.7 & 0.7   \\
\noalign{\smallskip}
\multirow{2}{*}{Scenario~2} & Bias  &  0.7 &  37.3  &  0.2  & 0.7 \\
                & MSE &  1.5   &  21.4  &  0.7 & 3.6 \\
\noalign{\smallskip}
\multirow{2}{*}{Scenario~3} & Bias  & 0.7 &  0.2  &  0.4  & 0.2  \\
                & MSE  &  1.5  &  0.7  & 0.2 & 0.7  \\
                \noalign{\smallskip}
\multirow{2}{*}{Scenario~4} & Bias  & 0.7 &  21.8  &  0.3  & 0.1  \\
                & MSE  &  1.5  &  17.5  &  0.4 & 10.8 \\
\noalign{\smallskip}
\end{tabular}
\end{center}
\end{table}

\begin{table}[!h]
\begin{center}
\caption{\label{j2}
Simulation results: coverage ($\%$) and average length ($\times 10^{-2}$)}
\begin{tabular}{cccccc}
\noalign{\smallskip}
\noalign{\smallskip}
 & & $\widehat \psi_{DR}$ & $\widehat \psi_{POR}$ & $\widehat \psi_{PIPW}$ & $\widehat \psi_{PDR}$ \\
\noalign{\smallskip}
\multirow{2}{*}{Scenario~1}  & Coverage &  94.6   &  94.2  &  94.2  &  94.2  \\
                          & Length &  47.0  &   31.2  &  31.2 & 31.2   \\
\noalign{\smallskip}
\multirow{2}{*}{Scenario~2} & Coverage  &  94.6 &  41.8  &  94.2  & 97.6  \\
                & Length &  47.0  &  71.2  &  31.2 & 69.5  \\
\noalign{\smallskip}
\multirow{2}{*}{Scenario~3} & Coverage  & 94.6 & 94.2  &  99.8  & 94.8  \\
                & Length  &  47.0  &  31.2  &  32.4 & 32.5  \\
                \noalign{\smallskip}
\multirow{2}{*}{Scenario~4} & Coverage  & 94.6 &  88.4  & 98.0  & 99.6  \\
                & Length  &  47.0  & 136.4  & 33.2 & 142.9 \\
\noalign{\smallskip}
\end{tabular}
\end{center}
\source{}
\end{table}

In this subsection, the standard doubly robust estimator is given by
\begin{align*}
\widehat \psi_{DR} = \PP_n \left\{ \frac{(-1)^{1-A}}{\widehat f(A|X)} \{Y- \widehat \E[Y|X,A]\} + \widehat \E[Y|X,A=1] -\widehat \E[Y|X,A=0] \right\},
\end{align*}
where $\widehat f(A|X)$ and $\widehat \E[Y|X,A]$ are estimated via standard logistic regression and linear regression, respectively.
As can be seen from Tables~\ref{j1} and \ref{j2}, as expected, both the standard doubly robust estimator and the proposed estimators perform well in this setting when the working models are correctly specified.

\subsection{Sensitivity analysis on violation of Assumptions~\ref{asm:condindy} and \ref{asm:condindw}}

For violation of Assumptions~\ref{asm:condindy} and \ref{asm:condindw}, we consider the following model similar to that of \cite{miao2018confounding} but with some modifications so that $Z$ affects $W$:
\begin{itemize}
\setlength\itemsep{1em}

\item $X,U \sim MVN\left( \left(
\begin{array}{c}
0\\
0
\end{array}
\right) ,\left(
\begin{array}{cc}
1 & 0.5 \\
0.5 & 1
\end{array}
\right) \right).$

\item $Z = 0.5 + 0.5X + U + \epsilon_1$,
$\logit\{\Pr(A = 1 | Z, X, U)\} = - 0.5 + Z + 0.5X + 0.3U$.

\item $W = 1 - X + 0.4 U + 1.5 Z +\epsilon_2$, $Y (a) = 1 + 0.5a + 2X + U + 1.5aU + 2\epsilon_2$.

\item $\epsilon_1, \epsilon_2 \sim N(0, 1)$.

\end{itemize}

\begin{table}[!h]
\begin{center}
\caption{\label{j3}
Simulation results: absolute bias ($\times 10^{-2}$) and MSE ($\times 10^{-2}$)}
\begin{tabular}{cccccc}
\noalign{\smallskip}
\noalign{\smallskip}
 & & $\widehat \psi_{DR}$ & $\widehat \psi_{POR}$ & $\widehat \psi_{PIPW}$ & $\widehat \psi_{PDR}$ \\
\noalign{\smallskip}
\multirow{2}{*}{Scenario~1}  & Bias &  10.3  &  18.4  &  20.5  &  20.6 \\
                          & MSE &  3.7  &   4.6  &  7.8 &  7.8   \\
\noalign{\smallskip}
\multirow{2}{*}{Scenario~2} & Bias  &  22.6 &  2.1  &  20.5  & 24.0 \\
                & MSE &  15.0   &  1.7  &  7.8  & 14.5 \\
\noalign{\smallskip}
\multirow{2}{*}{Scenario~3} & Bias  & 68.8 &  18.4  &  11.7  & 45.2  \\
                & MSE  & 50.2  &  4.6  &  53.4 & 37.0  \\
                \noalign{\smallskip}
\multirow{2}{*}{Scenario~4} & Bias  & 29.0 &  2.2  &  53.4  & 8.8  \\
                & MSE  &  10.8  &  1.7  &  117.0 & 46.4 \\
\noalign{\smallskip}
\end{tabular}
\end{center}
\end{table}

\begin{table}[!h]
\begin{center}
\caption{\label{j4}
Simulation results: coverage ($\%$) and average length ($\times 10^{-2}$)}
\begin{tabular}{cccccc}
\noalign{\smallskip}
\noalign{\smallskip}
 & & $\widehat \psi_{DR}$ & $\widehat \psi_{POR}$ & $\widehat \psi_{PIPW}$ & $\widehat \psi_{PDR}$ \\
\noalign{\smallskip}
\multirow{2}{*}{Scenario~1}  & Coverage &  79.2   &  66.2  &  73.6  &  74.0  \\
                          & Length &  50.7 &   45.6  &  65.2 &  65.3 \\
\noalign{\smallskip}
\multirow{2}{*}{Scenario~2} & Coverage  &  78.8 &  96.0  &   73.6  & 74.6  \\
                & Length & 101.9  &  51.6  &  65.2 & 94.1  \\
\noalign{\smallskip}
\multirow{2}{*}{Scenario~3} & Coverage  & 5.6 &  66.2  & 42.0  & 47.2  \\
                & Length  & 66.3  & 45.6  &  673.5 & 679.2  \\
                \noalign{\smallskip}
\multirow{2}{*}{Scenario~4} & Coverage  & 55.0 &  95.8  &  42.6  & 63.4  \\
                & Length  & 61.7  & 51.6  &  987.9 &  1915.2 \\
\noalign{\smallskip}
\end{tabular}
\end{center}
\source{}
\end{table}

As can be seen from Tables~\ref{j3} and \ref{j4}, the proximal estimators are comparable to the standard doubly robust estimator.
It is not surprising that the proximal estimators are invalid as the identifying assumptions are violated. 

\subsection{Sensitivity analysis on dependence between $Z$ and $W$}

In Section~\ref{sec:numeric}, the correlation coefficient between $Z$ and $W$ given $X$ and $A$ is $\sigma_{wz}=0.25$. In this section, we consider a weaker association between $Z$ and $W$ given $X$ and $A$, i.e., $\sigma_{wz}=0.15$. The parameters are set as follows:
\begin{itemize}
\setlength\itemsep{1em}

\item $\Gamma_x=(0.25,0.25)^T$, $\Sigma_x=\left(
\begin{array}{ccc}
\sigma_x^2 & 0\\
0 & \sigma_x^2\\
\end{array}
\right)$, $\sigma_x=0.25$.

\item $\Pr \left(A=1|X\right)=\left[1+ \exp\{(0.125,0.125)^TX\}\right]^{-1}$.

\item $\alpha_0= 0.25$, $\alpha_a= 0.25$, $\alpha_x= (0.25,0.25)^T$.

\item $\mu_0= 0.25$, $\mu_a= 0.075$, $\mu_x= (0.25,0.25)^T$.

\item $\kappa_0= 0.25$, $\kappa_a= 0.25$, $\kappa_x= (0.25,0.25)^T$.

\item $\Sigma=\left(
\begin{array}{ccc}
1 & 0.15 & 0.5 \\
0.15 & 1 & 0.3 \\
0.5 & 0.3 & 1
\end{array}
\right), \sigma_y=0.25.$

\item $b_0= 2$, $b_a= 2$, $b_x= (0.25,0.25)^T$, $b_w=4$, $\omega=2$. 

\item $t_0=0.25, t_z=-0.5, t_a=-0.125,$ $t_x= (0.25,0.25)^T$.
\end{itemize}

\begin{table}[!h]
\begin{center}
\caption{\label{j5}
Simulation results: absolute bias ($\times 10^{-2}$) and MSE ($\times 10^{-2}$)}
\begin{tabular}{cccccc}
\noalign{\smallskip}
\noalign{\smallskip}
 & & $\widehat \psi_{DR}$ & $\widehat \psi_{POR}$ & $\widehat \psi_{PIPW}$ & $\widehat \psi_{PDR}$ \\
\noalign{\smallskip}
\multirow{2}{*}{Scenario~1}  & Bias &  7.1  &  0.2  &  0.6  & 0.7 \\
                          & MSE &  0.6  &  0.8  &  0.9 & 1.0   \\
\noalign{\smallskip}
\multirow{2}{*}{Scenario~2} & Bias  &  14.3 &  48.0  &   0.6  & 0.0 \\
                & MSE &  2.9   &  25.7  &  0.9 & 1.2 \\
\noalign{\smallskip}
\multirow{2}{*}{Scenario~3} & Bias  & 12.7 &  0.2  &   16.6  & 0.0  \\
                & MSE  &  1.7  &  0.8  &  2.9 & 0.8  \\
                \noalign{\smallskip}
\multirow{2}{*}{Scenario~4} & Bias  & 26.2 &  34.2  &   13.8  & 20.0  \\
                & MSE  &  7.8  &  15.6  &  2.1 & 7.3 \\
\noalign{\smallskip}
\end{tabular}
\end{center}
\end{table}

\begin{table}[!h]
\begin{center}
\caption{\label{j6}
Simulation results: coverage ($\%$) and average length ($\times 10^{-2}$)}
\begin{tabular}{cccccc}
\noalign{\smallskip}
\noalign{\smallskip}
 & & $\widehat \psi_{DR}$ & $\widehat \psi_{POR}$ & $\widehat \psi_{PIPW}$ & $\widehat \psi_{PDR}$ \\
\noalign{\smallskip}
\multirow{2}{*}{Scenario~1}  & Coverage &  22.4   &  95.8  &  96.8  &  96.8  \\
                          & Length & 9.9  &   36.0  &  209.6 & 209.7  \\
\noalign{\smallskip}
\multirow{2}{*}{Scenario~2} & Coverage  & 67.2 &  13.0  &  96.8  & 98.4 \\
                & Length &  36.7  & 55.1  &  209.6  & 214.9  \\
\noalign{\smallskip}
\multirow{2}{*}{Scenario~3} & Coverage  & 0.8 & 95.8  &  57.6  & 95.6  \\
                & Length  &  10.8  &  36.0  &  44.9 & 46.1  \\
                \noalign{\smallskip}
\multirow{2}{*}{Scenario~4} & Coverage  & 20.2 &  56.8  &  77.0  & 88.8  \\
                & Length  &  37.7  &  87.7  &  48.9 & 102.3 \\
\noalign{\smallskip}
\end{tabular}
\end{center}
\source{}
\end{table}

As can be seen from Tables~\ref{j5} and \ref{j6}, the proximal estimators are either comparable to or slightly worse than that in Section~\ref{sec:numeric}, and they still outperform the standard doubly robust estimator in terms of bias and coverage. 

\subsection{Sensitivity analysis on real data application}

In this section, we conducted the sensitivity analysis by removing a variable from $Z$ and $W$ respectively in the data application. Table~\ref{real2} reports corresponding point estimates and confidence intervals. The results of Scenarios 1 and 2 where $Z$ only includes pafi1 do not change much. In contrast, Scenarios 3 and 4, where $Z$ only includes paco21, proximal OR and proximal DR estimates are somewhat smaller and proximal IPW estimate is positive (although not statistically significant) unlike other estimates.
This suggests that paco21 by itself may not be a sufficiently relevant treatment confounding proxy to completely account for confounding. 
In addition, the discrepancy between proximal estimators suggests potential model misspecification in this case.

\begin{table}[!h]
\begin{center}
\caption{\label{real2}
Treatment effect estimates (standard deviations) and 95\% confidence intervals of the average treatment effect. Scenario 1: $W$=ph1, $Z$=pafi1; Scenario 2: $W$=hema1, $Z$=pafi1; Scenario 3: $W$=ph1, $Z$=paco21; Scenario 4: $W$=hema1, $Z$=paco21.}
\begin{tabular}{cccccc}
\noalign{\smallskip}
\noalign{\smallskip}
& & $\widehat \psi_{DR}$ & $\widehat \psi_{POR}$ & $\widehat \psi_{PIPW}$ & $\widehat \psi_{PDR}$ \\
\noalign{\smallskip}
\multirow{2}{*}{1} & Treatment effects (SDs) &  -1.17 (0.32)   &  -1.92 (0.44)  &  -1.64 (0.46)  &  -1.74 (0.55)\\
& 95\% CIs &  (-1.79,-0.55)  &  (-2.78,-1.06)  &  (-2.54,-0.73) & (-2.82,-0.66)   \\
\noalign{\smallskip}
\multirow{2}{*}{2} & Treatment effects (SDs) &  -1.17 (0.32)   &  -1.73 (0.52)  &  -1.65 (0.31)  &   -1.61 (0.47)\\
& 95\% CIs &  (-1.79,-0.55)  &   (-2.75,-0.71)  &  (-2.25,-1.05) & (-2.54,-0.68)   \\
\noalign{\smallskip}
\multirow{2}{*}{3} & Treatment effects (SDs) &  -1.17 (0.32)   &  -1.35 (0.27)  &  0.41 (0.25)  &  -1.35 (0.27)\\
& 95\% CIs &  (-1.79,-0.55)  &   (-1.89,-0.81)  &  (-0.07,0.90) & (-2.50,-0.82)   \\
\noalign{\smallskip}
\multirow{2}{*}{4} & Treatment effects (SDs) &  -1.17 (0.32)   &  -1.02 (0.33)  &  0.37 (0.29)  &   -1.01 (0.34)\\
& 95\% CIs &  (-1.79,-0.55)  &   (-1.67,-0.36)  &  (-0.20,0.93) & (-1.68,-0.34)   \\
\noalign{\smallskip}
\end{tabular}
\end{center}
\end{table}

\newpage 
\bibliographystyle{asa}
\bibliography{msm,iv,owlsurvival,survtrees,survtrees2,causal,causal2}

\begin{thebibliography}{59}
\newcommand{\enquote}[1]{``#1''}
\expandafter\ifx\csname natexlab\endcsname\relax\def\natexlab#1{#1}\fi

\bibitem[{Ai and Chen(2003)}]{ai2003efficient}
Ai, C. and Chen, X. (2003), \enquote{Efficient estimation of models with
  conditional moment restrictions containing unknown functions,}
  \textit{Econometrica}, 71, 1795--1843.

\bibitem[{Andrews(2017)}]{andrews2017examples}
Andrews, D.~W. (2017), \enquote{Examples of L2-complete and boundedly-complete
  distributions,} \textit{Journal of Econometrics}, 199, 213--220.

\bibitem[{Bickel and Ritov(2003)}]{bickel2003}
Bickel, P.~J. and Ritov, Y. (2003), \enquote{{Nonparametric estimators which
  can be "plugged-in"},} \textit{The Annals of Statistics}, 31, 1033 -- 1053.

\bibitem[{Canay et~al.(2013)Canay, Santos, and Shaikh}]{canay2013testability}
Canay, I.~A., Santos, A., and Shaikh, A.~M. (2013), \enquote{On the testability
  of identification in some nonparametric models with endogeneity,}
  \textit{Econometrica}, 81, 2535--2559.

\bibitem[{Carrasco et~al.(2007)Carrasco, Florens, and
  Renault}]{carrasco2007linear}
Carrasco, M., Florens, J.-P., and Renault, E. (2007), \enquote{Linear inverse
  problems in structural econometrics estimation based on spectral
  decomposition and regularization,} \textit{Handbook of Econometrics}, 6,
  5633--5751.

\bibitem[{Chen et~al.(2014)Chen, Chernozhukov, Lee, and Newey}]{chen2014local}
Chen, X., Chernozhukov, V., Lee, S., and Newey, W.~K. (2014), \enquote{Local
  identification of nonparametric and semiparametric models,}
  \textit{Econometrica}, 82, 785--809.

\bibitem[{Connors et~al.(1996)Connors, Speroff, Dawson, Thomas, Harrell,
  Wagner, Desbiens, Goldman, Wu, Califf, Fulkerson, Vidaillet, Broste, Bellamy,
  Lynn, and Knaus}]{5c6af36c0fb64cfcbb482d75c2bc7ff1}
Connors, A., Speroff, T., Dawson, N., Thomas, C., Harrell, F., Wagner, D.,
  Desbiens, N., Goldman, L., Wu, A., Califf, R., Fulkerson, W., Vidaillet, H.,
  Broste, S., Bellamy, P., Lynn, J., and Knaus, W. (1996), \enquote{The
  effectiveness of right heart catheterization in the initial care of
  critically ill patients,} \textit{JAMA - Journal of the American Medical
  Association}, 276, 889--897.

\bibitem[{Cui and Tchetgen~Tchetgen(2019)}]{cui2019selective}
Cui, Y. and Tchetgen~Tchetgen, E. (2019), \enquote{Selective machine learning
  for doubly robust functionals,} \textit{arXiv preprint arXiv:1911.02029}.

\bibitem[{Dagan et~al.(2021)Dagan, Barda, Kepten, Miron, Perchik, Katz,
  Hern{\'a}n, Lipsitch, Reis, and Balicer}]{dagan2021bnt162b2}
Dagan, N., Barda, N., Kepten, E., Miron, O., Perchik, S., Katz, M.~A.,
  Hern{\'a}n, M.~A., Lipsitch, M., Reis, B., and Balicer, R.~D. (2021),
  \enquote{BNT162b2 mRNA Covid-19 vaccine in a nationwide mass vaccination
  setting,} \textit{New England Journal of Medicine}.

\bibitem[{Darolles et~al.(2011)Darolles, Fan, Florens, and
  Renault}]{darolles2011nonparametric}
Darolles, S., Fan, Y., Florens, J.-P., and Renault, E. (2011),
  \enquote{Nonparametric instrumental regression,} \textit{Econometrica}, 79,
  1541--1565.

\bibitem[{D'Haultfoeuille(2011)}]{d2011completeness}
D'Haultfoeuille, X. (2011), \enquote{On the completeness condition in
  nonparametric instrumental problems,} \textit{Econometric Theory}, 460--471.

\bibitem[{Flanders et~al.(2011)Flanders, Klein, Darrow, Strickland, Sarnat,
  Sarnat, Waller, Winquist, and Tolbert}]{flanders2011method}
Flanders, W.~D., Klein, M., Darrow, L.~A., Strickland, M.~J., Sarnat, S.~E.,
  Sarnat, J.~A., Waller, L.~A., Winquist, A., and Tolbert, P.~E. (2011),
  \enquote{A method for detection of residual confounding in time-series and
  other observational studies,} \textit{Epidemiology (Cambridge, Mass.)}, 22,
  59.

\bibitem[{Flanders et~al.(2015)Flanders, Strickland, and
  Klein}]{Flanders2015ANM}
Flanders, W.~D., Strickland, M.~J., and Klein, M. (2015), \enquote{A New Method
  for Partial Correction of Residual Confounding in Time-Series and Other
  Observational Studies.} \textit{American Journal of Epidemiology}, 185 10,
  941--949.

\bibitem[{Gagnon-Bartsch and Speed(2012)}]{GagnonBartsch2012UsingCG}
Gagnon-Bartsch, J.~A. and Speed, T.~P. (2012), \enquote{Using control genes to
  correct for unwanted variation in microarray data.} \textit{Biostatistics},
  13 3, 539--52.

\bibitem[{Ghassami et~al.(2022)Ghassami, Ying, Shpitser, and
  Tchetgen}]{ghassami2022minimax}
Ghassami, A., Ying, A., Shpitser, I., and Tchetgen, E.~T. (2022),
  \enquote{Minimax Kernel Machine Learning for a Class of Doubly Robust
  Functionals with Application to Proximal Causal Inference,} in
  \textit{International Conference on Artificial Intelligence and Statistics},
  PMLR, pp. 7210--7239.

\bibitem[{Hahn(1998)}]{hahn1998role}
Hahn, J. (1998), \enquote{On the role of the propensity score in efficient
  semiparametric estimation of average treatment effects,}
  \textit{Econometrica}, 315--331.

\bibitem[{Hall and Horowitz(2005)}]{hall2005nonparametric}
Hall, P. and Horowitz, J.~L. (2005), \enquote{Nonparametric methods for
  inference in the presence of instrumental variables,} \textit{The Annals of
  Statistics}, 33, 2904--2929.

\bibitem[{Heckman et~al.(1998)Heckman, Ichimura, Smith, Todd,
  et~al.}]{heckman1998characterizing}
Heckman, J., Ichimura, H., Smith, J., Todd, P., et~al. (1998),
  \enquote{Characterizing Selection Bias Using Experimental Data,}
  \textit{Econometrica}, 66, 1017--1098.

\bibitem[{Hern{\'a}n and Robins(2020)}]{hernan2020causal}
Hern{\'a}n, M.~A. and Robins, J.~M. (2020), \enquote{Causal inference: what
  if,} \textit{Boca Raton: Chapman \& Hill/CRC}, 2020.

\bibitem[{Hirano and Imbens(2001)}]{Hirano2001}
Hirano, K. and Imbens, G.~W. (2001), \enquote{Estimation of Causal Effects
  using Propensity Score Weighting: An Application to Data on Right Heart
  Catheterization,} \textit{Health Services and Outcomes Research Methodology},
  2, 259--278.

\bibitem[{Horowitz(2011)}]{horowitz2011applied}
Horowitz, J.~L. (2011), \enquote{Applied nonparametric instrumental variables
  estimation,} \textit{Econometrica}, 79, 347--394.

\bibitem[{Kallus et~al.(2021)Kallus, Mao, and Uehara}]{kallus2021causal}
Kallus, N., Mao, X., and Uehara, M. (2021), \enquote{Causal inference under
  unmeasured confounding with negative controls: A minimax learning approach,}
  \textit{arXiv preprint arXiv:2103.14029}.

\bibitem[{Kang et~al.(2007)Kang, Schafer, et~al.}]{kang2007demystifying}
Kang, J.~D., Schafer, J.~L., et~al. (2007), \enquote{Demystifying double
  robustness: A comparison of alternative strategies for estimating a
  population mean from incomplete data,} \textit{Statistical Science}, 22,
  523--539.

\bibitem[{Kim(2015)}]{kim2015ppcor}
Kim, S. (2015), \enquote{ppcor: an R package for a fast calculation to
  semi-partial correlation coefficients,} \textit{Communications for
  statistical applications and methods}, 22, 665.

\bibitem[{Kompa et~al.(2022)Kompa, Bellamy, Kolokotrones, Robins, and
  Beam}]{kompa2022deep}
Kompa, B., Bellamy, D.~R., Kolokotrones, T., Robins, J.~M., and Beam, A.~L.
  (2022), \enquote{Deep learning methods for proximal inference via maximum
  moment restriction,} \textit{arXiv preprint arXiv:2205.09824}.

\bibitem[{Kress(1989)}]{kress1989linear}
Kress, R. (1989), \textit{Linear integral equations}, vol.~82, Springer.

\bibitem[{Kuroki and Pearl(2014)}]{kuroki2014measurement}
Kuroki, M. and Pearl, J. (2014), \enquote{Measurement bias and effect
  restoration in causal inference,} \textit{Biometrika}, 101, 423--437.

\bibitem[{Li et~al.(2022)Li, Shi, Miao, and Tchetgen}]{li2022double}
Li, K.~Q., Shi, X., Miao, W., and Tchetgen, E.~T. (2022), \enquote{Double
  Negative Control Inference in Test-Negative Design Studies of Vaccine
  Effectiveness,} \textit{ArXiv}.

\bibitem[{Lipsitch et~al.(2010)Lipsitch, Tchetgen, and
  Cohen}]{lipsitch2010negative}
Lipsitch, M., Tchetgen, E.~T., and Cohen, T. (2010), \enquote{Negative
  controls: a tool for detecting confounding and bias in observational
  studies,} \textit{Epidemiology (Cambridge, Mass.)}, 21, 383.

\bibitem[{Mastouri et~al.(2021)Mastouri, Zhu, Gultchin, Korba, Silva, Kusner,
  Gretton, and Muandet}]{mastouri2021proximal}
Mastouri, A., Zhu, Y., Gultchin, L., Korba, A., Silva, R., Kusner, M., Gretton,
  A., and Muandet, K. (2021), \enquote{Proximal causal learning with kernels:
  Two-stage estimation and moment restriction,} in \textit{International
  Conference on Machine Learning}, PMLR, pp. 7512--7523.

\bibitem[{Miao et~al.(2018)Miao, Geng, and
  Tchetgen~Tchetgen}]{miao2018identifying}
Miao, W., Geng, Z., and Tchetgen~Tchetgen, E.~J. (2018), \enquote{Identifying
  causal effects with proxy variables of an unmeasured confounder,}
  \textit{Biometrika}, 105, 987--993.

\bibitem[{Miao et~al.(2022)Miao, Hu, Ogburn, and Zhou}]{miao2022}
Miao, W., Hu, W., Ogburn, E.~L., and Zhou, X.-H. (2022), \enquote{Identifying
  Effects of Multiple Treatments in the Presence of Unmeasured Confounding,}
  \textit{Journal of the American Statistical Association}, 0, 1--15.

\bibitem[{Miao and Tchetgen~Tchetgen(2018)}]{miao2018confounding}
Miao, W. and Tchetgen~Tchetgen, E. (2018), \enquote{A Confounding Bridge
  Approach for Double Negative Control Inference on Causal Effects (Supplement
  and Sample Codes are included),} \textit{arXiv preprint arXiv:1808.04945}.

\bibitem[{Newey and McFadden(1994)}]{NEWEY19942111}
Newey, W.~K. and McFadden, D. (1994), \enquote{Chapter 36 Large sample
  estimation and hypothesis testing,} Elsevier, vol.~4 of \textit{Handbook of
  Econometrics}, pp. 2111 -- 2245.

\bibitem[{Newey and Powell(2003)}]{newey2003instrumental}
Newey, W.~K. and Powell, J.~L. (2003), \enquote{Instrumental variable
  estimation of nonparametric models,} \textit{Econometrica}, 71, 1565--1578.

\bibitem[{Olson et~al.(2022)Olson, Newhams, Halasa, Price, Boom, Sahni,
  Pannaraj, Irby, Walker, Schwartz, et~al.}]{olson2022effectiveness}
Olson, S.~M., Newhams, M.~M., Halasa, N.~B., Price, A.~M., Boom, J.~A., Sahni,
  L.~C., Pannaraj, P.~S., Irby, K., Walker, T.~C., Schwartz, S.~P., et~al.
  (2022), \enquote{Effectiveness of BNT162b2 vaccine against critical Covid-19
  in adolescents,} \textit{New England Journal of Medicine}.

\bibitem[{Patel et~al.(2020)Patel, Jackson, and
  Ferdinands}]{patel2020postlicensure}
Patel, M.~M., Jackson, M.~L., and Ferdinands, J. (2020), \enquote{Postlicensure
  evaluation of COVID-19 vaccines,} \textit{JAMA}, 324, 1939--1940.

\bibitem[{Robins(1986)}]{robins1986new}
Robins, J. (1986), \enquote{A new approach to causal inference in mortality
  studies with a sustained exposure period—application to control of the
  healthy worker survivor effect,} \textit{Mathematical modelling}, 7,
  1393--1512.

\bibitem[{Robins et~al.(2017)Robins, Li, Mukherjee, Tchetgen, and van~der
  Vaart}]{robins2017minimax}
Robins, J.~M., Li, L., Mukherjee, R., Tchetgen, E.~T., and van~der Vaart, A.
  (2017), \enquote{{Minimax estimation of a functional on a structured
  high-dimensional model},} \textit{The Annals of Statistics}, 45, 1951 --
  1987.

\bibitem[{Robins et~al.(2008)Robins, Li, Tchetgen, and van~der
  Vaart}]{robins2008HOIF}
Robins, J.~M., Li, L., Tchetgen, E.~T., and van~der Vaart, A. (2008),
  \enquote{Higher order influence functions and minimax estimation of nonlinear
  functionals,} \textit{IMS Collections: Probability and Statistics: Essays in
  Honor of David A. Freedman}, 2, 335--421.

\bibitem[{Robins et~al.(1994)Robins, Rotnitzky, and Zhao}]{robinsetal1994}
Robins, J.~M., Rotnitzky, A., and Zhao, L.~P. (1994), \enquote{Estimation of
  Regression Coefficients When Some Regressors are not Always Observed,}
  \textit{Journal of the American Statistical Association}, 89, 846--866.

\bibitem[{Rotnitzky and Robins(1997)}]{rotnitzky1997}
Rotnitzky, A. and Robins, J. (1997), \enquote{Analysis of semi-parametric
  regression models with non-ignorable non-response,} \textit{Statistics in
  Medicine}, 16, 81--102.

\bibitem[{Scharfstein et~al.(1999)Scharfstein, Rotnitzky, and
  Robins}]{Scharfstein1999}
Scharfstein, D.~O., Rotnitzky, A., and Robins, J.~M. (1999), \enquote{Adjusting
  for Nonignorable Drop-Out Using Semiparametric Nonresponse Models,}
  \textit{Journal of the American Statistical Association}, 94, 1096--1120.

\bibitem[{Shi et~al.(2019)Shi, Miao, Nelson, and
  Tchetgen~Tchetgen}]{Shi2019MultiplyRC}
Shi, X., Miao, W., Nelson, J.~C., and Tchetgen~Tchetgen, E.~J. (2019),
  \enquote{Multiply Robust Causal Inference with Double Negative Control
  Adjustment for Categorical Unmeasured Confounding,} \textit{Journal of the
  Royal Statistical Society: Series B (Statistical Methodology)}, to appear.

\bibitem[{Shi et~al.(2020)Shi, Miao, and Tchetgen~Tchetgen}]{shi2020selective}
Shi, X., Miao, W., and Tchetgen~Tchetgen, E. (2020), \enquote{A Selective
  Review of Negative Control Methods in Epidemiology.} \textit{Current
  Epidemiology Reports}.

\bibitem[{Singh(2020)}]{singh2020kernel}
Singh, R. (2020), \enquote{Kernel methods for unobserved confounding: Negative
  controls, proxies, and instruments,} \textit{arXiv preprint
  arXiv:2012.10315}.

\bibitem[{Sofer et~al.(2016)Sofer, Richardson, Colicino, Schwartz, and
  Tchetgen~Tchetgen}]{sofer2016negative}
Sofer, T., Richardson, D.~B., Colicino, E., Schwartz, J., and
  Tchetgen~Tchetgen, E.~J. (2016), \enquote{On negative outcome control of
  unobserved confounding as a generalization of difference-in-differences,}
  \textit{Statistical Science}, 31, 348.

\bibitem[{Stefanski and Boos(2002)}]{Stefanski2002M}
Stefanski, L.~A. and Boos, D.~D. (2002), \enquote{The Calculus of
  M-Estimation,} \textit{The American Statistician}, 56, 29--38.

\bibitem[{Stephens et~al.(2014)Stephens, Tchetgen, and
  De~Gruttola}]{stephens2014locally}
Stephens, A., Tchetgen, E.~T., and De~Gruttola, V. (2014), \enquote{Locally
  efficient estimation of marginal treatment effects when outcomes are
  correlated: is the prize worth the chase?} \textit{The International Journal
  of Biostatistics}, 10, 59--75.

\bibitem[{Tan(2006)}]{tan2006distributional}
Tan, Z. (2006), \enquote{A Distributional Approach for Causal Inference Using
  Propensity Scores,} \textit{Journal of the American Statistical Association},
  101, 1619--1637.

\bibitem[{Tan(2019{\natexlab{a}})}]{tan2019model}
--- (2019{\natexlab{a}}), \enquote{Model-assisted inference for treatment
  effects using regularized calibrated estimation with high-dimensional data,}
  \textit{Annals of Statistics}, to appear.

\bibitem[{Tan(2019{\natexlab{b}})}]{tan2019regularized}
--- (2019{\natexlab{b}}), \enquote{Regularized calibrated estimation of
  propensity scores with model misspecification and high-dimensional data,}
  \textit{Biometrika}, to appear.

\bibitem[{Tchetgen~Tchetgen(2014)}]{tchetgen2013}
Tchetgen~Tchetgen, E. (2014), \enquote{The Control Outcome Calibration Approach
  for Causal Inference With Unobserved Confounding,} \textit{American Journal
  of Epidemiology}, 179, 633--640.

\bibitem[{Tchetgen~Tchetgen et~al.(2020)Tchetgen~Tchetgen, Ying, Cui, Shi, and
  Miao}]{tchetgen2020}
Tchetgen~Tchetgen, E., Ying, A., Cui, Y., Shi, X., and Miao, W. (2020),
  \enquote{An Introduction to Proximal Causal Learning,} \textit{arXiv preprint
  arXiv:2009.10982}.

\bibitem[{Thompson et~al.(2021)Thompson, Stenehjem, Grannis, Ball, Naleway,
  Ong, DeSilva, Natarajan, Bozio, Lewis, et~al.}]{thompson2021effectiveness}
Thompson, M.~G., Stenehjem, E., Grannis, S., Ball, S.~W., Naleway, A.~L., Ong,
  T.~C., DeSilva, M.~B., Natarajan, K., Bozio, C.~H., Lewis, N., et~al. (2021),
  \enquote{Effectiveness of Covid-19 vaccines in ambulatory and inpatient care
  settings,} \textit{New England Journal of Medicine}, 385, 1355--1371.

\bibitem[{Van~der Vaart(1998)}]{vaart_1998}
Van~der Vaart, A.~W. (1998), \textit{Asymptotic Statistics}, Cambridge Series
  in Statistical and Probabilistic Mathematics, Cambridge University Press.

\bibitem[{Vermeulen and Vansteelandt(2015)}]{2015biasreduce}
Vermeulen, K. and Vansteelandt, S. (2015), \enquote{Bias-Reduced Doubly Robust
  Estimation,} \textit{Journal of the American Statistical Association}, 110,
  1024--1036.

\bibitem[{Wang et~al.(2017)Wang, Zhao, Hastie, and Owen}]{wang2017}
Wang, J., Zhao, Q., Hastie, T., and Owen, A.~B. (2017), \enquote{Confounder
  adjustment in multiple hypothesis testing,} \textit{Ann. Statist.}, 45,
  1863--1894.

\bibitem[{White(1982)}]{white1982maximum}
White, H. (1982), \enquote{Maximum likelihood estimation of misspecified
  models,} \textit{Econometrica: Journal of the Econometric Society}, 1--25.

\end{thebibliography}

\end{document}